\theoremstyle{plain}\theoremseparator{.}
\newenvironment{Proof}{\begin{proof}}{\hfill\qed\end{proof}}
\renewcommand{\thelemma}{\arabic{section}.\arabic{lemma}}
\newcommand{\Cat}[1]{\text{\tt{#1}}}			
\newcommand{\cat}[1]{\text{\tt{#1}}}	
\newcommand{\Sets}{\Cat{Set}}				
\newcommand{\SMod}[1]{\text{$\mathbb #1$-\Cat{SMOD}}}	
\newcommand{\Mod}[1]{\text{$\mathbb #1$-\Cat{MOD}}}	
\newcommand{\Vect}[1]{\text{$\mathbb #1$-\Cat{VEC}}}	
\newcommand{\Pca}{\Cat{PCA}}				
\newcommand{\Cone}{\Cat{CONE}}				
\newcommand{\CMon}{\Cat{CMON}}				
\newcommand{\Ab}{\Cat{AB}}				
\newcommand{\CoCat}[1]{{\sf Coalg}({#1})}		
	\newcommand{\after}{\mathrel{\circ}}
\newcommand{\Fun}[1]{\ensuremath{#1}}			
\newcommand{\Gh}{\Fun{\widehat F}}			
\newcommand{\Cub}[1]{\Fun{F_{\,{#1}}}}			
\newcommand{\Out}[1]{{#1}_{\text{\rm o}}}		
\newcommand{\Comp}[2]{{#1}_{#2}}			
\DeclareMathOperator{\tr}{tr}				
\newcommand{\alg}[1]{\mathbb{#1}}          
\newcommand{\Min}[1]{\ensuremath{\mu_{#1}}}		
\DeclareMathOperator{\spn}{span}			
\DeclareMathOperator{\co}{co}				
\newcommand{\Ts}{T_{\mathbb{S}}}
\newcommand{\Te}{T_{\mathbb{E}}}
\newcommand{\EM}[1]{\Cat{Set}^{#1}}
\newcommand{\Dis}{\mathcal{D}}				
\DeclareMathOperator{\id}{id}				
\newcommand{\Deli}{{.\kern3pt}}				
\newenvironment{Nili}{
	\begin{list}{}{\leftmargin=0pt\itemsep=5pt\listparindent=\parindent}}{\end{list}}
\DeclareMathOperator{\supp}{supp}
\begin{document}

\title{Proper Semirings and Proper Convex Functors}

\author{%
	Ana Sokolova\inst{1} 
\and 
	Harald Woracek\inst{2}
}
\authorrunning{Ana Sokolova \and Harald Woracek}

\institute{%
	University of Salzburg, Austria\\
	\email{ana.sokolova@cs.uni-salzburg.at}
\and
	TU Vienna, Austria\\
	\email{harald.woracek@tuwien.ac.at}
}

\maketitle

\begin{abstract}
	Esik and Maletti introduced the notion of a proper semiring and proved that
	some important (classes of) semirings -- Noetherian semirings, natural numbers
	-- are proper. Properness matters as the equivalence problem for weighted
	automata over a semiring which is proper and finitely and effectively presented is
	decidable. 
	Milius generalised the notion of properness from a semiring to a functor. As a
	consequence, a semiring is proper if and only if its associated ``cubic
	functor'' is proper. Moreover, properness of a functor renders soundness and
	completeness proofs for axiomatizations of equivalent behaviour. 

	In this paper we provide a method for proving properness of functors, and
	instantiate it to cover both the known cases and several novel ones: (1)
	properness of the semirings of positive rationals and positive reals, via
	properness of the corresponding cubic functors; and (2) properness of two
	functors on (positive) convex algebras. The latter functors are important for
	axiomatizing trace equivalence of probabilistic transition systems. Our proofs
	rely on results that stretch all the way back to Hilbert and Minkowski.  
	\keywords{proper semirings, proper functors, coalgebra, weighted automata, probabilistic transition systems}
\end{abstract}


%
\section{Introduction}
\label{sec-Intro}
In this paper we deal with algebraic categories and deterministic weighted automata functors on them. 
Such categories are the target of generalized determinization~\cite{silva.bonchi.bonsangue.rutten:2010,silva.sokolova:2011,jacobs.silva.sokolova:2015} and enable coalgebraic modelling beyond sets. For example, non-deterministic automata, weighted, or probabilistic ones are coalgebraically modelled over the categories of join-semilattices, semimodules for a semiring, and convex sets, respectively. Moreover, expressions for axiomatizing behavior semantics often live in algebraic categories. 

In order to prove completeness of such axiomatizations, the common approach~\cite{silva:2010,bonsangue.milius.silva:2011,silva.sokolova:2011} is to prove finality of a certain object in a category of coalgebras over an algebraic category. Proofs are significantly simplified if it suffices to verify finality only w.r.t. coalgebras carried by free finitely generated algebras, as those are the coalgebras that result from generalized determinization.  

In recent work, Milius~\cite{milius:2017} proposed the notion of a proper functor on an algebraic category that provides a sufficient condition for this purpose. This notion is an extension of the notion of a proper semiring introduced by Esik and Maletti~\cite{esik.maletti:2010}: A semiring is proper if and only if its ``cubic" functor is proper. 
A cubic functor is a functor $\mathbb S \times (-)^A$ where $A$ is a finite alphabet and $\mathbb S$ is a free algebra with a single generator in the algebraic category. Cubic functors model deterministic weighted automata which are models of determinizations of non-deterministic and probabilistic transition systems.


Properness is the property that for any two states that are behaviourally equivalent in coalgebras with free finitely generated carriers, there is a zig-zag of homomorphisms (called a chain of simulations in the original works on weighted automata and proper semirings) that identifies the two states and whose nodes are all carried by free finitely generated algebras. 

Even though the notion of properness is relatively new for a semiring and very new for a functor, results on properness of semirings can be found in more distant literature as well. Here is a brief history, to the best of our knowledge:
\begin{itemize}
\item The Boolean semiring was proven to be proper in~\cite{bloom.esik:1993}.
\item Finite commutative ordered semirings were proven to be proper in	~\cite[Theorem 5.1]{esik.kuich:2001}. Interestingly, the proof provides a zig-zag with at most seven intermediate nodes.
\item Any euclidean domain and any skew field were proven proper in \cite[Theorem 3]{beal.lombardy.sakarovitch:2005}. In each case the zig-zag has two intermediate nodes.
\item The semiring of natural numbers $\mathbb N$, the Boolean semiring $\mathbb B$, the ring of integers $\mathbb Z$ and any skew field were proven proper in~\cite[Theorem 1]{beal.lombardy.sakarovitch:2006}. Here, all zig-zag were spans, i.e., had a single intermediate node with outgoing arrows.
\item Noetherian semirings were proven proper in~\cite[Theorem 4.2]{esik.maletti:2010}, commutative rings also in~\cite[Corollary 4.4]{esik.maletti:2010}, and finite semirings as well in~\cite[Corollary 4.5]{esik.maletti:2010}, all with a zig-zag being a span. Moreover, the tropical semiring is not proper, as proven in~\cite[Theorem 5.4]{esik.maletti:2010}.
\end{itemize}

Having properness of a semiring, together with the property of the semiring being finitely and effectively presentable, yields decidability of the equivalence problem (decidability of trace equivalence) for weighted automata. 

In this paper, motivated by the wish to prove properness of a certain functor $\Gh$ on convex algebras used for axiomatizing trace semantics of probabilistic systems in~\cite{silva.sokolova:2011}, as well as by the open questions stated in~\cite[Example~3.19]{milius:2017}, we provide a framework for proving properness. We instantiate this framework on known cases like Noetherian semirings and $\mathbb N$ (with a zig-zag that is a span), and further prove new results of properness: 
\begin{itemize}
\item	The semirings $\mathbb Q_+$  and $\mathbb R_+$ of non-negative rationals and reals, respectively, are proper. 
	The shape of the zig-zag is a span as well. 
\item The  functor $[0,1] \times (-)^A$ on \Pca\ is proper, again the zig-zag being a span.
\item The functor $\Gh$ on $\Pca$ is proper. This proof is the most involved, and interestingly, provides the only case where the zig-zag is not a span: it contains three intermediate nodes of which the middle one forms a span. 
\end{itemize}

Our framework requires a proof of so-called \emph{extension} and
\emph{reduction lemmas} in each case. While the extension lemma is a generic
result that covers all cubic functors of interest, the reduction lemma is in
all cases a nontrivial property intrinsic to the algebras under consideration.
For the semiring of natural numbers it is a consequence of a result that we
trace back to Hilbert; for the case of convex algebra $[0,1]$ the result is due
to Minkowski. In the case of $\Gh$, we use Kakutani's set-valued fixpoint theorem.

It is an interesting question for future work whether these new properness results may lead to new complete axiomatizations of expressions for certain weighted automata. 

The organization of the rest of the paper is as follows. In
Section~\ref{sec-ProperFunctors} we give some basic definitions and introduce
the semirings, the categories, and the functors of interest.
Section~\ref{sec-PropernessCubic} provides the general framework as well as
proofs of properness of the cubic functors.
Section~\ref{sec-SubcubicFunctor}--Section~\ref{sec-PropernessSubcubic} lead us
to properness of $\Gh$ on \Pca. For space reasons, we present the ideas of
proofs and constructions in the main paper and defer all detailed proofs to the
appendix. 

\paragraph{Acknowledgements.} We thank the anonymous reviewers for many valuable comments, in particular for reminding us of a categorical property that shortened the proof of the extension lemma (the proofs of Lemma~\ref{lem:ext-lem-1} and Lemma~\ref{lem:ext-lem-2} in Appendix~\ref{app:B}).

\section{Proper functors}
\label{sec-ProperFunctors}
We start with a brief introduction of the basic notions from algebra and
coalgebra needed in the rest of the paper, as well as the important definition
of proper functors~\cite{milius:2017}. We refer the interested reader to~\cite{Rut00:tcs,JR96:eatcs,jacobs:2017}
for more details. We assume basic knowledge of category theory, see e.g.~\cite{maclane:1998} or Appendix~\ref{sec:app-basics}.

Let $\Cat C$ be a category and $F$ a $\Cat C$-endofunctor.
The category $\CoCat F$ of \emph{$F$-coalgebras} is the category 
having as objects pairs $(X,c)$ where $X$ is an object of $\Cat C$ and 
$c$ is a $\Cat C$-morphism from $X$ to $FX$, and as morphisms $f\colon(X,c)\to(Y,d)$ those $\Cat C$-morphisms from 
$X$ to $Y$ that make the diagram on the right commute. 
 
\begin{wrapfigure}{r}{.1\textwidth}\vspace*{-.80cm}
		{\xymatrix@R=0.7em@C=0.8em{
		X \ar[rr]^f \ar[d]_c && Y \ar[d]^d
		\\
		FX \ar[rr]^{Ff} && FY
	}}\vspace*{-.80cm}
\end{wrapfigure}
All base categories $\Cat C$ in this paper will be \emph{algebraic categories}, i.e., categories $\EM{T}$ of 
Eilenberg-Moore algebras of a finitary monad~\footnote{The notions of monads
and algebraic categories are central to this paper. We recall them in
Appendix~\ref{sec:app-basics} to make the paper accessible to all readers.} in
$\Sets$. Hence, all base categories are concrete with forgetful functor that is
identity on morphisms.  

In such categories behavioural equivalence~\cite{Kurz00:thesis,Wol00:cmcs,Staton11} can be defined as follows.
Let $(X,c)$ and $(Y,d)$ be $F$-coalgebras and let $x\in X$ and $y\in Y$. 
Then $x$ and $y$ are \emph{behaviourally equivalent}, and we write $x\sim y$, if there exists an $F$-coalgebra $(Z,e)$ and 
$\CoCat F$-morphisms $f\colon (X,c)\to(Z,e)$, $g\colon(Y,d)\to(Z,e)$, with $f(x)=g(y)$. 
\[
	\xymatrix@C=40pt{
		(X,c) \ar[r]^f
		& (Z,e) 
		\save[]+<0pt,-13pt>*\txt{$\scriptstyle f(x)=g(y)$}\restore
		& (Y,d) \ar[l]_g 
	}
\]
If there exists a final coalgebra in $\CoCat F$, and all functors considered in this paper will have this property, 
then two elements are behaviourally equivalent if and only if they have the
same image in the final coalgebra. If we have a \emph{zig-zag diagram} in $\CoCat F$
\begin{equation}\label{zig-zag}
	\xymatrix@R=0.7em@C=9pt{
		(X,c) \ar[rd]^{f_1} && (Z_2,e_2) \ar[ld]_{f_2} \ar[rd]^{f_3} && 
		\cdots \ar[ld]_{f_4} \ar[rd]^{f_{2n-1}} && (Y,d) \ar[ld]_{f_{2n}} &
		\\
		& (Z_1,e_1) 
		&& (Z_3,e_1) 
		&& (Z_{2n-1},e_1) 
		&
	}
\end{equation}
which relates $x$ with $y$ in the sense that there exist elements $z_{2k}\in Z_{2k}$, $k=1,\ldots,n-1$, 
with (setting $z_0=x$ and $z_{2n}=y$) 
\[
	f_{2k}(z_{2k})=f_{2k-1}(z_{2k-2}),\quad k=1,\ldots, n
	,
\]
then $x\sim y$. 

We now recall the notion of a proper functor, introduced by Milius~\cite{milius:2017} which is central to this paper. It is very helpful for establishing completeness of regular expressions calculi, 
cf.\ \cite[Corollary~3.17]{milius:2017}. 

\begin{definition}\label{Proper}
	Let $T\colon\Sets\to\Sets$ be a finitary monad with unit $\eta$ and multiplication $\mu$. 
	A $\EM{T}$-endofunctor $F$ is \emph{proper}, if the following statement holds.

	For each pair $(TB_1,c_1)$ and $(TB_2,c_2)$ of $F$-coalgebras with $B_1$ and $B_2$ finite sets, 
	and each two elements $b_1\in B_1$ and $b_2\in B_2$ with $\eta_{B_1}(b_1)\sim\eta_{B_2}(b_2)$, 
	there exists a zig-zag \eqref{zig-zag} in $\CoCat F$ which relates $\eta_{B_1}(b_1)$ with $\eta_{B_2}(b_2)$, 
	and whose nodes $(Z_j,e_j)$ all have free and finitely generated carrier. 
\end{definition}

This notion generalizes the notion of a proper semiring introduced by Esik and Maletti in 
\cite[Definition~3.2]{esik.maletti:2010}, cf.\ \cite[Remark~3.10]{milius:2017}.

\begin{remark}\label{fg}
	In the definition of properness the condition that intermediate nodes have free \emph{and} finitely generated 
	carrier is necessary for nodes with incoming arrows (the nodes $Z_{2k-1}$ in \eqref{zig-zag}). 
	For the intermediate nodes with outgoing arrows ($Z_{2k}$ in \eqref{zig-zag}), it is enough to require 
	that their carrier is finitely generated.
	This follows since every $F$-coalgebra with finitely generated carrier is the image under an $F$-coalgebra morphism 
	of an $F$-coalgebra with free and finitely generated carrier. 

	Moreover, note that zig-zags which start (or end) with incoming arrows instead of outgoing ones, can also be allowed since a zig-zag of this form can be turned into one of the form \eqref{zig-zag} 
	by appending identity maps. 
\end{remark}

\subsection*{Some concrete monads and functors}

We deal with the following base categories.
\begin{itemize}
\item The category $\SMod S$ of semimodules over a semiring $\mathbb S$ induced by the monad $T_{\mathbb S}$ of finitely supported 
	maps into $\mathbb S$, see, e.g., \cite[Example~4.2.5]{manes.mulry:2007}.
\item The category \Pca\ of positively convex algebras induced by the monad of finitely supported 
	subprobability distributions, see, e.g., \cite{doberkat:2006,doberkat:2008} and \cite{pumpluen:1984}.
\end{itemize}
For $n\in\mathbb N$, the free algebra with $n$ generators in $\SMod S$ is the direct product $\mathbb S^n$, and in \Pca\ it is 
the $n$-simplex $\Delta^n=\{(\xi_1,\ldots,\xi_n)\mid \xi_j\geq 0,\sum_{j=1}^n\xi_j\leq 1\}$. 

Concerning semimodule-categories, we mainly deal with the semirings $\mathbb N$, $\mathbb Q_+$, and $\mathbb R_+$, 
and their ring completions $\mathbb Z$, $\mathbb Q$, and $\mathbb R$.
For these semirings the categories of $\mathbb S$-semimodules are 
\begin{itemize}
\item $\CMon$ of commutative monoids for $\mathbb N$,
\item $\Ab$ of abelian groups for $\mathbb Z$,
\item $\Cone$ of convex cones for $\mathbb R_+$,
\item $\Vect Q$ and $\Vect R$ of vector spaces over the field of rational and real numbers, respectively, for 
	$\mathbb Q$ and $\mathbb R$.
\end{itemize}
We consider the following functors, where $A$ is a fixed finite alphabet.
Recall that we use the term \emph{cubic functor} for the functor $T1\times(-)^A$ where $T$ is a monad on $\Sets$. 
We chose the name since $T1\times(-)^A$ assigns to objects $X$ a full direct product, i.e., a full cube. 
\begin{itemize}
\item The \emph{cubic functor} $\Cub{\mathbb S}$ on $\SMod S$, i.e., the functor acting as 
	\begin{align*}
		& \Cub{\mathbb S}X=\mathbb S\times X^A\text{ for }X\text{ object of }\SMod S,
		\\
		& \Cub{\mathbb S}f=\id_{\mathbb S}\times(f\circ -)
		\text{ for }f\colon X\to Y\text{ morphism of }\SMod S.
	\end{align*}
	The underlying $\Sets$ functors of cubic functors are also sometimes called deterministic-automata functors, 
	see e.g.~\cite{jacobs.silva.sokolova:2015}, as their coalgebras are 
	deterministic weighted automata with output in the semiring.
\item The \emph{cubic functor} $\Cub{[0,1]}$ on $\Pca$, i.e., the functor 
	$\Cub{[0,1]}X=[0,1]\times X^A$ and $\Cub{[0,1]}f=\id_{[0,1]}\times(f\circ -)$.
\item A \emph{subcubic convex functor} $\Gh$ on \Pca\ whose action will be introduced in 
	Definition~\ref{Ghat-def}.\footnote{This functor was denoted $\hat G$ in~\cite{silva.sokolova:2011} 
	where it was first studied in the context of axiomatization of trace semantics.}\ 
	The name originates from the fact that $\Gh X$ is a certain convex subset of $\Cub{[0,1]}X$ and that 
	$\Gh f=(\Cub{[0,1]}f)|_{\Gh X}$ for $f\colon X\to Y$.
\end{itemize}
Cubic functors are liftings of $\Sets$-endofunctors, in particular, they preserve surjective algebra homomorphisms. 
It is easy to see that also the functor $\Gh$ preserves surjectivity, cf.\ Lemma~\ref{GhSurj} (Appendix~\ref{app:D})
This property is needed to apply the work of Milius, cf.\ \cite[Assumptions~3.1]{milius:2017}.

\begin{remark}\label{ProperSemiring}
	We can now formulate precisely the connection between proper semirings and proper functors mentioned after Definition~\ref{Proper}. A semiring $\mathbb S$ is proper in the sense of \cite{esik.maletti:2010}, 
	if and only if for every finite input alphabet $A$ the cubic functor $\Cub{\mathbb S}$ on $\SMod S$ is proper. 
\end{remark}

We shall interchangeably think of direct products as sets of functions or as sets of tuples. 
Taking the viewpoint of tuples, the definition of $\Cub{\mathbb S}f$ reads as 
\[
	(\Cub{\mathbb S}f)\big((o,(x_a)_{a\in A})\big)=\big(o,(f(x_a))_{a\in A}\big),\quad 
	o\in\mathbb S,\ x_a\in X\text{ for }a\in A
	.
\]
A coalgebra structure $c\colon X\to\Cub{\mathbb S}X$ writes as 
\[
	c(x)=\big(\Out c(x),(\Comp ca(x))_{a\in A}\big),\quad x\in X
	,
\]
and we use $\Out c:X\to\mathbb S$ and $\Comp ca:X\to X$ as generic notation for the components of the map $c$. 
More generally, we define $c_w\colon X\to X$ for any word $w\in A^*$ inductively as $\Comp c{\varepsilon}=\id_X$ and $\Comp c{wa}=\Comp ca\circ\Comp cw,\ w\in A^*,a\in A$.
	
The map from a coalgebra $(X,c)$ into the final $\Cub{\mathbb S}$-coalgebra, the \emph{trace map}, is then given as 
	$\tr_c(x)=\big((\Out c\circ\Comp cw)(x)\big)_{w\in A^*}$ for $ x\in X$.
Behavioural equivalence for cubic functors is the kernel of the trace map.

\section{Properness of cubic functors}
\label{sec-PropernessCubic}

Our proofs of properness in this section and in Section~\ref{sec-PropernessSubcubic} below start from the following idea. 
Let $\mathbb S$ be a semiring, and assume we are given two $\Cub{\mathbb S}$-coalgebras which have free finitely generated carrier, 
say $(\mathbb S^{n_1},c_1)$ and $(\mathbb S^{n_2},c_2)$. Moreover, assume $x_1\in\mathbb S^{n_1}$ and $x_2\in\mathbb S^{n_2}$ are two 
elements having the same trace. For $j=1,2$, let 
$d_j\colon \mathbb S^{n_1}\times\mathbb S^{n_2}  \to  \Cub{\mathbb S}(\mathbb S^{n_1}\times\mathbb S^{n_2})$ be given by $$d_j(y_1, y_2) = \Big(\Out{c_j}(y_j),((\Comp{c_1}a(y_1),\Comp{c_2}a(y_2)))_{a\in A}\Big).$$

Denoting by $\pi_j\colon \mathbb S^{n_1}\times\mathbb S^{n_2}\to\mathbb S^{n_j}$ the canonical projections, 
both sides of the following diagram separately commute.
\[
\xymatrix@C=15pt@R=10pt@M=7pt{
	\mathbb S^{n_1} \ar[dd]_{c_1}
	&& \mathbb S^{n_1}\times\mathbb S^{n_2} \ar[ll]_{\pi_1} \ar[rr]^{\pi_2} 
		\ar@/_15pt/[dd]_{d_1} \ar@/^15pt/[dd]^{d_2} 
	&& \mathbb S^{n_2} \ar[dd]_{c_2}
	\\
	&& \neq &&
	\\
	\Cub{\mathbb S}\mathbb S^{n_1} 
	&& \Cub{\mathbb S}(\mathbb S^{n_1}\times\mathbb S^{n_2}) \ar[ll]_{\Cub{\mathbb S}\pi_1} \ar[rr]^{\Cub{\mathbb S}\pi_2}
	&& \Cub{\mathbb S}\mathbb S^{n_2}
}
\]
However, in general the maps $d_1$ and $d_2$ do not coincide. 

The next lemma contains a simple observation: 
there exists a subsemimodule $Z$ of $\mathbb S^{n_1}\times\mathbb S^{n_2}$, such that the restrictions of 
$d_1$ and $d_2$ to $Z$ coincide and turn $Z$ into an $\Cub{\mathbb S}$-coalgebra.

\begin{lemma}\label{Prod}
	Let $Z$ be the subsemimodule of $\mathbb S^{n_1}\times\mathbb S^{n_2}$ generated by the pairs $(\Comp{c_1}w(x_1),\Comp{c_2}w(x_2))$ for $w \in A^*$. 
	Then $d_1|_Z=d_2|_Z$ and $d_j(Z)\subseteq\Cub{\mathbb S}(Z)$.
\end{lemma}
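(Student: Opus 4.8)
The plan is to verify directly that the generators of $Z$ behave coherently under both $d_1$ and $d_2$, and then propagate this to all of $Z$ by the universal property of generation. Concretely, I would first check that on each generator $g_w = (\Comp{c_1}w(x_1),\Comp{c_2}w(x_2))$, $w\in A^*$, the two maps agree: the $a$-component of $d_j(g_w)$ is by definition $(\Comp{c_1}a(\Comp{c_1}w(x_1)),\Comp{c_2}a(\Comp{c_2}w(x_2))) = g_{wa}$, which does not depend on $j$, so the only thing to worry about is the output coordinate. There we need $\Out{c_1}(\Comp{c_1}w(x_1)) = \Out{c_2}(\Comp{c_2}w(x_2))$, and this is exactly the hypothesis that $x_1$ and $x_2$ have the same trace, since $\tr_{c_j}(x_j) = ((\Out{c_j}\circ\Comp{c_j}w)(x_j))_{w\in A^*}$. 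Hence $d_1(g_w) = d_2(g_w)$ and moreover $d_j(g_w)\in \Cub{\mathbb S}(Z)$ because its output lies in $\mathbb S$ and each $a$-component is the generator $g_{wa}\in Z$.

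Next I would extend from generators to all of $Z$. Since $Z$ is the subsemimodule generated by $\{g_w \mid w\in A^*\}$, every element of $Z$ is a finite $\mathbb S$-linear combination of the $g_w$. Both $d_1$ and $d_2$ are morphisms of semimodules (the output part is $\Out{c_j}\circ\pi_j$, a composite of semimodule maps, and the $a$-component part is $(\Comp{c_1}a\times\Comp{c_2}a)$, also a semimodule map, so $d_j$ is a semimodule map into $\Cub{\mathbb S}(\mathbb S^{n_1}\times\mathbb S^{n_2}) = \mathbb S\times(\mathbb S^{n_1}\times\mathbb S^{n_2})^A$, which carries the product semimodule structure). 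Two semimodule homomorphisms that agree on a generating set agree everywhere, so $d_1|_Z = d_2|_Z$. For the second claim, $\Cub{\mathbb S}(Z) = \mathbb S\times Z^A$ is a subsemimodule of $\Cub{\mathbb S}(\mathbb S^{n_1}\times\mathbb S^{n_2})$ (as $Z$ is a subsemimodule, $Z^A$ is one too, and the product with $\mathbb S$ stays a subsemimodule); since $d_j$ maps each generator into this subsemimodule and $d_j$ is linear, $d_j(Z)\subseteq\Cub{\mathbb S}(Z)$.

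I do not expect a genuine obstacle here — the statement is, as the authors say, ``a simple observation''. The only point requiring a little care is making sure the semimodule structure on $\Cub{\mathbb S}(\mathbb S^{n_1}\times\mathbb S^{n_2})$ used to run the ``agree on generators implies agree everywhere'' argument is the right one, namely the pointwise/product structure on $\mathbb S\times(\mathbb S^{n_1}\times\mathbb S^{n_2})^A$ coming from the functor $\Cub{\mathbb S}$ being a lifting; once that is pinned down, linearity of $d_1,d_2$ is immediate from the definitions of $\Out{c_j}$, $\Comp{c_j}a$, and the projections, all of which are $\SMod S$-morphisms. So the proof is really just: (i) check output agreement on generators via equal traces; (ii) observe the $A$-components of $d_j$ on a generator are again generators; (iii) invoke linearity and closure of subsemimodules to conclude both equalities on all of $Z$.
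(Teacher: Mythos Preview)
Your proposal is correct and follows essentially the same approach as the paper: check output agreement on generators via the equal-trace hypothesis, observe that the $a$-components send a generator $g_w$ to the generator $g_{wa}$, and then extend to all of $Z$ by linearity. The paper's proof is simply a terser version of exactly this, leaving the linearity/subsemimodule-closure step implicit where you spell it out.
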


The significance of Lemma~\ref{Prod} in the present context is that it leads to the diagram (we denote $d=d_j|_Z$)
\[
\xymatrix@C=15pt@R=35pt@M=7pt{
	\mathbb S^{n_1} \ar[d]_{c_1}
	&& Z \ar[ll]_{\pi_1} \ar[rr]^{\pi_2} \ar[d]^d 
		\save[]+<3pt,10pt>*\txt{\begin{rotate}{90}$\subseteq$\end{rotate}}\restore
		\save[]+<2pt,27pt>*\txt{$\mathbb S^{n_1}\!\!\times\mathbb S^{n_2}$}\restore
	&& \mathbb S^{n_2} \ar[d]_{c_2}
	\\
	\Cub{\mathbb S}\mathbb S^{n_1} 
	&& \Cub{\mathbb S}Z \ar[ll]_{\Cub{\mathbb S}\pi_1} \ar[rr]^{\Cub{\mathbb S}\pi_2}
		\save[]+<-3pt,-10pt>*\txt{\begin{rotate}{-90}$\subseteq$\end{rotate}}\restore
		\save[]+<2pt,-29pt>*\txt{$\mathbb S\!\!\times(\mathbb S^{n_1}\!\!\times\mathbb S^{n_2})^A$}\restore
	&& \Cub{\mathbb S}\mathbb S^{n_2}
}
\]
In other words, it leads to the zig-zag in \CoCat{$\Cub{\mathbb S}$}
\begin{equation}\label{ZZ}
\xymatrix@C=40pt{
	(\mathbb S^{n_1},c_1) & (Z,d) \ar[l]_{\mkern20mu\pi_1} \ar[r]^{\pi_2\mkern20mu} 
	& (\mathbb S^{n_2},c_2) 
}
\end{equation}
This zig-zag relates $x_1$ with $x_2$ since $(x_1,x_2)\in Z$. 
If it can be shown that $Z$ is always finitely generated, it will follow that $\Cub{\mathbb S}$ is proper. 

Let $\mathbb S$ be a Noetherian semiring, i.e., a semiring such that every $\mathbb S$-subsemimodule of some finitely generated 
$\mathbb S$-semimodule is itself finitely generated. Then $Z$ is, as an $\mathbb S$-subsemimodule of 
$\mathbb S^{n_1}\times\mathbb S^{n_2}$ finitely generated. Hence, we reobtain the result 
\cite[Theorem~4.2]{esik.maletti:2010} of Esik and Maletti.

\begin{corollary}[Esik--Maletti 2010]\label{Noetherian}
	Every Noetherian semiring is proper.
\end{corollary}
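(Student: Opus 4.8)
The plan is to invoke the zig-zag construction already set up in the excerpt and observe that in the Noetherian case no further work is needed. Concretely, suppose $\mathbb S$ is a Noetherian semiring and fix a finite alphabet $A$; we must show the cubic functor $\Cub{\mathbb S}$ on $\SMod S$ is proper in the sense of Definition~\ref{Proper}. So take two $\Cub{\mathbb S}$-coalgebras with free finitely generated carriers, which we may write as $(\mathbb S^{n_1},c_1)$ and $(\mathbb S^{n_2},c_2)$, and elements $x_1\in\mathbb S^{n_1}$, $x_2\in\mathbb S^{n_2}$ with $x_1\sim x_2$. By the remark preceding this corollary, behavioural equivalence for cubic functors is the kernel of the trace map, so $x_1$ and $x_2$ have the same trace, and we are exactly in the situation analysed before Lemma~\ref{Prod}.

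Next I would apply Lemma~\ref{Prod}: letting $Z$ be the subsemimodule of $\mathbb S^{n_1}\times\mathbb S^{n_2}$ generated by the pairs $(\Comp{c_1}w(x_1),\Comp{c_2}w(x_2))$ for $w\in A^*$, the lemma gives $d_1|_Z=d_2|_Z=:d$ and $d(Z)\subseteq\Cub{\mathbb S}Z$, so $(Z,d)$ is a $\Cub{\mathbb S}$-coalgebra and the projections $\pi_1,\pi_2$ are $\Cub{\mathbb S}$-coalgebra morphisms, yielding the zig-zag \eqref{ZZ}. Since $(x_1,x_2)\in Z$ (take $w=\varepsilon$), this span relates $x_1$ with $x_2$. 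It remains only to check that the intermediate node $(Z,d)$ has free and finitely generated carrier. Finite generation is immediate from the Noetherian hypothesis: $Z$ is an $\mathbb S$-subsemimodule of the finitely generated semimodule $\mathbb S^{n_1}\times\mathbb S^{n_2}$, hence finitely generated. Freeness, however, need not hold for $Z$ itself, but this is harmless: by Remark~\ref{fg} it suffices for the node with outgoing arrows (which $Z$ is, in the span \eqref{ZZ}) to be finitely generated, and one may then precompose with a coalgebra morphism from a free finitely generated carrier if a strictly free intermediate node is desired.

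The upshot is that the entire argument is already assembled in the preceding paragraphs of the paper; the corollary is a direct instantiation. There is no real obstacle here — the only point that needs a moment's care is the freeness-versus-finite-generation distinction for the apex of the span, which is precisely what Remark~\ref{fg} was stated to handle. I would therefore present the proof as a short assembly: cite the setup before Lemma~\ref{Prod} to place ourselves in the right configuration, apply Lemma~\ref{Prod} to get \eqref{ZZ}, use the Noetherian property to finitely generate $Z$, and invoke Remark~\ref{fg} to conclude that \eqref{ZZ} is an admissible zig-zag witnessing properness, thus recovering \cite[Theorem~4.2]{esik.maletti:2010}.
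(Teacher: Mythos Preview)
Your proposal is correct and follows essentially the same route as the paper: use Lemma~\ref{Prod} to obtain the span \eqref{ZZ}, observe that $Z$ is finitely generated as a subsemimodule of $\mathbb S^{n_1}\times\mathbb S^{n_2}$ by the Noetherian hypothesis, and conclude properness (with Remark~\ref{fg} handling the freeness issue for the outgoing node). The paper's own proof is terser but identical in substance, additionally citing Remark~\ref{ProperSemiring} to reduce properness of the semiring to properness of $\Cub{\mathbb S}$ for each finite $A$.
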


Our first main result is Theorem~\ref{CubProp} below, where we show properness of the cubic functors 
$\Cub{\mathbb S}$ on \SMod{S}, for $\mathbb S$ being one of the semirings $\mathbb N$, $\mathbb Q_+$, $\mathbb R_+$, 
and of the cubic functor $\Cub{[0,1]}$ on \Pca. 
The case of $\Cub{\mathbb N}$ is known from 
\cite[Theorem~4]{beal.lombardy.sakarovitch:2006}~\footnote{In~\cite{beal.lombardy.sakarovitch:2006} only a 
sketch of the proof is given, cf.\ \cite[\S3.3]{beal.lombardy.sakarovitch:2006}. In this sketch one important point is not mentioned. 
Using the terminology of \cite[\S3.3]{beal.lombardy.sakarovitch:2006}: 
it could a priori be possible that the size of the vectors in $G$ and the size of $G$ both oscillate.}, 
the case of $\Cub{[0,1]}$ is stated as an open problem in \cite[Example~3.19]{milius:2017}.

\begin{theorem}\label{CubProp}
	The cubic functors $\Cub{\mathbb N}$, $\Cub{\mathbb Q_+}$, $\Cub{\mathbb R_+}$, and $\Cub{[0,1]}$ are proper. 

	In fact, for any two coalgebras with free finitely generated carrier and any two elements having 
	the same trace, a zig-zag with free and finitely generated nodes relating those elements can be found, 
	which is a span (has a single intermediate node with outgoing arrows).
\end{theorem}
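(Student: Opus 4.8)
The plan is to use the zig-zag~\eqref{ZZ} produced by Lemma~\ref{Prod} and to show that in each of the four cases the subsemimodule $Z\subseteq\mathbb S^{n_1}\times\mathbb S^{n_2}$ — generated by the vectors $v_w=(\Comp{c_1}w(x_1),\Comp{c_2}w(x_2))$, $w\in A^*$ — is finitely generated. Since $Z$ is finitely generated, by Remark~\ref{fg} the span~\eqref{ZZ} can be completed to a span whose apex has free and finitely generated carrier, which is exactly what is required; the shape is a span with a single intermediate node carrying outgoing arrows. So the whole theorem reduces to a single finiteness statement, to be proved separately for $\mathbb N$, $\mathbb Q_+$, $\mathbb R_+$, and $[0,1]$.

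First I would isolate the structural features that make $Z$ finitely generated. The key observation is that the generating set $\{v_w\mid w\in A^*\}$, although infinite, is produced by iterating finitely many ``successor'' maps: $v_{wa}$ is obtained from the tuple $(v_{wa'})_{a'}$ only through the coalgebra actions, and crucially, each generator $v_w$ already lies in the submodule generated by those $v_{w'}$ with $w'$ a proper prefix — no, more precisely, the relevant fact is a \emph{boundedness} property. For $\mathbb N$ one uses a Dickson/Hilbert-basis type argument: the entries of the vectors $\Comp{c_j}w(x_j)$ live in $\mathbb N^{n_j}$, and the set of all $v_w$ that are \emph{minimal} with respect to the componentwise order is finite (this is the result the authors attribute to Hilbert); every other $v_w$ dominates a minimal one, and one then checks that dominance together with the trace-equality hypothesis forces the dominated vector to actually lie in the submodule generated by the minimal ones. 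For $\mathbb Q_+$ and $\mathbb R_+$ one cannot order-argue directly, so the plan is to pass to the generated \emph{convex cone}: the $v_w$ all have the same trace-induced linear constraints, so they lie in a finitely generated polyhedral cone (here the ``Minkowski'' ingredient — a finitely generated cone equals a polyhedral cone), and within such a cone a set of vectors whose coordinates are controlled by the automaton data generates a finitely generated subsemimodule. For $[0,1]\subseteq\Pca$ one works inside the simplices $\Delta^{n_j}$, which are bounded, and uses that the convex hull of the $v_w$ is a polytope (Minkowski again: a bounded polyhedron is the convex hull of finitely many points), so finitely many of the $v_w$ suffice.

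The main obstacle I expect is the $\mathbb Q_+$ and $\mathbb R_+$ cases, and the precise link between ``finitely generated as a cone / convex set'' and ``finitely generated as a semimodule over $\mathbb S$.'' For $\mathbb N$ the Hilbert/Dickson argument is classical and the subtlety is only bookkeeping — making sure that a $v_w$ which dominates a minimal generator is genuinely an $\mathbb N$-combination of smaller generators, which is where the shared-trace hypothesis must be invoked (and which is exactly the ``one important point not mentioned'' in the footnote about~\cite{beal.lombardy.sakarovitch:2006}, namely ruling out that both the size of the generating set and the size of its vectors keep oscillating). For $\mathbb R_+$ one must additionally argue that although the generated cone is finitely generated, one can pick the finitely many generators \emph{from among the $v_w$ themselves} — otherwise the resulting $Z$ need not contain $(x_1,x_2)$ in the span, and worse, a finitely generated cone need not be a finitely generated semimodule unless we are careful. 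The clean way around this, which I would adopt, is to prove the general \emph{reduction lemma}: in each of these semirings, whenever $M$ is a subsemimodule of $\mathbb S^n$ generated by a set $S$ that is ``reachable'' in the automaton-theoretic sense (closed under the relevant operations up to the trace constraints), $M$ is finitely generated by a finite subset of $S$. The extension lemma (the generic half, valid for all these cubic functors) ensures the zig-zag is well-typed; the reduction lemma is the case-specific finiteness input, and that is where Hilbert and Minkowski enter.
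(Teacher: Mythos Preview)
Your plan departs from the paper's architecture in a way that leaves a genuine gap. You propose to show directly that the $\mathbb S$-subsemimodule $Z_{\mathbb S}\subseteq\mathbb S^{n_1}\times\mathbb S^{n_2}$ generated by the $v_w$ is finitely generated. The paper does \emph{not} do this, and your sketches for the individual cases do not go through as stated. Instead the paper uses a different, typically larger, intermediate node. Its scheme is: (1)~extend scalars via the Extension Lemma (Proposition~\ref{prop:ext-lem}) to the Noetherian ring completion $\mathbb E$ ($\mathbb Z$, $\mathbb Q$, $\mathbb R$, $\mathbb R$ respectively), getting $\Cub{\mathbb E}$-coalgebras $(\mathbb E^{n_j},\tilde c_j)$; (2)~let $Z$ be the $\mathbb E$-submodule of $\mathbb E^{n_1}\times\mathbb E^{n_2}$ generated by the $v_w$---automatically finitely generated since $\mathbb E$ is Noetherian---and verify via Lemma~\ref{CubProperties} that $Z\cap(X_1\times X_2)$ is an $\Cub{\mathbb S}$-subcoalgebra, giving the span~\eqref{4}; (3)~prove the Reduction Lemma (Proposition~\ref{RedLem}): for \emph{any} $\mathbb E$-submodule $Z$, the intersection $Z\cap(X_1\times X_2)$ is finitely generated in \Cat C. The reduction lemma is purely structural---a linear subspace intersected with an orthant is polyhedral, hence finitely generated by Minkowski; a subgroup of $\mathbb Z^n$ intersected with $\mathbb N^n$ is of the form $\{x:xW\geq 0\}$, finitely generated as a monoid by Hilbert---and makes no reference to the $v_w$ or the dynamics.

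Your direct route stalls at exactly the point the paper's detour through $\mathbb E$ is designed to avoid. For $\mathbb N$: Dickson gives finitely many minimal $v_w$, but dominance $v_w\geq v_{w'}$ does not put $v_w$ into the $\mathbb N$-span of the minimal ones, because $v_w-v_{w'}$ need not lie in $Z_{\mathbb N}$; the paper's self-contained proof of the reduction lemma uses precisely that in $Z_{\mathbb Z}\cap\mathbb N^m$ one \emph{can} subtract and stay inside, which is unavailable to you. Your appeal to ``the shared-trace hypothesis'' at this step is a placeholder, not an argument. For $\mathbb R_+$: the $\mathbb R_+$-cone generated by the $v_w$ is not known to be polyhedral, and Minkowski's equivalence is useless unless one side is already established. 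Your observation that the $v_w$ ``lie in a finitely generated polyhedral cone'' is correct---that cone is precisely the paper's $Z_{\mathbb R}\cap\mathbb R_+^{n_1+n_2}$---but lying inside a polyhedral cone says nothing about the cone one \emph{generates}; and if you switch to the larger cone as the span's apex you must show it carries an $\Cub{\mathbb R_+}$-coalgebra structure, which is exactly what the extension step plus Lemma~\ref{CubProperties} deliver. Finally, your gloss of the extension lemma as ``ensuring the zig-zag is well-typed'' undersells it: the passage to $\mathbb E$ is what makes $Z$ finitely generated for free and what gives the reduction lemma its clean polyhedral/Hilbert form, with no dynamical hypotheses needed.
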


The proof proceeds via relating to the Noetherian case.
It always follows the same scheme, which we now outline.
Observe that the ring completion of each of $\mathbb N$, $\mathbb Q_+$, $\mathbb R_+$, is Noetherian 
(for the last two it actually is a field), and that $[0,1]$ is the positive part of the unit ball in $\mathbb R$. 

\begin{Nili}
\item \textit{Step 1. The extension lemma:}\ 
	We use an extension of scalars process to pass from the given category \Cat{C} to an associated category 
	\Mod E\ with a Noetherian ring $\mathbb E$. This is a general categorical argument. 

	To unify notation, we agree that $\mathbb S$ may also take the value $[0,1]$, and that $T_{[0,1]}$ is the 
	monad of finitely supported subprobability distributions giving rise to the category \Pca. 
	\begin{center}
	\begin{tabular}{|@{\ }l@{\ }||@{\ }l@{\ }|@{\ }l@{\ }|@{\ }l@{\ }|@{\ }l@{\ }|}
		\hline
		$\mathbb S$ \raisebox{-6pt}{\rule{0pt}{17pt}} & $\mathbb N$ & $\mathbb Q_+$ & $\mathbb R_+$ & $[0,1]$
		\\ \hline
		\Cat{C} \raisebox{-6pt}{\rule{0pt}{17pt}} & \SMod N\ (\CMon) & \SMod{Q_+} & \SMod{R_+} (\Cone) & \Pca
		\\ \hline
		\Mod E \raisebox{-6pt}{\rule{0pt}{17pt}} & \Mod Z (\Ab) & \Mod Q (\Vect Q) & \Mod R (\Vect R) & \Mod R (\Vect R)
		\\ \hline
	\end{tabular}
	\end{center}
For the formulation of the extension lemma, recall that the starting
category \Cat{C} is the Eilenberg-Moore category of the monad $\Ts$ and the
target category \Mod E\ is the  Eilenberg-Moore category of $\Te$. We write
$\eta_{\mathbb S}$ and $\mu_{\mathbb S}$ for the unit and multiplication of
$\Ts$ and analogously for $\Te$. We have $\Ts \le \Te$, via the inclusion monad morphism
$\iota\colon \Ts \Rightarrow \Te$ given by $\iota_X(u) = u$, as 
$\eta_{\mathbb E} = \iota \after \eta_{\mathbb S}$ and $\mu_{\mathbb E} \after \iota\iota =\iota\after \mu_{\mathbb S}$ 
where $\iota\iota \stackrel{\text{def}}{=}\Te\iota\after \iota \stackrel{\text{nat.}}{=} \iota \after \Ts\iota$. Recall that a monad morphism $\iota\colon T_\mathbb S \to T_\mathbb E$ defines a functor $M_\iota\colon \EM{T_\mathbb E} \to \EM{T_\mathbb S}$ which maps a $T_\mathbb E$-algebra $(X, \alpha_X)$ to $(X, \iota_X\after \alpha_X)$ and is identity on morphisms. Obviously, $M_\iota$ commutes with the forgetful functors $U_\mathbb S: \EM{T_\mathbb S} \to \Sets$ and $U_\mathbb E: \EM{T_\mathbb E} \to \Sets$, i.e., $U_\mathbb S \after M_\iota = U_\mathbb E$.

\begin{definition}
	Let $(X, \alpha_X) \in \EM{\Ts}$ and $(Y, \alpha_Y) \in \EM{\Te}$ 
	where $\Ts$ and $\Te$ are monads with $\Ts \le \Te$ via $\iota\colon \Ts \Rightarrow \Te$. 
	A $\Sets$-arrow $h\colon X \to Y$ is a $\Ts \le \Te$-homomorphism from $(X,\alpha_X)$ to $(Y,\alpha_Y)$ 
	if and only if the following diagram commutes (in $\Sets$)
	$$\xymatrix@R=0.7em{
	\Ts X \ar[rr]^{\iota h} \ar[d]_{\alpha_X}&& \Te Y \ar[d]^{\alpha_Y}\\
	X \ar[rr]^{h}&& Y
	}$$
	where $\iota h$ denotes the map 
	$\iota h \stackrel{\text{def}}{=} \Te h\after \iota_X \stackrel{\text{nat.}}{=} \iota_Y \after \Ts h$.	
	In other words, a $\Ts \le \Te$-homomorphism from $(X,\alpha_X)$ to $(Y,\alpha_Y)$ is a morphism in $\EM{T_\mathbb S}$ from $(X,\alpha_X)$ to $M(Y,\alpha_Y)$.  
\end{definition}

Now we can formulate the extension lemma.  

\begin{proposition}[Extension Lemma]\label{prop:ext-lem}
	For every $\Cub{\mathbb{S}}$-coalgebra $\Ts B \stackrel{c}{\to} \Cub{\mathbb{S}}(\Ts B)$ with free finitely generated carrier 
	$\Ts B$ for a finite set $B$, there exists an $\Cub{\mathbb{E}}$-coalgebra 
	$\Te B \stackrel{\tilde c}{\to} \Cub{\mathbb{E}}(\Te B)$ with free finitely generated carrier $\Te B$ such that 
	$$
		\xymatrix@R=0.7em{
			\Ts B \ar[rr]^{\iota_B} \ar[d]_{c}&& \Te B \ar[d]^{\tilde c}\\
			\Cub{\mathbb{S}}(\Ts B) \ar[rr]^{\iota_1 \times (\iota_B)^A}&& \Cub{\mathbb{E}}(\Te B)
		}
	$$
	where the horizontal arrows ($\iota_B$ and $\iota_1 \times \iota_B^A$) are $\Ts \le \Te$-homomorphisms, 
	and moreover they both amount to inclusion.
\end{proposition}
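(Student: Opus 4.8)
The plan is to define $\tilde c$ on generators and extend freely, exploiting that $\Te B$ is the free $\Te$-algebra on $B$. Concretely, for each $b\in B$ write $c(b) = (\Out c(b), (\Comp ca(b))_{a\in A}) \in \mathbb S \times (\Ts B)^A$. Since $\mathbb S \subseteq \mathbb E$ and $\Ts B \subseteq \Te B$ (via $\iota_1$ and $\iota_B$, which by the remarks preceding the statement are literal inclusions), the tuple $c(b)$ also lives in $\mathbb E \times (\Te B)^A = \Cub{\mathbb E}(\Te B)$. So I get a $\Sets$-map $c_0\colon B \to \Cub{\mathbb E}(\Te B)$, namely $c_0 = (\iota_1 \times \iota_B^A) \after c \after \eta_{\mathbb S,B}$ (using that $\eta_{\mathbb S,B}\colon B \to \Ts B$ picks out the generators). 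By freeness of $\Te B$ there is a unique $\EM{\Te}$-morphism $\tilde c\colon \Te B \to \Cub{\mathbb E}(\Te B)$ with $\tilde c \after \eta_{\mathbb E,B} = c_0$; here I need that $\Cub{\mathbb E}(\Te B)$ carries an $\EM{\Te}$-algebra structure, which it does because $\Cub{\mathbb E}$ is an $\EM{\Te}$-endofunctor (it is a lifting of a $\Sets$-functor, as noted in the text). This defines the desired $\Cub{\mathbb E}$-coalgebra with free finitely generated carrier.

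Next I must verify the square commutes, i.e.\ $\tilde c \after \iota_B = (\iota_1 \times \iota_B^A) \after c$ as $\Sets$-maps $\Ts B \to \Cub{\mathbb E}(\Te B)$. Both composites are $\Ts\le\Te$-homomorphisms out of $\Ts B$ — more precisely, $\Sets$-maps that are $\EM{\Ts}$-morphisms into the respective $M_\iota$-reducts — so by freeness of $\Ts B$ (as a $\Ts$-algebra) it suffices to check they agree after precomposing with $\eta_{\mathbb S,B}$, i.e.\ on the generators $b\in B$. On a generator, $\tilde c(\iota_B(\eta_{\mathbb S,B}(b))) = \tilde c(\eta_{\mathbb E,B}(b)) = c_0(b) = (\iota_1\times\iota_B^A)(c(b))$, using $\iota_B \after \eta_{\mathbb S,B} = \eta_{\mathbb E,B}$ (the compatibility $\eta_{\mathbb E} = \iota \after \eta_{\mathbb S}$ recorded before the statement). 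That is exactly the right-hand side evaluated at $\eta_{\mathbb S,B}(b)$, so the square commutes. The claim that $\iota_1\times\iota_B^A$ is a $\Ts\le\Te$-homomorphism is the statement that $\Cub{\mathbb S}(\Ts B)$ with its $\Ts$-structure maps into $M_\iota\Cub{\mathbb E}(\Te B)$ compatibly; this is a formal consequence of $\Cub{(-)}$ being a lifting and $\iota$ a monad morphism, and the "both amount to inclusion" clause is immediate from $\iota_1,\iota_B$ being inclusions.

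The only genuinely delicate point is the bookkeeping around the two reduct/lifting structures: I must be careful that $\Cub{\mathbb E}(\Te B)$ really is an Eilenberg--Moore algebra for $\Te$ (so that freeness of $\Te B$ applies and produces an $\EM{\Te}$-morphism), while simultaneously its $\Te$-structure, restricted along $\iota$, makes $(\iota_1\times\iota_B^A)$ an $\EM{\Ts}$-morphism — and that the universal property of $\Ts B$ I invoke in the second paragraph is the one for $\EM{\Ts}$-morphisms into $M_\iota$-reducts, not for $\EM{\Te}$-morphisms. Once these structures are pinned down, everything reduces to checking equality of $\Sets$-maps on generators, which is the routine computation sketched above. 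I expect the main obstacle is purely notational: keeping the monad morphism $\iota$, the induced functor $M_\iota$, and the two freeness principles straight, which is presumably why the authors thank a reviewer for a categorical simplification. A cleaner packaging is to observe that $\iota$ induces a natural transformation relating $\Cub{\mathbb S} \after U_{\mathbb S}$-type data, so that $c$ composed with $\iota$ at the coalgebra level automatically factors through the free $\Te$-algebra; but the generator-wise argument above is self-contained and suffices.
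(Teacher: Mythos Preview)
Your proposal is correct and matches the paper's approach essentially step for step: the paper factors the argument into two abstract lemmas, one showing that $\iota_1\times\iota_B^A$ is a $\Ts\le\Te$-homomorphism (via the observation that $M_\iota$ preserves products), and one producing $\tilde c$ as the free $\Te$-extension of $(\iota_1\times\iota_B^A)\after c\after\eta_{\mathbb S,B}$ and verifying the square by comparing two $\EM{\Ts}$-morphisms out of $\Ts B$ on generators---exactly your $c_0$, your $\tilde c$, and your generator check. The only point you leave slightly implicit is the actual reason $\iota_1\times\iota_B^A$ is a $\Ts\le\Te$-homomorphism; the paper's clean argument is that $M_\iota$ commutes with the forgetful functors and hence preserves finite products, so $M_\iota(\Te 1\times(\Te B)^A)=M_\iota(\Te 1)\times M_\iota(\Te B)^A$ and the product of $\Ts$-homomorphisms $\iota_1,\iota_B$ does the job.
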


\item \textit{Step 2. The basic diagram:}\ 
	Let $n_1,n_2\in\mathbb N$, let $B_j$ be the $n_j$-element set consisting of the canonical basis vectors of $\mathbb E^{n_j}$, 
	and set $X_j=T_{\mathbb S}B_j$.  
	Assume we are given $\Cub{\mathbb S}$-coalgebras $(X_1,c_1)$ and $(X_2,c_2)$, and elements $x_j\in X_j$ 
	with $\tr_{c_1}x_1=\tr_{c_2}x_2$. 

	The extension lemma provides $\Cub{\mathbb E}$-coalgebras 
	$(\mathbb E^{n_j},\tilde c_j)$ with $\tilde c_j|_{X_j}=c_j$. Clearly, $\tr_{\tilde c_1}x_1=\tr_{\tilde c_2}x_2$.
	Using the zig-zag diagram \eqref{ZZ} in \CoCat{$F_{\mathbb E}$} and appending inclusion maps, 
	we obtain what we call the \emph{basic diagram}.
	In this diagram all solid arrows are arrows in \Mod E, and all dotted arrows are arrows in \Cat{C}. 
	The horizontal dotted arrows denote the inclusion maps, and $\pi_j$ are the restrictions to $Z$ 
	of the canonical projections. 
	\[
	\xymatrix@C=15pt@R=35pt@M=7pt{
		X_1 \ar@{.>}[r] \ar@{.>}[d]_{c_1} & \mathbb E^{n_1} \ar[d]_{\tilde c_1}
		&& Z \ar[ll]_{\pi_1} \ar[rr]^{\pi_2} \ar[d]^d 
			\save[]+<3pt,10pt>*\txt{\begin{rotate}{90}$\subseteq$\end{rotate}}\restore
			\save[]+<2pt,27pt>*\txt{$\mathbb E^{n_1}\!\!\times\mathbb E^{n_2}$}\restore
		&& \mathbb E^{n_2} \ar[d]_{\tilde c_2} & X_2 \ar@{.>}[l] \ar@{.>}[d]^{c_2}
		\\
		\Cub{\mathbb S}X_1 \ar@{.>}[r] & \Cub{\mathbb E}\mathbb E^{n_1} 
		&& \Cub{\mathbb E}Z \ar[ll]_{\Cub{\mathbb E}\pi_1} \ar[rr]^{\Cub{\mathbb E}\pi_2}
			\save[]+<-3pt,-10pt>*\txt{\begin{rotate}{-90}$\subseteq$\end{rotate}}\restore
			\save[]+<2pt,-29pt>*\txt{$\mathbb E\!\!\times(\mathbb E^{n_1}\!\!\times\mathbb E^{n_2})^A$}\restore
		&& \Cub{\mathbb E}\mathbb E^{n_2} & \Cub{\mathbb S}X_2 \ar@{.>}[l]
	}
	\]
	Commutativity of this diagram yields 
	$d\big(\pi_j^{-1}(X_j)\big)\subseteq(\Cub{\mathbb E}\pi_j)^{-1}\big(\Cub{\mathbb S}X_j)$ for $j=1,2$.
	Now we observe the following properties of cubic functors.

	\begin{lemma}\label{CubProperties}
		We have $\Cub{\mathbb E}X\cap\Cub{\mathbb S}Y=\Cub{\mathbb S}(X\cap Y)$. 
		Moreover, if $Y_j\subseteq X_j$, then 
		$(\Cub{\mathbb E}\pi_1)^{-1}(\Cub{\mathbb S}Y_1)\cap(\Cub{\mathbb E}\pi_2)^{-1}(\Cub{\mathbb S}Y_2)
		=\Cub{\mathbb S}(Y_1\times Y_2)$.
	\end{lemma}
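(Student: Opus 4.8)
The plan is to prove both identities by a routine element-wise computation, directly unwinding the definition of the cubic functor. Recall that $\Cub{\mathbb{S}}$ sends a set to the full direct product $\mathbb{S}\times(-)^A$ and a map $f$ to $\id_{\mathbb{S}}\times(f\circ -)$, and that throughout we have $\mathbb{S}\subseteq\mathbb{E}$, so $\mathbb{E}\cap\mathbb{S}=\mathbb{S}$. The only structural fact needed about $(-)^A$ is that it commutes with intersections of subsets of a fixed set: a tuple lies in $X^A\cap Y^A$ precisely when each of its coordinates lies in $X\cap Y$, hence $X^A\cap Y^A=(X\cap Y)^A$.

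For the first identity I would fix a common ambient set $W$ containing $X$ and $Y$, so that $\Cub{\mathbb{E}}X=\mathbb{E}\times X^A$ and $\Cub{\mathbb{S}}Y=\mathbb{S}\times Y^A$ are both subsets of $\mathbb{E}\times W^A$. Then a pair $(o,(w_a)_{a\in A})$ lies in the intersection iff $o\in\mathbb{E}\cap\mathbb{S}=\mathbb{S}$ and $w_a\in X\cap Y$ for every $a\in A$; by the previous paragraph this is exactly membership in $\mathbb{S}\times(X\cap Y)^A=\Cub{\mathbb{S}}(X\cap Y)$.

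For the second identity I would chase elements of $\Cub{\mathbb{E}}(X_1\times X_2)$ through $\Cub{\mathbb{E}}\pi_j=\id_{\mathbb{E}}\times(\pi_j\circ -)$. A tuple $(o,(p_a)_{a\in A})$, with each $p_a\in X_1\times X_2$, lies in $(\Cub{\mathbb{E}}\pi_1)^{-1}(\Cub{\mathbb{S}}Y_1)$ iff $o\in\mathbb{S}$ and $\pi_1(p_a)\in Y_1$ for all $a$, and symmetrically for the second projection. Intersecting, the conditions become $o\in\mathbb{S}$ together with $\pi_1(p_a)\in Y_1$ and $\pi_2(p_a)\in Y_2$ for all $a$. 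Since $Y_j\subseteq X_j$ and a pair lies in $Y_1\times Y_2$ exactly when both of its projections lie in $Y_1$, resp. $Y_2$, this is precisely $o\in\mathbb{S}$ and $p_a\in Y_1\times Y_2$ for all $a$, i.e.\ membership in $\mathbb{S}\times(Y_1\times Y_2)^A=\Cub{\mathbb{S}}(Y_1\times Y_2)$.

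I do not expect a genuine obstacle: both equalities are immediate from the definitions once the bookkeeping is pinned down. The one point requiring care is to fix unambiguously the ambient sets in which the subsets and the preimages are formed — in particular to record that the preimage of a full cube under $\Cub{\mathbb{E}}$ applied to a projection is again a full cube. It is precisely this ``full cube'' shape (as opposed to the proper subobjects appearing elsewhere in the basic diagram) that makes the clean identities above hold, and the lemma is exactly the tool needed to transport the inclusion $d\big(\pi_j^{-1}(X_j)\big)\subseteq(\Cub{\mathbb{E}}\pi_j)^{-1}\big(\Cub{\mathbb{S}}X_j\big)$ into a statement about a single coalgebra of the form $\Cub{\mathbb{S}}(X_1\times X_2)$.
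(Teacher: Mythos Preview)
Your proposal is correct and follows essentially the same approach as the paper: a direct element-wise unfolding of $\Cub{\mathbb S}X=\mathbb S\times X^A$ using $\mathbb S\subseteq\mathbb E$ and $X^A\cap Y^A=(X\cap Y)^A$, then the analogous computation for the preimages under $\Cub{\mathbb E}\pi_j$. The paper's argument is a terser version of exactly what you wrote.
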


	Using this, yields
	\begin{align*}
		d\big(Z\cap(X_1\times X_2)\big)\subseteq &\, 
		\Cub{\mathbb E}Z\cap(\Cub{\mathbb E}\pi_1)^{-1}\big(\Cub{\mathbb S}X_1)
		\cap(\Cub{\mathbb E}\pi_2)^{-1}\big(\Cub{\mathbb S}X_2)
		\\
		= &\, \Cub{\mathbb E}Z\cap\Cub{\mathbb S}(X_1\times X_2)=\Cub{\mathbb S}\big(Z\cap(X_1\times X_2)\big)
		.
	\end{align*}
	This shows that $Z\cap(X_1\times X_2)$ becomes an $\Cub{\mathbb S}$-coalgebra with the restriction 
	$d|_{Z\cap(X_1\times X_2)}$. Again referring to the basic diagram, we have the following zig-zag in 
	\CoCat{$F_{\mathbb S}$} (to shorten notation, denote the restrictions of $d,\pi_1,\pi_2$ to $Z\cap(X_1\times X_2)$ 
	again as $d,\pi_1,\pi_2$):
	\begin{equation}\label{4}
		\xymatrix@C=40pt{
		(X_1,c_1) & \big(Z\cap(X_1\times X_2),d\big) 
			\ar[l]_{\pi_1\mkern50mu} \ar[r]^{\mkern60mu\pi_2} 
		& (X_2,c_2)
		}
	\end{equation}
	This zig-zag relates $x_1$ with $x_2$ since $(x_1,x_2)\in Z\cap(X_1\times X_2)$. 

\item \textit{Step 3. The reduction lemma:}\ 
	In view of the zig-zag \eqref{4}, the proof of Theorem~\ref{CubProp} can be completed by 
	showing that $Z\cap(X_1\times X_2)$ is finitely generated as an algebra in \Cat{C}. 
	Since $Z$ is a submodule of the finitely generated module $\mathbb E^{n_1}\times\mathbb E^{n_2}$ over 
	the Noetherian ring $\mathbb E$, it is finitely generated as an $\mathbb E$-module. The task thus is to 
	show that being finitely generated is preserved when reducing scalars.

	This is done by what we call the \emph{reduction lemma}. Contrasting the extension lemma, the reduction lemma 
	is not a general categorical fact, and requires specific proof in each situation. 

	\begin{proposition}[Reduction Lemma]\label{RedLem}
		Let $n_1,n_2\in\mathbb N$, let $B_j$ be the set consisting of the $n_j$ canonical basis vectors of 
		$\mathbb E^{n_j}$, and set $X_j=T_{\mathbb S}B_j$.  
		Moreover, let $Z$ be an $\mathbb E$-submodule of $\mathbb E^{n_1}\times\mathbb E^{n_2}$. 
		Then $Z\cap(X_1\times X_2)$ is finitely generated as an algebra in \Cat{C}.
	\end{proposition}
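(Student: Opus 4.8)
The plan is to read this as a statement in elementary polyhedral geometry, proved uniformly over the four instances, where the finiteness input differs only between the field cases $\mathbb E\in\{\mathbb Q,\mathbb R\}$ and the integer case $\mathbb E=\mathbb Z$. First I would make $X_1\times X_2$ explicit. For $\mathbb S\in\{\mathbb N,\mathbb Q_+,\mathbb R_+\}$ one has $X_j=\Ts B_j=\mathbb S^{n_j}$, so $X_1\times X_2$ consists of the points of the real orthant $C=\{x\in\mathbb R^{n_1+n_2}\mid x\ge 0\}$ all of whose coordinates lie in $\mathbb S$; for $\mathbb S=[0,1]$ one has $X_j=\Delta^{n_j}$, so $X_1\times X_2$ is exactly the polytope $P=\{x\in\mathbb R^{n_1+n_2}\mid x\ge 0,\ \sum_{i\le n_1}x_i\le 1,\ \sum_{i>n_1}x_i\le 1\}$. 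The module $Z$ lies in $\mathbb E^{n_1+n_2}\subseteq\mathbb R^{n_1+n_2}$; when $\mathbb E\in\{\mathbb Z,\mathbb Q\}$ its real span $V:=\spn_{\mathbb R}Z$ is a \emph{rational} subspace, with $Z=V\cap\mathbb Q^{n_1+n_2}$ when $\mathbb E=\mathbb Q$ and with $Z$ a full-rank lattice in $V$ when $\mathbb E=\mathbb Z$, while for $\mathbb E=\mathbb R$ the module $Z$ is itself a linear subspace. Intersecting, one sees that $Z\cap(X_1\times X_2)$ is: the set of rational points of the rational polyhedral cone $\sigma:=V\cap C$ when $\mathbb S=\mathbb Q_+$; the monoid $Z\cap\sigma$ of lattice points of that same rational cone when $\mathbb S=\mathbb N$; the polyhedral cone $Z\cap C$ when $\mathbb S=\mathbb R_+$; and the compact polyhedron $Z\cap P$ when $\mathbb S=[0,1]$.

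For $\mathbb R_+$ the Minkowski--Weyl theorem presents $Z\cap C$ as the set of non-negative real combinations of finitely many vectors $v_1,\dots,v_k$; since the operations of \Cone\ are exactly $\mathbb R_+$-weighted sums, these $v_i$ generate $Z\cap C$ as an object of \Cone. For $\mathbb Q_+$ the same theorem over the field $\mathbb Q$ gives finitely many \emph{rational} generators $v_1,\dots,v_k\in Z$ whose non-negative real combinations fill $\sigma$; adding the elementary observation that every rational point of $\sigma$ is a non-negative \emph{rational} combination of the $v_i$ -- because a feasible linear system with rational data has a rational solution -- makes $\{v_1,\dots,v_k\}$ a generating set in \SMod{Q_+}. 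For $\mathbb N$ the input is Gordan's lemma, the finiteness result the introduction traces to Hilbert: for a lattice $L$ and a rational polyhedral cone $\sigma$ the commutative monoid $L\cap\sigma$ is finitely generated; applied with $L=Z$ this produces a finite monoid generating set of $Z\cap\mathbb N^{n_1+n_2}$, which is a generating set of it in \CMon.

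For $[0,1]$, Minkowski's theorem presents the compact polyhedron $Z\cap P$ as the convex hull $\operatorname{conv}(v_1,\dots,v_k)$ of its finitely many vertices. Here one uses that $0\in Z\cap P$, since $Z$ is a linear subspace and $0\in P$; combined with $Z\cap P=\operatorname{conv}(v_1,\dots,v_k)$ this yields $Z\cap P=\{\sum_i\lambda_iv_i\mid\lambda_i\ge 0,\ \sum_i\lambda_i\le 1\}$, which is precisely the statement that $\{v_1,\dots,v_k\}$ generates $Z\cap P$ as a positively convex algebra, i.e.\ in \Pca.

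I expect the principal obstacle to be the $[0,1]$ case, on two counts: verifying that $Z\cap P$ really is a compact polyhedron, so that Minkowski's theorem applies; and -- more substantively -- checking that passage to the vertices gives a generating set for the \emph{positively convex} structure rather than merely for the affine-convex one, which is exactly where $0\in Z\cap P$ is needed and the only point at which the subconvex (as opposed to convex) nature of the \Pca-operations matters. The $\mathbb Q_+$ case is the secondary delicate point, since finite generation there is not a purely real statement and genuinely relies on the rationality of feasible solutions of rational linear systems. After the reformulation of the first paragraph, the $\mathbb R_+$ and $\mathbb N$ cases are direct appeals to Minkowski--Weyl and to Gordan's/Hilbert's lemma and need no further argument.
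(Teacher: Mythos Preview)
Your proposal is correct and, for the $\mathbb N$, $\mathbb R_+$, and $[0,1]$ cases, runs exactly parallel to the paper's proofs: Hilbert's/Gordan's lemma for $\mathbb N$, Minkowski--Weyl for $\mathbb R_+$, and Minkowski's theorem on bounded polyhedra for $[0,1]$. Your explicit remark that $0\in Z\cap P$ is what lets one pass from the convex hull of the vertices to their \emph{sub}convex hull is a point the paper's one-line proof of the $[0,1]$ case leaves implicit.

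The one substantive divergence is $\mathbb Q_+$. You argue directly via rational Minkowski--Weyl together with the observation that a rational linear system with a non-negative real solution admits a non-negative rational one. The paper instead clears denominators to land inside a finitely generated $\mathbb Z$-submodule and then invokes the already-established $\mathbb N$ case. Your route keeps the four cases visually uniform (everything is polyhedral geometry over the ambient ordered field); the paper's route makes explicit that $\mathbb Q_+$ needs no new polyhedral input beyond what $\mathbb N$ already required, so that the Hilbert result is the single arithmetic ingredient underlying both discrete cases.
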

\end{Nili}

\section{A subcubic convex functor}
\label{sec-SubcubicFunctor}

Recall the following definition from \cite[p.309]{silva.sokolova:2011}.

\begin{definition}\label{Ghat-def}
	We introduce a functor $\Gh\colon \Pca\to\Pca$. 
	\begin{enumerate}
	\item Let $X$ be a \Pca. Then 
		\begin{align*}
			\Gh X=\Big\{ (o,\phi)\in[0,1] & \times X^A\mid 
			\\
			& \exists\,n_a\in\mathbb N\Deli \exists\,p_{a,j}\in[0,1],x_{a,j}\in X\text{ for }j=1,\ldots,n_a,a\in A\Deli 
			\\
			& o+\sum_{a\in A}\sum_{j=1}^{n_a}p_{a,j}\leq 1,\ \phi(a)=\sum_{j=1}^{n_a}p_{a,j}x_{a,j} \Big\}.
		\end{align*}
	\item Let $X,Y$ be \Pca s, and $f\colon X\to Y$ a convex map. Then $\Gh f\colon \Gh X\to\Gh Y$ is the map 
		$\Gh f=\id_{[0,1]}\times(f\circ -)$.
	\end{enumerate}
\end{definition}

For every $X$ we have $\Gh X\subseteq\Cub{[0,1]}X$, and for every $f\colon X\to Y$ we have 
$\Gh f=(\Cub{[0,1]}f)|_{\Gh X}$. For this reason, we think of $\Gh$ as a \emph{subcubic functor}. 

The definition of \Gh\ can be simplified. 

\begin{lemma}\label{Ghat-simple}
	Let $X$ be a \Pca, then 
	\begin{align*}
		\Gh X=\Big\{ (o,f)\in[0,1]\times X^A\mid\ & 
		\exists\,p_a\in[0,1],x_a\in X\text{ for }a\in A\Deli 
		\\
		& o+\sum_{a\in A}p_a\leq 1,\ f(a)=p_ax_a \Big\}.
	\end{align*}
\end{lemma}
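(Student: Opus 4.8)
The plan is to show the two sets are equal by mutual inclusion, with the nontrivial direction being that every element of the "complicated" $\Gh X$ (with sums $\phi(a)=\sum_j p_{a,j}x_{a,j}$) already lies in the "simplified" set (a single term $p_a x_a$ per letter). The reverse inclusion is trivial: given $(o,f)$ with $f(a)=p_a x_a$ and $o+\sum_a p_a\le 1$, just take $n_a=1$, $p_{a,1}=p_a$, $x_{a,1}=x_a$, so the old constraints hold verbatim.

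For the forward inclusion, fix $(o,\phi)$ in the complicated set with witnesses $n_a$, $p_{a,j}$, $x_{a,j}$. For each $a\in A$ set $p_a=\sum_{j=1}^{n_a}p_{a,j}$. Then $o+\sum_{a}p_a=o+\sum_a\sum_j p_{a,j}\le 1$, which is exactly the simplified budget constraint. It remains to exhibit a single $x_a\in X$ with $\phi(a)=p_a x_a$. There are two cases. If $p_a>0$, then the coefficients $p_{a,j}/p_a$ are nonnegative and sum to $1$, so $x_a:=\sum_{j=1}^{n_a}(p_{a,j}/p_a)x_{a,j}$ is a genuine (total) convex combination of points of $X$, hence a well-defined element of $X$ since $X$ is a positively convex algebra (indeed already an ordinary convex combination is available), and $p_a x_a=\sum_j p_{a,j}x_{a,j}=\phi(a)$ by the positive-convexity structure. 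If $p_a=0$, then every $p_{a,j}=0$, so $\phi(a)=\sum_j 0\cdot x_{a,j}$ is the empty/zero positive-convex combination; pick any $x_a\in X$ (or the distinguished point, if one wants to be canonical) and note $p_a x_a=0\cdot x_a=\phi(a)$, again using that in \Pca\ the operation $0\cdot(-)$ is the constant map to the zero of the subprobabilistic structure, which must equal $\phi(a)=\sum_j 0\cdot x_{a,j}$. In both cases $(o,f)$ with $f(a)=p_a x_a$ witnesses membership in the simplified set.

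I expect the only real subtlety — and hence the main obstacle — to be the bookkeeping with the positively convex algebra operations: one must justify that the identities $\sum_j p_{a,j}x_{a,j}=p_a\cdot\bigl(\sum_j (p_{a,j}/p_a)x_{a,j}\bigr)$ and "$0$-weighted sums collapse to a single distinguished element" are consequences of the axioms/monad laws for $T_{[0,1]}$ (finitely supported subprobability distributions), rather than just intuition from real vector spaces. This is a standard computation with the Eilenberg–Moore structure map $\alpha_X\colon T_{[0,1]}X\to X$ — associativity of the monad multiplication gives the scaling identity, and the unit/degenerate-distribution handling gives the $p_a=0$ case — but it is the part that needs to be spelled out carefully. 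Everything else is immediate, and the case $A=\emptyset$ (or $n_a=0$ for some $a$) is covered by the $p_a=0$ discussion. Consequently the two descriptions of $\Gh X$ agree, proving the lemma.
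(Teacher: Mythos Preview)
Your argument is correct and follows exactly the paper's approach: the ``$\supseteq$'' inclusion is immediate, and for ``$\subseteq$'' one sets $p_a=\sum_j p_{a,j}$ and, when $p_a>0$, takes $x_a=\sum_j(p_{a,j}/p_a)x_{a,j}$ as a convex combination in $X$, while the case $p_a=0$ collapses to the zero element. Your extra remarks about justifying the scaling and zero identities via the \Pca\ axioms are sound but go slightly beyond what the paper spells out.
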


From this representation it is obvious that \Gh\ is monotone in the sense that 
\begin{itemize}
\item If $X_1\subseteq X_2$, then $\Gh X_1\subseteq\Gh X_2$. 
\item If $f_1\colon X_1\to Y_1,f_2\colon X_2\to Y_2$ with $X_1\subseteq X_2,Y_1\subseteq Y_2$ and 
	$f_2|_{X_1}=f_1$, then $\Gh f_2|_{\Gh X_1}=\Gh f_1$.
\end{itemize}
Note that \Gh\ does not preserve direct products. 

For a \Pca\ $X$ whose carrier is a compact subset of a euclidean space, $\Gh X$ can be described 
with help of a geometric notion, namely using the Minkowksi functional of $X$. 
Before we can state this fact, we have to make a brief digression to explain this notion and its properties. 

\begin{definition}\label{Minko}
	Let $X\subseteq\mathbb R^n$ be a \Pca. 
	The \emph{Minkowski functional} of $X$ is the map $\Min X\colon \mathbb R^n\to[0,\infty]$ defined as 
	$\Min X(x)=\inf\{t>0\mid x\in tX\}$, where the infimum of the empty set is understood as $\infty$. 
\end{definition}

Minkowski functionals, sometimes also called \emph{gauge}, are a central and exhaustively studied notion in convex geometry, 
see, e.g., \cite[p.34]{rudin:1991} or \cite[p.28]{rockafellar:1970}.

We list some basic properties whose proof can be found in the mentioned textbooks. 
\begin{enumerate}
\item $\Min X(px)=p\Min X(x)$ for $x\in\mathbb R^n,p\geq 0$,
\item $\Min X(x+y)\leq \Min X(x)+\Min X(y)$ for $x,y\in\mathbb R^n$,
\item $\Min{X\cap Y}(x)=\max\{\Min X(x),\Min Y(x)\}$ for $x\in\mathbb R^n$.
\item If $X$ is bounded, then $\Min X(x)=0$ if and only if $x=0$. 
\end{enumerate}

The set $X$ can almost be recovered from $\Min X$.
\begin{enumerate}
\setcounter{enumi}{4}
\item ${\displaystyle\{x\in\mathbb R^n\mid\Min X(x)<1\}\subseteq X\subseteq\{x\in\mathbb R^n\mid\Min X(x)\leq 1\}}$.
\item If $X$ is closed, equality holds in the second inclusion of 5.
\item Let $X,Y$ be closed. Then $X\subseteq Y$ if and only if $\Min X\geq\Min Y$.
\end{enumerate}

\begin{example}\label{MinkoExa}
	As two simple examples, consider the $n$-simplex $\Delta^n\subseteq\mathbb R^n$ and a convex cone $C\subseteq\mathbb R^n$.
	Then (here $\geq$ denotes the product order on $\mathbb R^n$)
	\[
		\Min{\Delta^n}(x)= 
		\begin{cases}
			\sum_{j=1}^n\xi_j &\hspace*{-3mm},\quad x=(\xi_1,\ldots,\xi_n)\geq 0,
			\\
			\infty &\hspace*{-3mm},\quad \text{otherwise}.
		\end{cases}
		\qquad 
		\Min C(x)=
		\begin{cases}
			0 &\hspace*{-3mm},\quad x\in C,
			\\
			\infty &\hspace*{-3mm},\quad \text{otherwise}.
		\end{cases}
	\]
	Observe that $\Delta^n=\{x\in\mathbb R^n\mid \Min{\Delta^n}(x)\leq 1\}$.
\end{example}

Another illustrative example is given by general pyramids in a euclidean space. 
This example will play an important role later on.

\begin{example}\label{Pyramid}
	For $u\in\mathbb R^n$ consider the set 
	\[
		X=\big\{x\in\mathbb R^n\mid x\geq 0\text{ and }(x,u)\leq 1\big\}
		,
	\]
	where $(\cdot,\cdot)$ denotes the euclidean scalar product on $\mathbb R^n$. 
	The set $X$ is intersection of the cone $\mathbb R_+^n$ with the half-space given by the inequality $(x,u)\leq 1$, 
	hence it is convex and contains $0$. Thus $X$ is a \Pca. 

	Let us first assume that $u$ is strictly positive, i.e., $u\geq 0$ and no component of $u$ equals zero. 
	Then $X$ is a pyramid (in $2$-dimensional space, a triangle).
	\begin{center}
		\hspace*{-5mm}
		\begin{tikzpicture}[scale=0.4]
		\path [fill=cyan!20] (0,0) -- (0,4-5/12) -- (11-12/5,0) -- (0,0);
		\draw [->,>=open triangle 60,thick] (0,0) -- (12,0);
		\draw [->,>=open triangle 60,thick] (0,0) -- (0,5);
		\draw [dashed] (-1,4) -- (11,-1);
		\draw [->,>=triangle 60] (0,0) -- (1,1*12/5);
		\node at (0.6,2.5) {{$u$}};
		\node at (2.9,1.1) {{$X$}};
		\node at (-2.4,4.5) {{$\scriptstyle(x,u)=1$}};
		\end{tikzpicture}
	\end{center}
	The $n$-simplex $\Delta^n$ is of course a particular pyramid. It is obtained using the vector 
	$u=(1,\ldots,1)$. 

	The Minkowski functional of the pyramid $X$ associated with $u$ is 
	\[
		\Min X(x)=
		\begin{cases}
			(x,u) &\hspace*{-3mm},\quad x\geq 0,
			\\
			\infty &\hspace*{-3mm},\quad \text{otherwise}.
		\end{cases}
	\]
	Write $u=\sum_{j=1}^n\alpha_je_j$, where $e_j$ is the $j$-th canonical basis vector, 
	and set $y_j=\frac 1{\alpha_j}e_j$. Clearly, $\{y_1,\ldots,y_n\}$ is linearly independent. 
	Each vector $x=\sum_{j=1}^n\xi_je_j$ can be written as $x=\sum_{j=1}^n(\xi_j\alpha_j)y_j$, 
	and this is a subconvex combination if and only if $\xi_j\geq 0$ and $\sum_{j=1}^n\xi_j\alpha_j\leq 1$, 
	i.e., if and only if $x\in X$. Thus $X$ is generated by $\{y_1,\ldots,y_n\}$ as a \Pca. 

	The linear map given by the diagonal matrix made up of the $\alpha_j$'s induces a bijection of $X$ 
	onto $\Delta^n$, and maps the $y_j$'s to the corner points of $\Delta^n$. Hence, $X$ is free 
	with basis $\{y_1,\ldots,y_n\}$. 

	If $u$ is not strictly positive, the situation changes drastically. Then $X$ is not finitely generated as a \Pca, 
	because it is unbounded whereas the subconvex hull of a finite set is certainly bounded. 
	\begin{center}
		\hspace*{-5mm}
		\begin{tikzpicture}[scale=0.4]
		\path [fill=cyan!20] (0,0) -- (0,2) -- (16,4) -- (16,0) -- (0,0);
		\draw [->,>=open triangle 60,thick] (0,0) -- (12,0);
		\draw [->,>=open triangle 60,thick] (0,0) -- (0,5);
		\draw [dashed] (-1,2-1/8) -- (16,4);
		\draw [->,>=triangle 60] (0,0) -- (-3/8,3);
		\node at (-0.8,3.2) {{$u$}};
		\node at (9.5,1.7) {{$X$}};
		\node at (14.8,5.0) {{$\scriptstyle(x,u)=1$}};
		\end{tikzpicture}
	\end{center}
\end{example}

\noindent
Now we return to the functor \Gh. 

\begin{lemma}\label{Ghat-Minko}
	Let $X\subseteq\mathbb R^n$ be a \Pca, and assume that $X$ is compact. Then 
	\[
		\Gh X=\Big\{(o,\phi)\in\mathbb R\times(\mathbb R^n)^A\mid\ 
		o\geq 0,\ o+\sum_{a\in A}\Min X(\phi(a))\leq 1 \Big\}
		.
	\]
\end{lemma}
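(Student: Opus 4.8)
The plan is to prove the two inclusions separately, using the simplified description of $\Gh X$ from Lemma~\ref{Ghat-simple} together with the basic properties (1)--(6) of the Minkowski functional recalled above. Throughout we use that $X$ compact implies $X$ closed and bounded, so that properties 4 and 6 apply; in particular $\Min X(x)\leq 1$ exactly characterizes membership $x\in X$.

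First I would show the inclusion $\subseteq$. Take $(o,\phi)\in\Gh X$. By Lemma~\ref{Ghat-simple} there are $p_a\in[0,1]$ and $x_a\in X$ with $o+\sum_{a\in A}p_a\leq 1$ and $\phi(a)=p_ax_a$. Certainly $o\geq 0$. Using homogeneity (property 1) and $\Min X(x_a)\leq 1$ (property 6, since $x_a\in X$ and $X$ is closed), we get $\Min X(\phi(a))=p_a\Min X(x_a)\leq p_a$. Summing over $a\in A$ and adding $o$ yields $o+\sum_{a\in A}\Min X(\phi(a))\leq o+\sum_{a\in A}p_a\leq 1$, which is the claimed membership on the right-hand side.

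For the reverse inclusion $\supseteq$, take $(o,\phi)$ with $o\geq 0$ and $o+\sum_{a\in A}\Min X(\phi(a))\leq 1$. Set $p_a=\Min X(\phi(a))$; these are finite (each $\Min X(\phi(a))\leq 1$, in particular $\ne\infty$) and lie in $[0,1]$, with $o+\sum_{a\in A}p_a\leq 1$. The remaining task is to exhibit $x_a\in X$ with $\phi(a)=p_ax_a$. If $p_a>0$, put $x_a=\frac{1}{p_a}\phi(a)$; then $\Min X(x_a)=\frac{1}{p_a}\Min X(\phi(a))=1$, so $x_a\in X$ by property 6 (second inclusion, using that $X$ is closed), and $\phi(a)=p_ax_a$ by construction. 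If $p_a=0$, then $\Min X(\phi(a))=0$, so $\phi(a)=0$ by property 4 (boundedness of $X$), and we may take $x_a$ to be any point of $X$ (e.g.\ $x_a=0\in X$, as $X$ is a \Pca), so that $\phi(a)=0=p_ax_a$. By Lemma~\ref{Ghat-simple}, $(o,\phi)\in\Gh X$.

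The only subtle point --- and the place where compactness is genuinely used --- is the step $x_a\in X$ when $p_a>0$: this needs the closedness half of property 6, i.e.\ that $\{x\mid\Min X(x)\leq 1\}=X$ rather than merely $\subseteq$ the closure, together with boundedness to guarantee $\Min X(\phi(a))<\infty$ so that the normalization $\frac{1}{p_a}\phi(a)$ makes sense and the $p_a$ are genuine scalars in $[0,1]$. I expect no further obstacle; the argument is a direct translation between the "subconvex decomposition" description of $\Gh X$ and the "gauge inequality" description, mediated entirely by the homogeneity and recovery properties of $\Min X$.
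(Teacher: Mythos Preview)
Your proposal is correct and follows essentially the same route as the paper's proof: both directions go through Lemma~\ref{Ghat-simple}, using homogeneity of $\Min X$ and $\Min X(x_a)\leq 1$ for the forward inclusion, and for the reverse inclusion setting $p_a=\Min X(\phi(a))$ with the same case split on $p_a=0$ (boundedness gives $\phi(a)=0$) versus $p_a>0$ (closedness gives $\frac{1}{p_a}\phi(a)\in X$). One tiny remark: in the forward direction you only need the second inclusion of property~5 ($X\subseteq\{x\mid\Min X(x)\leq 1\}$), not closedness, to get $\Min X(x_a)\leq 1$.
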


In the following we use the elementary fact that every convex map has a linear extension.

\begin{lemma}\label{LinExt}
	Let $V_1,V_2$ be vector spaces, let $X\subseteq V_1$ be a \Pca,  and let $c\colon X\to V_2$ be a convex map.
	Then $c$ has a linear extension $\tilde c\colon V_1\to V_2$. If $\spn X=V_1$, this extension is unique.
\end{lemma}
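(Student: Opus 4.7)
The plan is to extend $c$ in three stages: first from $X$ to the positive cone $C=\mathbb{R}_{\geq 0}\cdot X=\{tx\mid t\geq 0,x\in X\}$ generated by $X$, then to the linear span $\spn X=C-C$, and finally to all of $V_1$ via a Hamel basis argument.

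The preliminary observation is that since $X$ is a \Pca\ the empty subconvex combination forces $0\in X$, and since $c$ is a \Pca-morphism (convex map into a vector space, viewed itself as a \Pca) it preserves the empty combination, i.e., $c(0)=0$. Moreover, for $x\in X$ and $\lambda\in[0,1]$ we have $\lambda x=\lambda x+(1-\lambda)\cdot 0\in X$ and $c(\lambda x)=\lambda c(x)$. This gives us the key rescaling property needed below.

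On the cone $C$ define $\tilde c(tx):=tc(x)$. For well-definedness assume $t_1x_1=t_2x_2$ with $t_1\leq t_2$; if $t_2=0$ both sides vanish, and otherwise $x_2=(t_1/t_2)x_1\in X$, so by the rescaling property $c(x_2)=(t_1/t_2)c(x_1)$, hence $t_2c(x_2)=t_1c(x_1)$. Positive homogeneity of $\tilde c$ on $C$ is clear. For additivity, write $y_i=t_ix_i$ and, if $t:=t_1+t_2>0$, observe that $(t_1/t)x_1+(t_2/t)x_2$ is a (sub)convex combination in $X$; applying the definition of $\tilde c$ together with convexity of $c$ on $X$ yields $\tilde c(y_1+y_2)=\tilde c(y_1)+\tilde c(y_2)$. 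Then extend to $\spn X$ by $\tilde c(y_+-y_-):=\tilde c(y_+)-\tilde c(y_-)$; well-definedness follows by the standard ``add to both sides'' trick using additivity on $C$, and linearity on $\spn X$ is routine.

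Finally, extend a Hamel basis of $\spn X$ to a basis of $V_1$ and send the added basis vectors to $0$ (say); linear extension to $V_1$ is then automatic. Uniqueness under the assumption $\spn X=V_1$ is immediate: two linear extensions that agree on $X$ must agree on $\spn X=V_1$. I expect no serious obstacle; the only point requiring care is the verification of well-definedness of $\tilde c$ on $C$ and on $\spn X$, which is exactly where the preservation of subconvex combinations (in particular $c(0)=0$ and the rescaling identity $c(\lambda x)=\lambda c(x)$ for $\lambda\in[0,1]$) is used decisively.
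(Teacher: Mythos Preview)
Your proposal is correct and follows essentially the same three-stage construction as the paper's proof: extend to the cone $C$ generated by $X$ by positive rescaling, then to $\spn X=C-C$ via differences, and finally to all of $V_1$ by linear algebra. The only cosmetic difference is that the paper writes the cone as $\bigcup_{t>0}tX$ and defines $c_1(x)=t\,c(\tfrac1t x)$, checking additivity by choosing a common $t$ with $x,y,x+y\in tX$, whereas you parametrize points of $C$ as $tx$ and verify additivity via the subconvex combination $(t_1/t)x_1+(t_2/t)x_2$; both routes hinge on exactly the same rescaling identity $c(\lambda x)=\lambda c(x)$ for $\lambda\in[0,1]$.
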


Rescaling in this representation of $\Gh X$ leads to a characterisation of \Gh-coalgebra maps. 
We give a slightly more general statement; for the just said, use $X=Y$.

\begin{corollary}\label{Ghat-Minko-Coalg}
	Let $X,Y\subseteq\mathbb R^n$ be \Pca s, and assume that $X$ and $Y$ are compact. 
	Further, let $c\colon X\to\mathbb R_+\times(\mathbb R^n)^A$ be a convex map, 
	and let $\tilde c\colon \mathbb R^n\to\mathbb R\times(\mathbb R^n)^A$ be a linear extension of $c$.

	Then $c(X)\subseteq\Gh Y$, if and only if 
	\begin{equation}\label{2}
		\Out{\tilde c}(x)+\sum_{a\in A}\Min Y(\Comp{\tilde c}a(x))\leq\Min X(x),\quad x\in\mathbb R^n
		.
	\end{equation}
\end{corollary}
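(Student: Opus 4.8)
The plan is to translate both sides of the claimed equivalence into inequalities between Minkowski functionals and then connect them by a homogeneity (rescaling) argument. First I would apply Lemma~\ref{Ghat-Minko} to the compact \Pca\ $Y$: it gives that $(o,\phi)\in\Gh Y$ exactly when $o\ge 0$ and $o+\sum_{a\in A}\Min Y(\phi(a))\le 1$. Since $c$ takes values in $\mathbb R_+\times(\mathbb R^n)^A$, the requirement $\Out c(x)\ge 0$ holds automatically, so $c(X)\subseteq\Gh Y$ is equivalent to
\[
\Out c(x)+\sum_{a\in A}\Min Y(\Comp c a(x))\le 1\qquad\text{for all }x\in X;
\]
and because $\tilde c$ restricts to $c$ on $X$, the left-hand side here agrees on $X$ with the left-hand side of \eqref{2}. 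The task thus reduces to showing that this inequality on $X$ (with right-hand side $1$) is equivalent to \eqref{2} on all of $\mathbb R^n$ (with right-hand side $\Min X(x)$).

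For the implication ``\eqref{2}~$\Rightarrow$~$c(X)\subseteq\Gh Y$'' I would just note that $X$ is closed, being compact, so $X=\{x\mid\Min X(x)\le 1\}$ and hence $\Min X(x)\le 1$ for every $x\in X$; then \eqref{2} immediately gives the displayed inequality. For the converse, the key observation is that the left-hand side of \eqref{2} is positively homogeneous of degree one in $x$: the maps $\Out{\tilde c}$ and $\Comp{\tilde c}a$ are linear, and $\Min Y(px)=p\Min Y(x)$ for $p\ge 0$; the right-hand side $\Min X$ is homogeneous of degree one for the same reason. Hence \eqref{2} only needs to be checked along one representative of each ray through the origin, and I would distinguish three cases. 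If $\Min X(x)=\infty$, \eqref{2} is trivial. If $\Min X(x)=0$, then $x=0$ since $X$ is bounded, and both sides of \eqref{2} vanish. If $0<\Min X(x)=:s<\infty$, then $\Min X(x/s)=1$, so $x/s\in X$ (again by closedness of $X$); applying the displayed inequality to $x/s$ and multiplying by $s$, using linearity of $\tilde c$ and homogeneity of $\Min Y$, yields exactly \eqref{2} at $x$.

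I expect no serious obstacle: the only points needing care are the two uses of compactness of $X$ (closedness, to pass from $\Min X(x/s)=1$ to $x/s\in X$, and boundedness, to pass from $\Min X(x)=0$ to $x=0$), and the observation that the particular linear extension $\tilde c$ chosen is immaterial, since only its values on rescalings of points of $X$ enter the argument and there it coincides with $c$.
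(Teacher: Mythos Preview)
Your proof is correct and follows essentially the same approach as the paper: both directions use Lemma~\ref{Ghat-Minko} to rewrite membership in $\Gh Y$, the forward implication uses $\Min X(x)\le 1$ for $x\in X$, and the converse is handled by the same three-case split on $\Min X(x)$ together with the positive homogeneity/rescaling argument. One minor remark: for the forward direction you only need $X\subseteq\{x\mid\Min X(x)\le 1\}$, which holds for any \Pca\ (property~5 after Definition~\ref{Minko}) without invoking closedness.
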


\section{An extension theorem for $\Gh$-coalgebras}
\label{sec-ExtensionTheorem}

In this section we establish an extension theorem for $\Gh$-coalgebras. 
It states that an $\Gh$-coalgebra, whose carrier has a particular geometric form, 
can, under a mild additional condition, be embedded into an $\Gh$-coalgebra 
whose carrier is free and finitely generated.

\begin{theorem}\label{ExEx}
	Let $(X,c)$ be an $\Gh$-coalgebra whose carrier $X$ is a compact subset of 
	a euclidean space $\mathbb R^n$ with $\Delta^n\subseteq X\subseteq\mathbb R_+^n$.
	Assume that the output map $\Out c$ does not vanish on invariant coordinate hyperplanes  
	in the sense that ($e_j$ denotes again the $j$-th canonical basis vector in $\mathbb R^n$)
	\begin{equation}\label{nv}
		\begin{aligned}
		& \nexists\,I\subseteq\{1,\ldots,n\}\Deli
		\\
		& I\neq\emptyset,\quad 
			\Out c(e_j)=0,j\in I,
			\quad
			\Comp ca(e_j)\subseteq\spn\{e_i\mid i\in I\},a\in A,j\in I.
		\end{aligned}
	\end{equation}
	Then there exists an $\Gh$-coalgebra $(Y,d)$, such that $X\subseteq Y\subseteq\mathbb R_+^n$,
	the inclusion map $\iota\colon X\to Y$ is a $\CoCat{\Gh}$-morphism,
	and $Y$ is the subconvex hull of $n$ linearly independent vectors (in particular, $Y$ is free with $n$ generators).
\end{theorem}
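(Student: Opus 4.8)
The plan is to recover the desired free carrier $Y$ as a pyramid of the type studied in Example~\ref{Pyramid}: we look for a vector $u \in \mathbb R^n_{+}$ with $u \le (1,\dots,1)$ (so that the associated pyramid $Y = \{x \ge 0 \mid (x,u)\le 1\}$ contains $\Delta^n$, hence contains $X$ if we are careful, and is contained in $\mathbb R^n_{+}$), and such that the coalgebra structure $c$ on $X$ extends to a $\Gh$-coalgebra structure $d$ on $Y$ with $\iota\colon X \to Y$ a morphism. By Lemma~\ref{LinExt} fix the linear extension $\tilde c\colon \mathbb R^n \to \mathbb R\times(\mathbb R^n)^A$ of $c$, and restrict it to $Y$ to define $d$; by Corollary~\ref{Ghat-Minko-Coalg} (applied with $X:=Y$, $Y:=Y$) the pair $(Y,d)$ is a $\Gh$-coalgebra and $\iota$ is a $\CoCat{\Gh}$-morphism precisely when
\[
	\Out{\tilde c}(x)+\sum_{a\in A}\Min Y(\Comp{\tilde c}a(x))\le\Min Y(x),\qquad x\in\mathbb R^n .
\]
Since $\Min Y(x) = (x,u)$ for $x \ge 0$ and $+\infty$ otherwise (Example~\ref{Pyramid}), and since $\Out{\tilde c}$ and the $\Comp{\tilde c}a$ are linear, on the positive cone this inequality becomes an honest system of \emph{linear} inequalities in the unknown $u$: writing $\Out{\tilde c}(x) = (x, o)$ and $\Comp{\tilde c}a(x) = L_a x$ for a vector $o$ and matrices $L_a$, we need $(x,o) + \sum_a (L_a x, u) \le (x,u)$, i.e. $\big(x,\ o + \sum_a L_a^{\!\top} u - u\big) \le 0$ for all $x \ge 0$, equivalently $o + \sum_a L_a^{\!\top} u \le u$ componentwise. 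We also need $\Comp{\tilde c}a(x) \ge 0$ whenever $x \ge 0$ (so that $\Min Y$ is finite there), which holds because $c$ maps $X\supseteq\Delta^n$ into $\Gh Y \subseteq \mathbb R_+\times(\mathbb R^n)^A$ and $L_a$ is the linear extension of $\Comp ca$ off the full-dimensional set $\Delta^n$; and $0 \le u \le \mathbf 1$, and $X \subseteq Y$, i.e. $(x,u)\le 1$ for $x \in X$.

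Thus everything reduces to producing $u$ with $u \ge 0$, $u \le \mathbf 1$, $X \subseteq \{(\cdot,u)\le 1\}$, and the fixpoint-type inequality $o + \sum_{a\in A} L_a^{\!\top} u \le u$, together with $u$ strictly positive (which by Example~\ref{Pyramid} is exactly what makes the pyramid $Y$ free and finitely generated — $n$ linearly independent generators $y_j = \alpha_j^{-1} e_j$). This is where the nondegeneracy hypothesis \eqref{nv} enters and where I expect the real work to be. The existence of \emph{some} nonnegative $u$ satisfying $o + \sum_a L_a^{\!\top} u \le u$ should follow from a Knaster–Tarski / monotone-iteration or Perron–Frobenius argument on the nonnegative operator $u \mapsto o + \sum_a L_a^{\!\top} u$ restricted to a suitable order interval — one must check this operator maps a compact convex set (a scaled simplex or box) into itself, using that the trace/output data coming from an actual $\Gh$-coalgebra on a bounded $X$ forces a spectral-radius bound $<1$ in the relevant directions; $X$ compact with $X \subseteq Y$ is the geometric incarnation of that bound. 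Then one upgrades ``$u \ge 0$'' to ``$u$ strictly positive'': if some coordinates of $u$ vanish, the index set $I = \{j \mid u_j = 0\}$ would have to be invariant under all $L_a^{\!\top}$ and annihilate $o$, which is precisely the configuration forbidden by \eqref{nv} (spelled out in terms of $\Out c(e_j)$ and $\Comp ca(e_j) \subseteq \spn\{e_i \mid i \in I\}$); so after possibly perturbing $u$ within the solution set we may take $u$ strictly positive. Rescaling the inequality $o + \sum_a L_a^\top u \le u$ slightly (replacing $u$ by $(1+\varepsilon)u$ is harmless for the containments if we first shrink) lets us also arrange $u \le \mathbf 1$ and $(x,u)\le 1$ on the compact set $X$.

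Finally, with such a $u$ in hand, set $Y = \{x \in \mathbb R^n \mid x \ge 0,\ (x,u)\le 1\}$. By Example~\ref{Pyramid}, $Y$ is the subconvex hull of the linearly independent vectors $y_j = \alpha_j^{-1} e_j$ (where $u = \sum_j \alpha_j e_j$, all $\alpha_j > 0$), hence free with $n$ generators; $\Delta^n \subseteq Y$ since $u \le \mathbf 1$; $Y \subseteq \mathbb R^n_+$ by construction; and $X \subseteq Y$ since $X \subseteq \mathbb R^n_+$ and $(x,u)\le 1$ on $X$. Define $d := \tilde c|_Y$; the computation above shows $d$ is a $\Gh$-coalgebra structure on $Y$ and the inclusion $X \hookrightarrow Y$ is a $\CoCat{\Gh}$-morphism. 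The main obstacle is Step~two: extracting the strictly positive fixpoint-subsolution $u$, i.e. correctly packaging the boundedness of $X$ as the spectral condition needed for the monotone operator to have a fixed point in the open positive cone, and verifying that \eqref{nv} is exactly the obstruction to strict positivity. $\qed$
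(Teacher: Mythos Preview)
Your overall strategy coincides with the paper's: find a pyramid $Y=\{x\ge 0\mid (x,u)\le 1\}$, set $d=\tilde c|_Y$, and reduce the problem to producing a strictly positive $u$ with $T(u)\le u$ where $T(u)=o+\sum_a L_a^{\!\top}u$, then invoke~\eqref{nv} to rule out zero coordinates. The use of~\eqref{nv} for strict positivity is exactly right and matches the paper line for line.

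The genuine gap is in the fixed-point step. Your proposed domains (a cube $[0,R]^n$ or a scaled simplex) need not be $T$-invariant, and the ``spectral radius $<1$'' heuristic is not what makes this work. For instance take $X=\{x\ge 0\mid x_1+\tfrac12 x_2\le 1\}\supsetneq\Delta^2$, $A=\{a\}$, $\Comp ca(e_1)=2e_2$, $\Comp ca(e_2)=0$, $\Out c=(0,\tfrac12)$: this is a legitimate $\Gh$-coalgebra satisfying~\eqref{nv}, yet $T(R,R)=(2R,\tfrac12)$, so no cube $[0,R]^2$ is invariant and the relevant column sum of $L_a$ equals~$2$. The paper's key move is to take instead
\[
   M \;=\; \bigl\{\,u\ge 0 \;\bigm|\; (x,u)\le\Min X(x)\ \text{for all }x\ge 0\,\bigr\},
\]
which is precisely the set of $u$ with $X\subseteq Y_u$. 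The coalgebra inequality for $(X,c)$ itself (Corollary~\ref{Ghat-Minko-Coalg}) gives $T(M)\subseteq M$ in one line: for $u\in M$ and $x\ge 0$,
\[
   (x,T(u))=\Out{\tilde c}(x)+\sum_{a}(\Comp{\tilde c}a(x),u)
   \le \Out{\tilde c}(x)+\sum_{a}\Min X(\Comp{\tilde c}a(x))
   \le \Min X(x).
\]
Since $M$ is nonempty, compact and convex (bounded because $\Delta^n\subseteq X$ forces $u_j\le\Min X(e_j)\le 1$), Brouwer already yields a fixed point; the paper phrases this via Kakutani on the set-valued map $P(u)=\{v\in M\mid T(u)\le v\}$, which is slightly more than needed.

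With this choice of $M$ your perturbation and rescaling steps become unnecessary and, as written, would not work: membership in $M$ already gives $(x,u)\le 1$ on $X$ and $u\le\mathbf 1$, and~\eqref{nv} shows \emph{directly} that any $u\in M$ with $T(u)\le u$ is strictly positive (if $I=\{j\mid u_j=0\}\neq\emptyset$, evaluate the inequality at $x=e_j$, $j\in I$, to force $\Out c(e_j)=0$ and $\Comp ca(e_j)\in\spn\{e_i\mid i\in I\}$, contradicting~\eqref{nv}). No Knaster--Tarski lattice structure and no Perron--Frobenius spectral bound enter.
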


The idea of the proof can be explained by geometric intuition. 
Say, we have an $\Gh$-coalgebra $(X,c)$ of the stated form, and 
let $\tilde c\colon \mathbb R^n\to\mathbb R\times(\mathbb R^n)^A$ be the linear extension of $c$ to all of $\mathbb R^n$, 
cf.\ Lemma~\ref{LinExt}.
\begin{center}
	\hspace*{-5mm}
	\begin{tikzpicture}[scale=0.5]
	\path [fill=violet!55] (0,0) -- (0,2) to [out=20,in=0] (3,0) -- (0,0);
	\draw [->,>=open triangle 60,thick] (0,0) -- (12,0);
	\draw [->,>=open triangle 60,thick] (0,0) -- (0,5);
	\node at (0,1.5) {{$\bullet$}};
	\node at (-0.5,1.5) {{$e_2$}};
	\node at (1.5,0) {{$\bullet$}};
	\node at (1.5,-0.5) {{$e_1$}};
	\node at (14,1.4) {$\Gh X$};
	\draw (0,2) to [out=20,in=0] (3,0);
	\node at (1.4,0.9) {{$X$}};
	\draw [->,>=latex] (3,1.4) to [out=20,in=160] (13,1.4);
	\node at (8,3) {$c=\tilde c|_X$};
	\end{tikzpicture}
\end{center}
Remembering that pyramids are free and finitely generated, we will be done if we find a pyramid $Y\supseteq X$ which is mapped 
into $\Gh Y$ by $\tilde c$:
\begin{center}
	\hspace*{-5mm}
	\begin{tikzpicture}[scale=0.5]
	\path [fill=blue!50] (0,0) -- (0,2) to [out=20,in=0] (3,0) -- (0,0);
	\path [pattern color=orange!80,pattern=grid] (0,0) -- (0,4-5/12) -- (11-12/5,0) -- (0,0);
	\draw [->,>=open triangle 60,thick] (0,0) -- (12,0);
	\draw [->,>=open triangle 60,thick] (0,0) -- (0,5);
	\node at (0,1.5) {{$\bullet$}};
	\node at (-0.5,1.5) {{$e_2$}};
	\node at (1.5,0) {{$\bullet$}};
	\node at (1.5,-0.5) {{$e_1$}};
	\node at (14,1.4) {$\Gh X$};
	\draw (0,2) to [out=20,in=0] (3,0);
	\node at (1.4,0.9) {{$X$}};
	\draw [->,>=latex] (3,1.4) to [out=20,in=160] (13,1.4);
	\node at (8,3) {$c=\tilde c|_X$};
	\draw [dashed] (-1,4) -- (11,-1);
	\node at (4.2,1) {{$Y$}};
	\node at (14,3.4) {$\Gh Y$};
	\node at (14.1,2.15) {\begin{rotate}{90} $\subseteq$ \end{rotate}};
	\draw [->,>=latex] (2,3.4) to [out=20,in=160] (13,3.4);
	\node at (8,5) {$\tilde c|_Y$};
	\end{tikzpicture}
\end{center}
This task can be reformulated as follows:
For each pyramid $Y_1$ containing $X$ let $P(Y_1)$ be the set of all pyramids $Y_2$ containing $X$, 
such that $\tilde c(Y_2)\subseteq\Gh Y_1$. If we find $Y$ with $Y\in P(Y)$, we are done.

Existence of $Y$ can be established by applying a fixed point principle for set-valued maps. 
The result sufficient for our present level of generality is Kakutani's generalisation \cite[Corollary]{kakutani:1941} 
of Brouwers fixed point theorem.

\section{Properness of \Gh}
\label{sec-PropernessSubcubic}

In this section we give the second main result of the paper.

\begin{theorem}\label{GhatProp}
	The functor $\Gh$ is proper. 

	In fact, for each two given coalgebras with free finitely generated carrier and each two elements having 
	the same trace, a zig-zag with free and finitely generated nodes relating those elements can be found, 
	which has three intermediate nodes with the middle one forming a span.
\end{theorem}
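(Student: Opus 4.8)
The plan is to mimic the three-step scheme of Theorem~\ref{CubProp} (extension lemma, basic diagram, reduction lemma), but with the cubic functor $\Cub{[0,1]}$ replaced by $\Gh$ and the extension lemma replaced by the geometric extension theorem, Theorem~\ref{ExEx}. Start with two $\Gh$-coalgebras with free finitely generated carrier, i.e.\ carriers $\Delta^{n_1}$ and $\Delta^{n_2}$, together with $x_1\in\Delta^{n_1}$, $x_2\in\Delta^{n_2}$ having the same trace. The first task is to massage each coalgebra so that Theorem~\ref{ExEx} applies: its hypothesis requires the carrier to sit between $\Delta^{n}$ and $\mathbb R_+^n$ and requires the non-vanishing condition~\eqref{nv} on invariant coordinate hyperplanes. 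The carrier $\Delta^{n_j}$ already satisfies $\Delta^{n_j}\subseteq\Delta^{n_j}\subseteq\mathbb R_+^{n_j}$, so only~\eqref{nv} must be arranged. If~\eqref{nv} fails, there is a nonempty set $I$ of basis vectors on which $\Out c$ vanishes and which is $c_a$-invariant; the behaviour of those basis vectors is then ``trivial'' (trace zero in a strong sense), and one can quotient them out / restrict to the complementary coordinates, passing to a smaller free finitely generated $\Gh$-coalgebra connected to the original by a $\CoCat{\Gh}$-morphism, without changing the trace of $x_1,x_2$ (which, if they had the same trace as something landing in the bad block, would already be $0$ and the zig-zag is trivial). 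After this preprocessing I apply Theorem~\ref{ExEx} to each side, obtaining $\Gh$-coalgebras $(Y_j,d_j)$ whose carriers $Y_j$ are subconvex hulls of $n_j$ linearly independent vectors (hence free and finitely generated), with $\Delta^{n_j}\hookrightarrow Y_j$ a $\CoCat{\Gh}$-morphism. This embedding is the \emph{outer} pair of intermediate nodes of the zig-zag.

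The second step is the span in the middle. Each $Y_j$ is a pyramid, so by Lemma~\ref{LinExt} the convex map $d_j$ extends uniquely to a linear map $\tilde d_j\colon\mathbb R^{n_j}\to\mathbb R\times(\mathbb R^{n_j})^A$, and by Corollary~\ref{Ghat-Minko-Coalg} this linear extension satisfies the Minkowski inequality~\eqref{2} relative to $\Min{Y_j}$. Now form the product coalgebra on $\mathbb R^{n_1}\times\mathbb R^{n_2}$ exactly as in Lemma~\ref{Prod}: let $Z$ be the $\mathbb R$-linear span of the vectors $((\tilde d_1)_w(x_1),(\tilde d_2)_w(x_2))$, $w\in A^*$, equipped with the common structure map $\tilde d=\tilde d_1|_Z=\tilde d_2|_Z$; then $(x_1,x_2)\in Z$ and $Z\xrightarrow{\pi_1}\mathbb R^{n_1}$, $Z\xrightarrow{\pi_2}\mathbb R^{n_2}$ are coalgebra morphisms in the category of $\Cub{\mathbb R}$-coalgebras over $\Vect R$. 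Here $Z$ is a subspace of a finite-dimensional space, hence finite-dimensional. The key point is that I cannot use $Z$ itself as a \Pca-node — I must intersect back down. Set $W=Z\cap(Y_1\times Y_2)$. Using that $Y_j$ is compact with $\Min{Y_j}$ as in Example~\ref{Pyramid}, that $\Min{Y_1\times Y_2}=\max\{\Min{Y_1}\circ\pi_1,\Min{Y_2}\circ\pi_2\}$ (property 3 of Minkowski functionals applied coordinatewise, together with the product description), and the inequality~\eqref{2} for each $\tilde d_j$, one checks that $\tilde d(W)\subseteq\Gh W$: the output stays $\geq 0$, and the relevant sum of Minkowski values of the $a$-components is bounded by $\Min{Y_1\times Y_2}$ on $W$. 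Thus $(W,\tilde d|_W)$ is a $\Gh$-coalgebra, $(x_1,x_2)\in W$, and $\pi_j\colon W\to Y_j$ are $\CoCat{\Gh}$-morphisms. This $W$ is the \emph{middle} node, and $W\to Y_1$, $W\to Y_2$ is the span.

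The third step is the reduction lemma: $W=Z\cap(Y_1\times Y_2)$ must be shown finitely generated \emph{as a positively convex algebra}. This is the analogue of Proposition~\ref{RedLem}, and — as the authors warn — it is not a formal consequence of anything categorical; it is the real obstacle, and I expect it to be the hardest part of the whole proof. Unlike the cubic case, where $X_j=\Delta^{n_j}=T_{[0,1]}B_j$ is the free algebra and one invokes Minkowski's theorem on finitely generated cones/polytopes, here $Y_j$ is a pyramid with arbitrary linearly independent generators, so $W$ is the intersection of a linear subspace $Z$ with a product of two polytopes. The plan is: after applying the linear diagonal isomorphisms of Example~\ref{Pyramid} that carry each $Y_j$ to $\Delta^{n_j}$, reduce to the case $W=Z\cap(\Delta^{n_1}\times\Delta^{n_2})$ with $Z$ a linear subspace; this set is a bounded polytope (a compact convex polytope, being the intersection of finitely many half-spaces and a subspace), hence the convex hull of finitely many extreme points; and a compact convex polytope containing $0$ is finitely generated as a \Pca\ by its vertices together with $0$ — this is exactly Minkowski's theorem (Weyl–Minkowski for polytopes), the same ingredient flagged for the $[0,1]$ case in the introduction. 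Assembling the three steps, the zig-zag is
\[
	\Delta^{n_1}\ \xleftarrow{}\ (\text{preprocessed})\ \hookrightarrow\ Y_1\ \xleftarrow{\pi_1}\ W\ \xrightarrow{\pi_2}\ Y_2\ \hookleftarrow\ (\text{preprocessed})\ \xrightarrow{}\ \Delta^{n_2},
\]
which, after absorbing the (iso or trivial) preprocessing morphisms, has three intermediate nodes $Y_1,W,Y_2$, all free and finitely generated, with the middle one $W$ forming a span — precisely the shape claimed. The one subtlety I would double-check is that the preprocessing step genuinely produces \emph{coalgebra} morphisms to honest free f.g.\ algebras and does not itself introduce an extra non-span node; if it does, it can be arranged to contribute only outgoing or only identity arrows, which by Remark~\ref{fg} can be folded into the outer embeddings without increasing the node count.
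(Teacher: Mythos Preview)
Your overall architecture is close to the paper's, and you correctly identify all the ingredients: the preprocessing to arrange the non-vanishing condition, the use of Theorem~\ref{ExEx} to produce free pyramid carriers, the linear-span construction of $Z$ from Lemma~\ref{Prod}, and Minkowski's theorem for finite generation of the middle node. However, the order in which you deploy these ingredients is inverted relative to the paper, and this inversion creates a genuine gap at your span step.

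The problem is the claim $\tilde d(W)\subseteq\Gh W$ for $W=Z\cap(Y_1\times Y_2)$. You justify it via $\Min{Y_1\times Y_2}=\max\{\Min{Y_1}\circ\pi_1,\Min{Y_2}\circ\pi_2\}$ together with inequality~\eqref{2} for each factor. But the Minkowski functional of a product is a \emph{maximum}, and the implication
\[
  d_o+\textstyle\sum_a A_a\le 1,\quad d_o+\textstyle\sum_a B_a\le 1
  \ \Longrightarrow\
  d_o+\textstyle\sum_a\max\{A_a,B_a\}\le 1
\]
is false in general (take $d_o=0$, $|A|=2$, $(A_a)=(1,0)$, $(B_a)=(0,1)$: the left sides are $1$, the right side is $2$). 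Nothing you have written rules out this configuration for elements of $Z$. The paper explicitly flags exactly this obstruction (``it is not granted that $Z\cap(\Delta^{n_1}\times\Delta^{n_2})$ becomes an $\Gh$-coalgebra''), and your $W$ reduces to precisely that set whenever the original coalgebras are tight enough that Theorem~\ref{ExEx} returns $Y_j=\Delta^{n_j}$.

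The paper's remedy is to build the middle node \emph{first}, replacing the product $Y_1\times Y_2$ by the scaled simplex $2\Delta^{n_1+n_2}$, whose Minkowski functional on the positive orthant is the \emph{average} $\frac12\bigl((x_1,\mathbf 1)+(x_2,\mathbf 1)\bigr)$. Summing the two instances of~\eqref{2} and halving then yields the required inequality (Lemma~\ref{GhatRestr}), so $Z\cap 2\Delta^{n_1+n_2}$ really is an $\Gh$-coalgebra. The price is that $\pi_j(Z\cap 2\Delta^{n_1+n_2})$ need not land in $\Delta^{n_j}$; the paper absorbs this by forming $Y_j=\co\bigl(\Delta^{n_j}\cup\pi_j(Z\cap 2\Delta^{n_1+n_2})\bigr)$ (Lemma~\ref{GhatNodes}) and only \emph{then} invoking Theorem~\ref{ExEx} on these $Y_j$ to obtain free finitely generated pyramids $U_j$ (Lemma~\ref{GhatFree}). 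Thus ExEx is applied \emph{after} the span is constructed, not before; with your ordering the span step is unproven.
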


We try to follow the proof scheme familiar from the cubic case. 
Assume we are given two $\Gh$-coalgebras with free finitely generated carrier, say $(\Delta^{n_1},c_1)$ and $(\Delta^{n_2},c_2)$, 
and elements $x_1\in\Delta^{n_1}$ and $x_2\in\Delta^{n_2}$ having the same trace. 
Since $\Gh\Delta^{n_j}\subseteq\mathbb R\times(\mathbb R^{n_j})^A$ we can apply Lemma~\ref{LinExt}
and obtain $\Cub{\mathbb R}$-coalgebras $(\mathbb R^{n_j},\tilde c_j)$ with $\tilde c_j|_{\Delta^{n_j}}=c_j$. 
This leads to the basic diagram:
\[
\xymatrix@C=15pt@R=35pt@M=7pt{
	\Delta^{n_1} \ar@{.>}[r] \ar@{.>}[d]_{c_1} & \mathbb R^{n_1} \ar[d]_{\tilde c_1}
	&& Z \ar[ll]_{\pi_1} \ar[rr]^{\pi_2} \ar[d]^d 
		\save[]+<3pt,10pt>*\txt{\begin{rotate}{90}$\subseteq$\end{rotate}}\restore
		\save[]+<2pt,27pt>*\txt{$\mathbb R^{n_1}\!\!\times\mathbb R^{n_2}$}\restore
	&& \mathbb R^{n_2} \ar[d]_{\tilde c_2} & \Delta^{n_2} \ar@{.>}[l] \ar@{.>}[d]^{c_2}
	\\
	\Gh \Delta^{n_1} \ar@{.>}[r] & \Cub{\mathbb R}\mathbb R^{n_1} 
	&& \Cub{\mathbb R}Z \ar[ll]_{\Cub{\mathbb R}\pi_1} \ar[rr]^{\Cub{\mathbb R}\pi_2}
		\save[]+<-3pt,-10pt>*\txt{\begin{rotate}{-90}$\subseteq$\end{rotate}}\restore
		\save[]+<2pt,-29pt>*\txt{$\mathbb R\!\!\times(\mathbb R^{n_1}\!\!\times\mathbb R^{n_2})^A$}\restore
	&& \Cub{\mathbb R}\mathbb R^{n_2} & \Gh \Delta^{n_2} \ar@{.>}[l]
}
\]
At this point the line of argument known from the cubic case breaks: 
it is \emph{not} granted that $Z\cap(\Delta^{n_1}\times \Delta^{n_2})$ becomes an $\Gh$-coalgebra with the restriction of $d$. 

The substitute for $Z\cap(\Delta^{n_1}\times \Delta^{n_2})$ suitable for proceeding one step further is given 
by the following lemma, where we tacitly identify $\mathbb R^{n_1}\times\mathbb R^{n_2}$ with $\mathbb R^{n_1+n_2}$. 

\begin{lemma}\label{GhatRestr}
	We have $d(Z\cap 2\Delta^{n_1+n_2})\subseteq\Gh(Z\cap 2\Delta^{n_1+n_2})$.
\end{lemma}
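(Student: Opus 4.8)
The plan is to use the characterisation of $\Gh$ as a sub-level set of Minkowski functionals (Lemma~\ref{Ghat-Minko}) together with the coalgebra-map criterion in Corollary~\ref{Ghat-Minko-Coalg}. Set $W=Z\cap 2\Delta^{n_1+n_2}$. The claim $d(W)\subseteq\Gh W$ is, by Corollary~\ref{Ghat-Minko-Coalg} applied with $X=Y=W$ and with $d$ the restriction of the linear map $\tilde d$ (the common linear extension of $d_1,d_2$, which exists on all of $\mathbb R^{n_1+n_2}$ since $d_1|_Z=d_2|_Z$), equivalent to the pointwise inequality
\begin{equation*}
	\Out{\tilde d}(z)+\sum_{a\in A}\Min{W}(\Comp{\tilde d}a(z))\leq\Min{W}(z),\quad z\in\mathbb R^{n_1+n_2}.
\end{equation*}
So the task reduces entirely to establishing this inequality for $W=Z\cap 2\Delta^{n_1+n_2}$.

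**Key steps.** First I would compute $\Min{W}$. By property~3 of Minkowski functionals, $\Min{W}=\max\{\Min{Z},\Min{2\Delta^{n_1+n_2}}\}$; since $Z$ is a linear subspace, $\Min{Z}(z)$ is $0$ for $z\in Z$ and $\infty$ otherwise, and $\Min{2\Delta^{n_1+n_2}}(z)=\tfrac12\Min{\Delta^{n_1+n_2}}(z)$ by property~1. Hence for $z\in Z$ we have $\Min{W}(z)=\tfrac12\Min{\Delta^{n_1+n_2}}(z)$, and $\Min{\Delta^{n_1+n_2}}$ is the simple formula from Example~\ref{MinkoExa} (sum of coordinates if $z\geq0$, else $\infty$). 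The next step is to unpack what we already know about $\tilde d=\tilde c_1\times\tilde c_2$ restricted appropriately: since each $c_j$ is a $\Gh$-coalgebra structure on $\Delta^{n_j}$, Corollary~\ref{Ghat-Minko-Coalg} gives $\Out{\tilde c_j}(y)+\sum_{a\in A}\Min{\Delta^{n_j}}(\Comp{\tilde c_j}a(y))\leq\Min{\Delta^{n_j}}(y)$ for all $y\in\mathbb R^{n_j}$. Writing $z=(y_1,y_2)\in Z$, the components of $\tilde d(z)$ are $\Out{\tilde d}(z)=\Out{\tilde c_j}(y_j)$ (well-defined: equal for $j=1,2$ on $Z$) and $\Comp{\tilde d}a(z)=(\Comp{\tilde c_1}a(y_1),\Comp{\tilde c_2}a(y_2))\in Z$. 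Now I would add the two coalgebra inequalities for $j=1,2$ and relate $\Min{\Delta^{n_1}}(y_1)+\Min{\Delta^{n_2}}(y_2)$ to $\Min{\Delta^{n_1+n_2}}(y_1,y_2)$: in fact these are equal when $(y_1,y_2)\geq 0$ and both $\infty$ otherwise, so summing gives exactly $2\Out{\tilde d}(z)+\sum_a\Min{\Delta^{n_1+n_2}}(\Comp{\tilde d}a(z))\leq\Min{\Delta^{n_1+n_2}}(z)$. Dividing by $2$ and using $\Out{\tilde d}\geq 0$ (true on $W$, where coordinates are nonnegative) together with $\Min{W}=\tfrac12\Min{\Delta^{n_1+n_2}}$ on $Z$ yields precisely the required inequality; off $Z$ both sides involving $\Min{W}$ of arguments are handled since $\Comp{\tilde d}a(z)\in Z$ keeps things finite while $\Min{W}(z)=\infty$, and for $z\notin Z$ the inequality is trivial.

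**Main obstacle.** The delicate point is the factor of $2$ and making sure $W$, not $\Delta^{n_1+n_2}$, is the right set on \emph{both} sides of Corollary~\ref{Ghat-Minko-Coalg}: one needs that for $z\in W$ the target components $\Comp{\tilde d}a(z)$ land in the dilated simplex scaled correctly, i.e.\ that the slack in the summed coalgebra inequalities is exactly what is consumed by passing from $\Delta$ to $2\Delta$. I expect the bookkeeping of which inequality is strict where (the nonnegativity constraints, the $\le 1$ versus $\le 2$ constraint) to be the part requiring care, and in particular checking compactness of $W$ so that Lemma~\ref{Ghat-Minko} and the closed-set form (property~6) of the Minkowski recovery apply — $W$ is closed and bounded as the intersection of the closed subspace $Z$ with the compact $2\Delta^{n_1+n_2}$, so this is fine. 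A secondary check is that $d|_W$ is genuinely the restriction of a \emph{linear} map, which is what licenses invoking Corollary~\ref{Ghat-Minko-Coalg}; this is where $d_1|_Z=d_2|_Z$ from Lemma~\ref{Prod} (in its $\mathbb E=\mathbb R$ incarnation) is essential.
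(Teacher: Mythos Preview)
Your proposal is correct and follows essentially the same approach as the paper: both sum the two coalgebra inequalities from Corollary~\ref{Ghat-Minko-Coalg} on $\Delta^{n_j}$, use that the outputs agree on $Z$, divide by $2$, and identify the result with the Minkowski-functional inequality for $W=Z\cap 2\Delta^{n_1+n_2}$. One small remark: the step ``using $\Out{\tilde d}\geq 0$'' is not needed to pass from the halved inequality to the target one (they coincide exactly, since $\Min W=\tfrac12\Min{\Delta^{n_1+n_2}}$ on $Z$); nonnegativity of the output is only required as a separate hypothesis when invoking Lemma~\ref{Ghat-Minko}.
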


This shows that $Z\cap 2\Delta^{n_1+n_2}$ becomes an $\Gh$-coalgebra with the restriction of $d$. 
Still, we cannot return to the usual line of argument: it is \emph{not} granted that 
$\pi_j(Z\cap 2\Delta^{n_1+n_2})\subseteq\Delta^{n_j}$. This forces us to introduce additional nodes to 
produce a zig-zag in $\CoCat{\Gh}$. These additional nodes are given by the following lemma.
There $\co( -)$ denotes the convex hull. 

\begin{lemma}\label{GhatNodes}
	Set $Y_j=\co(\Delta^{n_j}\cup\pi_j(Z\cap 2\Delta^{n_1+n_2}))$. Then $\tilde c_j(Y_j)\subseteq\Gh Y_j$.
\end{lemma}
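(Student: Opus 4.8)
The plan is to write $Y_j=\co(\Delta^{n_j}\cup\pi_j(W))$ with $W:=Z\cap 2\Delta^{n_1+n_2}$, to show that $\tilde c_j$ maps each of the two building blocks $\Delta^{n_j}$ and $\pi_j(W)$ into $\Gh Y_j$, and then to conclude for all of $Y_j$ by exploiting that $\tilde c_j$ is linear and that $\Gh Y_j$ is convex. First I would record the elementary facts that make the statement meaningful: $W$ is a \Pca\ (it is the intersection of the linear subspace $Z$ with the \Pca\ $2\Delta^{n_1+n_2}$), $\pi_j|_W\colon W\to\pi_j(W)$ is a \Pca-morphism, and hence $\pi_j(W)$ and $Y_j=\co(\Delta^{n_j}\cup\pi_j(W))$ are \Pca s with $\Delta^{n_j}\subseteq Y_j\subseteq\mathbb R_+^{n_j}$. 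In particular $\Gh Y_j$ is again a \Pca, hence convex, and the monotonicity of $\Gh$ (stated after Lemma~\ref{Ghat-simple}) is available throughout.

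The first building block is immediate: $\tilde c_j|_{\Delta^{n_j}}=c_j$ maps $\Delta^{n_j}$ into $\Gh\Delta^{n_j}$ by hypothesis, and $\Gh\Delta^{n_j}\subseteq\Gh Y_j$ by monotonicity. For the second building block I would use that $\pi_j$ is a $\Cub{\mathbb R}$-coalgebra morphism from $(Z,d)$ to $(\mathbb R^{n_j},\tilde c_j)$, i.e.\ $\Cub{\mathbb R}\pi_j\after d=\tilde c_j\after\pi_j$ on $Z$; this is read off directly from the explicit formula for $d$ (cf.\ Lemma~\ref{Prod} and the basic diagram). By Lemma~\ref{GhatRestr} we have $d(W)\subseteq\Gh W$. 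Since $\Gh$ is a subfunctor of $\Cub{[0,1]}$ and $\pi_j|_W$ is a \Pca-morphism $W\to\pi_j(W)$, the restriction $(\Cub{\mathbb R}\pi_j)|_{\Gh W}$ coincides with $\Gh(\pi_j|_W)$ and therefore carries $\Gh W$ into $\Gh(\pi_j(W))$. Putting these together, $\tilde c_j(\pi_j(W))=(\Cub{\mathbb R}\pi_j)\big(d(W)\big)\subseteq\Gh(\pi_j(W))\subseteq\Gh Y_j$, where the last inclusion is monotonicity of $\Gh$ applied to $\pi_j(W)\subseteq Y_j$.

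Finally, since $\tilde c_j$ is linear it commutes with taking convex hulls, so
\[
	\tilde c_j(Y_j)=\tilde c_j\big(\co(\Delta^{n_j}\cup\pi_j(W))\big)
	=\co\big(\tilde c_j(\Delta^{n_j})\cup\tilde c_j(\pi_j(W))\big)\subseteq\co(\Gh Y_j)=\Gh Y_j,
\]
the last equality because $\Gh Y_j$ is convex. This yields $\tilde c_j(Y_j)\subseteq\Gh Y_j$, and in particular $(Y_j,\tilde c_j|_{Y_j})$ is an $\Gh$-coalgebra. I expect the only genuinely delicate point to be the second building block: one has to take care that the linearly extended data $d$ and $\tilde c_j$, which a priori live among $\Cub{\mathbb R}$-coalgebras, interact correctly with the $\Gh$-subcoalgebra structures provided by Lemma~\ref{GhatRestr} — concretely, that restricting $\pi_j$ to $W$ and invoking the subfunctor/monotonicity compatibility of $\Gh$ is legitimate. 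Once that is set up, the coalgebra-morphism identity $\Cub{\mathbb R}\pi_j\after d=\tilde c_j\after\pi_j$ is a one-line computation from the definition of $d$, and the rest is bookkeeping with \Pca-morphisms and their images.
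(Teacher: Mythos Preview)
Your proposal is correct and follows exactly the paper's own argument: show $\tilde c_j(\Delta^{n_j})\subseteq\Gh\Delta^{n_j}\subseteq\Gh Y_j$ and $\tilde c_j(\pi_j(W))\subseteq\Gh(\pi_j(W))\subseteq\Gh Y_j$, then use that $\tilde c_j$ is linear and $\Gh Y_j$ convex to pass to the convex hull. The only difference is presentational---the paper compresses the second inclusion into the phrase ``using the basic diagram'', whereas you spell out that $\tilde c_j\circ\pi_j=(\Cub{\mathbb R}\pi_j)\circ d$ on $Z$, combine this with $d(W)\subseteq\Gh W$ from Lemma~\ref{GhatRestr}, and invoke the subfunctor/monotonicity properties of $\Gh$.
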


This shows that $Y_j$ becomes an $\Gh$-coalgebra with the restriction of $\tilde c_j$. 
We are led to a zig-zag in $\CoCat{\Gh}$:
\[
\xymatrix@C=25pt{
	(\Delta^{n_1},c_1) \ar[r]^{\subseteq} 
	& (Y_1,\tilde c_1) & \big(Z\cap 2\Delta^{n_1+n_2},d\big) \ar[l]_{\pi_1\mkern40mu} \ar[r]^{\mkern40mu\pi_2} 
	& (Y_2,\tilde c_2) & (\Delta^{n_2},c_2) \ar[l]_{\supseteq}
}
\]
This zig-zag relates $x_1$ and $x_2$ since $(x_1,x_2)\in Z\cap 2\Delta^{n_1+n_2}$. 

Using Minkowski's Theorem and the argument from Lemma~\ref{RedLemPca} (Appendix~\ref{app:B}) shows that 
the middle node has finitely generated carrier. The two nodes with incoming arrows 
are, as convex hulls of two finitely generated \Pca s, of course also finitely generated. 
But in general they will not be free (and this is essential, remember Remark~\ref{fg}). 
Now Theorem~\ref{ExEx} comes into play. 

\begin{lemma}\label{GhatFree}
	Assume that each of $(\Delta^{n_1},c_1)$ and $(\Delta^{n_2},c_2)$ satisfies the following condition:
	\begin{equation}\label{nv2}
		\begin{aligned}
		& \nexists\,I\subseteq\{1,\ldots,n\}\Deli
		\\
		& I\neq\emptyset,\ 
			\Out{c_j}(e_k)=0,k\in I,
			\ 
			\Comp{c_j}a(e_k)\subseteq\co(\{e_i\mid i\in I\}\cup\{0\}),a\in A,k\in I.
		\end{aligned}
	\end{equation}
	Then there exist free finitely generated \Pca s $U_j$ with $Y_j\subseteq U_j\subseteq\mathbb R_+^{n_j}$ which satisfy 
	$\tilde c_j(U_j)\subseteq\Gh U_j$.
\end{lemma}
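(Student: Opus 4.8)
The plan is to feed each $\Gh$-coalgebra $(Y_j,\tilde c_j|_{Y_j})$ — which is an $\Gh$-coalgebra by Lemma~\ref{GhatNodes} — into the extension theorem, Theorem~\ref{ExEx}, and then to identify the coalgebra structure it produces on the enlarged carrier with the linear extension $\tilde c_j$. First I would verify the geometric hypotheses of Theorem~\ref{ExEx} for $Y_j$. The inclusion $\Delta^{n_j}\subseteq Y_j$ is built into the definition $Y_j=\co(\Delta^{n_j}\cup\pi_j(Z\cap 2\Delta^{n_1+n_2}))$; the inclusion $Y_j\subseteq\mathbb R_+^{n_j}$ holds because $\Delta^{n_j}$ and $\pi_j(Z\cap 2\Delta^{n_1+n_2})$ both lie in the convex cone $\mathbb R_+^{n_j}$ (we have $2\Delta^{n_1+n_2}\subseteq\mathbb R_+^{n_1+n_2}$, and the projection of a nonnegative tuple is nonnegative). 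Compactness of $Y_j$ follows since $Z$ is a linear subspace, so $Z\cap 2\Delta^{n_1+n_2}$ is compact, its continuous image $\pi_j(Z\cap 2\Delta^{n_1+n_2})$ is compact, and the convex hull of the union of two compact subsets of $\mathbb R^{n_j}$ is again compact.

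The substantive point is checking the non-vanishing condition \eqref{nv} for $(Y_j,\tilde c_j|_{Y_j})$. Since $\Delta^{n_j}\subseteq Y_j$ the canonical basis vectors $e_k$ lie in $Y_j$, and since $\tilde c_j$ extends $c_j$ we have $\Out{\tilde c_j}(e_k)=\Out{c_j}(e_k)$ and $\Comp{\tilde c_j}a(e_k)=\Comp{c_j}a(e_k)$ for all $k$ and all $a\in A$. As $c_j(e_k)\in\Gh\Delta^{n_j}\subseteq[0,1]\times(\Delta^{n_j})^A$ (cf.\ Lemma~\ref{Ghat-simple}), each successor $\Comp{c_j}a(e_k)$ already belongs to $\Delta^{n_j}$. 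Therefore, for any $I\subseteq\{1,\dots,n_j\}$, the membership $\Comp{c_j}a(e_k)\in\spn\{e_i\mid i\in I\}$ is equivalent to $\Comp{c_j}a(e_k)\in\Delta^{n_j}\cap\spn\{e_i\mid i\in I\}=\co(\{e_i\mid i\in I\}\cup\{0\})$. Consequently condition \eqref{nv} for $(Y_j,\tilde c_j|_{Y_j})$ coincides with the assumed condition \eqref{nv2} for $(\Delta^{n_j},c_j)$, so it holds.

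Theorem~\ref{ExEx} now supplies, for each $j$, an $\Gh$-coalgebra $(U_j,d_j)$ with $Y_j\subseteq U_j\subseteq\mathbb R_+^{n_j}$, with the inclusion $Y_j\hookrightarrow U_j$ a $\CoCat{\Gh}$-morphism, and with $U_j$ the subconvex hull of $n_j$ linearly independent vectors — in particular free and finitely generated. It then remains to show $d_j=\tilde c_j|_{U_j}$, which immediately gives $\tilde c_j(U_j)=d_j(U_j)\subseteq\Gh U_j$. For this I would extend the convex map $d_j\colon U_j\to\mathbb R\times(\mathbb R^{n_j})^A$ to a linear map $\tilde d_j\colon\mathbb R^{n_j}\to\mathbb R\times(\mathbb R^{n_j})^A$ via Lemma~\ref{LinExt}; this extension is unique because $U_j\supseteq\Delta^{n_j}$ spans $\mathbb R^{n_j}$. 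Since the inclusion $\Delta^{n_j}\hookrightarrow U_j$ is a $\CoCat{\Gh}$-morphism we get $d_j|_{\Delta^{n_j}}=c_j$, so $\tilde d_j$ is in particular a linear extension of $c_j$; the uniqueness clause of Lemma~\ref{LinExt} forces $\tilde d_j=\tilde c_j$, and restricting back to $U_j$ yields $d_j=\tilde c_j|_{U_j}$.

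I expect the main obstacle to be the second step: recognising that the hypothesis \eqref{nv2}, phrased with the sub-simplices $\co(\{e_i\mid i\in I\}\cup\{0\})$, is precisely what is needed to trigger condition \eqref{nv} of Theorem~\ref{ExEx}, phrased with the linear spans $\spn\{e_i\mid i\in I\}$. The reconciliation hinges on the observation that under any $\Gh$-coalgebra structure on $\Delta^{n_j}$ the $a$-successors of the corner points lie back in $\Delta^{n_j}$; the remaining arguments are routine convex-geometric bookkeeping together with the uniqueness of linear extensions.
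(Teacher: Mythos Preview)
Your proposal is correct and follows essentially the same route as the paper: verify the geometric hypotheses of Theorem~\ref{ExEx} for $(Y_j,\tilde c_j|_{Y_j})$, translate condition~\eqref{nv2} into condition~\eqref{nv} via the identity $\Delta^{n_j}\cap\spn\{e_i\mid i\in I\}=\co(\{e_i\mid i\in I\}\cup\{0\})$, and then apply the extension theorem. Your additional step identifying the resulting structure $d_j$ with $\tilde c_j|_{U_j}$ via uniqueness of linear extensions is a detail the paper leaves implicit (in the proof of Theorem~\ref{ExEx} the new structure is by construction the restriction of the given linear extension), but making it explicit is harmless and arguably clearer.
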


This shows that $U_j$, under the additional assumption \eqref{nv2} on $(\Delta^{n_j},c_j)$, 
becomes an $\Gh$-coalgebra with the restriction of $\tilde c_j$. 
Thus we have a zig-zag in $\CoCat{\Gh}$ relating $x_1$ and $x_2$ whose nodes with incoming arrows are free and finitely generated, 
and whose node with outgoing arrows is finitely generated:
\[
\xymatrix@C=25pt{
	(\Delta^{n_1},c_1) \ar@{.>}[r]^{\subseteq} \ar[rd]
	& (Y_1,\tilde c_1) \ar@{.>}[d]
		\save[]+<6pt,-15pt>*\txt{\begin{rotate}{-90}$\scriptstyle\subseteq$\end{rotate}}\restore
	& \big(Z\cap 2\Delta^{n_1+n_2},d\big) \ar@{.>}[l]_{\pi_1\mkern40mu} \ar@{.>}[r]^{\mkern40mu\pi_2} \ar[ld] \ar[rd]
	& (Y_2,\tilde c_2) \ar@{.>}[d]
		\save[]+<-9pt,-15pt>*\txt{\begin{rotate}{-90}$\scriptstyle\subseteq$\end{rotate}}\restore
	& (\Delta^{n_2},c_2) \ar@{.>}[l]_{\supseteq} \ar[ld]
	\\
	& (U_1,\tilde c_1)
	&& (U_2,\tilde c_2) &
}
\]
Removing the additional assumption on $(\Delta^{n_j},c_j)$ is an easy exercise.

\begin{lemma}\label{GhatRedLem}
	Let $(\Delta^n,c)$ be an $\Gh$-coalgebra. 
	Assume that $I$ is a nonempty subset of $\{1,\ldots,n\}$ with 
	\begin{equation}\label{6}
		\Out c(e_k)=0,\ k\in I\quad\text{and}\quad 
		\Comp ca(e_k)\in\co\big(\{e_i\mid i\in I\}\cup\{0\}\big),\ a\in A,k\in I.
	\end{equation}
	Let $X$ be the free \Pca\ with basis $\{e_k\mid k\in\{1,\ldots,n\}\setminus I\}$, and let $f\colon \Delta^n\to X$ 
	be the \Pca-morphism with 
	\[
		f(e_k)=
		\begin{cases}
			0 &\hspace*{-3mm},\quad k\in I,
			\\
			e_k &\hspace*{-3mm},\quad k\not\in I.
		\end{cases}
	\]
	Further, let $g\colon X\to[0,1]\times X^A$ be the \Pca-morphism with 
	\[
		g(e_k)=\Big(\Out c(e_k),\big(f(\Comp ca(e_k))\big)_{a\in A}\Big),\quad k\in\{1,\ldots,n\}\setminus I
		.
	\]
	Then $(X,g)$ is an $\Gh$-coalgebra, and $f$ is an $\Gh$-coalgebra morphism of $(\Delta^n,c)$ onto 
	$(X,g)$. 
\end{lemma}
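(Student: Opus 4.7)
The plan is to verify, in the following order: (i) that $f$ and $g$ are well-defined \Pca-morphisms, (ii) that the square
\[
\xymatrix@R=0.7em{
\Delta^n \ar[r]^f \ar[d]_c & X \ar[d]^g\\
\Gh\Delta^n \ar[r]^{\Gh f} & \Gh X
}
\]
commutes, and then (iii) use surjectivity of $f$ to conclude that $g$ in fact factors through $\Gh X$, so $(X,g)$ is an $\Gh$-coalgebra and the commuting square identifies $f$ as the required morphism.

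For (i), $\Delta^n$ and $X$ are free in \Pca\ (on $\{e_1,\ldots,e_n\}$ and on $\{e_k\mid k\notin I\}$ respectively), so both $f$ and $g$ exist and are uniquely determined as \Pca-morphisms by the prescribed values on generators. Surjectivity of $f$ onto $X$ is immediate, since $f$ is the identity on the generating set of $X$.

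The real work is (ii). Both $g\circ f$ and $\Gh f\circ c$ are \Pca-morphisms $\Delta^n\to[0,1]\times X^A$ (observe that $\Gh f$ is the restriction of the \Pca-morphism $\Cub{[0,1]}f=\id_{[0,1]}\times(f\circ -)$), so it suffices to compare them on the generators $e_1,\ldots,e_n$. For $k\notin I$ both composites equal $\bigl(\Out c(e_k),(f(\Comp ca(e_k)))_{a\in A}\bigr)$ by the definition of $g$ and $f(e_k)=e_k$. For $k\in I$, the left-hand composite gives $g(0)=(0,(0)_{a\in A})$ because $g$ is a \Pca-morphism, and this is exactly where \eqref{6} is used: the first clause $\Out c(e_k)=0$ handles the output coordinate, while the second clause places $\Comp ca(e_k)$ in $\co(\{e_i\mid i\in I\}\cup\{0\})$, a subconvex combination of elements all of which $f$ sends to $0$, hence $f(\Comp ca(e_k))=0$.

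Once the square commutes, (iii) is formal: for any $y\in X$, surjectivity of $f$ yields $y=f(z)$ for some $z\in\Delta^n$, and then $g(y)=g(f(z))=\Gh f(c(z))\in\Gh X$ since $c(z)\in\Gh\Delta^n$ and $\Gh f$ maps $\Gh\Delta^n$ into $\Gh X$. Thus $g(X)\subseteq\Gh X$, confirming that $(X,g)$ is an $\Gh$-coalgebra, and the commuting square is precisely the statement that $f$ is an $\Gh$-coalgebra morphism, which is moreover onto. The only delicate point in the whole argument is the verification of the square at the generators $e_k$ with $k\in I$, but both clauses of \eqref{6} are tailored exactly to this situation, so no genuine obstacle arises.
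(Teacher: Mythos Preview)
Your proof is correct and follows essentially the same approach as the paper: check the square on generators (using \eqref{6} for $k\in I$), then use that $\Gh f$ maps $\Gh\Delta^n$ into $\Gh X$ together with surjectivity of $f$ to conclude $g(X)\subseteq\Gh X$. The only cosmetic quibble is that your displayed square places $\Gh X$ in the lower-right corner before you have established that $g$ lands there; the paper writes $[0,1]\times X^A$ in that position, which matches the logical order of the argument you yourself follow in the text.
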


\begin{corollary}\label{GhatRedCor}
	Let $(\Delta^n,c)$ be an $\Gh$-coalgebra. Then there exists $k\leq n$, an $\Gh$-coalgebra $(\Delta^k,g)$, 
	such that $(\Delta^k,g)$ satisfies the assumption in Lemma~\ref{GhatFree} and such that there exists an 
	$\Gh$-coalgebra map $f$ of $(\Delta^n,c)$ onto $(\Delta^k,g)$. 
\end{corollary}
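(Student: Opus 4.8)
The plan is to reduce, by finitely many applications of Lemma~\ref{GhatRedLem}, to the situation covered by Lemma~\ref{GhatFree}. The first observation is that, for an $\Gh$-coalgebra $(\Delta^m,h)$, the condition \eqref{nv2} (read with $n$ replaced by $m$) holds if and only if there is \emph{no} nonempty $I\subseteq\{1,\ldots,m\}$ satisfying \eqref{6}. Hence, if $(\Delta^n,c)$ does not already satisfy the assumption of Lemma~\ref{GhatFree}, I would fix a nonempty $I$ with \eqref{6} and invoke Lemma~\ref{GhatRedLem}: this yields a free \Pca\ $X$ on $|\{1,\ldots,n\}\setminus I|$ generators --- which, after relabelling the basis, is $\Delta^{n-|I|}$ --- an $\Gh$-coalgebra structure $g$ on it, and a surjective $\Gh$-coalgebra morphism $f\colon(\Delta^n,c)\to(\Delta^{n-|I|},g)$.

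Next I would iterate. Since $I\neq\emptyset$, the number of generators strictly drops, so setting $(\Delta^{k_0},g_0)=(\Delta^n,c)$ and, as long as $(\Delta^{k_i},g_i)$ fails \eqref{nv2}, producing $(\Delta^{k_{i+1}},g_{i+1})$ with $k_{i+1}<k_i$ together with a surjective $\Gh$-coalgebra morphism $f_i\colon(\Delta^{k_i},g_i)\to(\Delta^{k_{i+1}},g_{i+1})$, gives a strictly decreasing sequence $k_0>k_1>\cdots$ of non-negative integers. Therefore the process stops after finitely many steps, say $N$, necessarily at a coalgebra that satisfies \eqref{nv2} --- in the extreme case $k_N=0$ this holds vacuously, as there is no nonempty subset of $\emptyset$. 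Put $k=k_N\le n$ and $g=g_N$; then $(\Delta^k,g)$ satisfies the assumption of Lemma~\ref{GhatFree} by construction.

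Finally I would assemble the morphism as the composite $f=f_{N-1}\after\cdots\after f_0\colon(\Delta^n,c)\to(\Delta^k,g)$ (with $f=\id_{\Delta^n}$ if $N=0$). Being a composite of $\Gh$-coalgebra morphisms it is an $\Gh$-coalgebra morphism, and being a composite of surjections it is onto $\Delta^k$; this is exactly the assertion of the corollary.

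The argument is a terminating induction, so there is no genuinely hard step: the whole substance has already been carried in Lemma~\ref{GhatRedLem} (and, further upstream, in Theorem~\ref{ExEx} and Lemma~\ref{GhatFree}), which I would simply cite. The one point that needs a little care is the bookkeeping in identifying the reduced carrier $X$ of Lemma~\ref{GhatRedLem} with a standard simplex $\Delta^{k}$ and in checking that the failure of \eqref{nv2} is unaffected by this reindexing --- which it is, the condition being symmetric under permutations of the coordinates.
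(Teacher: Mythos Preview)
Your proposal is correct and follows exactly the paper's approach: repeatedly apply Lemma~\ref{GhatRedLem}, observe that the number of generators strictly decreases at each step so the process terminates, and compose the resulting coalgebra morphisms. Your write-up is in fact more detailed than the paper's, which simply notes the termination and the existence of the composite morphism without spelling out the surjectivity, the reindexing, or the $k=0$ edge case.
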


The proof of Theorem~\ref{GhatProp} is now finished by putting together what we showed so far.
Starting with $\Gh$-coalgebras $(\Delta^{n_j},c_j)$ without any additional assumptions, and elements $x_j\in\Delta^{n_j}$ 
having the same trace, we first reduce by means of Corollary~\ref{GhatRedCor} and then apply Lemma~\ref{GhatFree}. 
This gives a zig-zag as required:
\[
\xymatrix@C=17pt@M=3pt{
	(\Delta^{n_1},c_1) \ar@{.>}[d]_{\psi_1} \ar[rd]
	&& \big(Z\cap 2\Delta^{k_1+k_2},d\big) \ar[ld] \ar[rd] &&
	(\Delta^{n_2},c_2) \ar@{.>}[d]_{\psi_2} \ar[ld]
	\\
	(\Delta^{k_1},g_1) \ar@{.>}[r]
	& (U_1,\tilde g_1) & 
	& (U_2,\tilde g_2) & (\Delta^{k_2},g_2) \ar@{.>}[l]
}
\]
and completes the proof of properness of $\Gh$.


%
%

%
%

\def\cprime{$'$} \def\cprime{$'$} \def\cprime{$'$} \def\cprime{$'$}
  \def\cprime{$'$} \def\cprime{$'$} \def\cprime{$'$}
  \def\polhk#1{\setbox0=\hbox{#1}{\ooalign{\hidewidth
  \lower1.5ex\hbox{`}\hidewidth\crcr\unhbox0}}} \def\cprime{$'$}
  \def\cprime{$'$} \def\cprime{$'$} \def\cprime{$'$} \def\cprime{$'$}
  \def\cprime{$'$} \def\cprime{$'$} \def\cprime{$'$} \def\cprime{$'$}
  \def\cprime{$'$} \def\cprime{$'$} \def\cprime{$'$} \def\cprime{$'$}
  \def\cprime{$'$} \def\cprime{$'$}


\newpage
\appendix
\makeatletter
\renewcommand{\@seccntformat}[1]{\@nameuse{the#1}.\quad}
\makeatother
\renewcommand{\thelemma}{\Alph{section}.\arabic{lemma}}
\section{Category theory basics}
\label{sec:app-basics}
%

%
We start by recalling the basic notions of category, functor and natural transformation, so that all of the results in the paper are accessible also to non-experts.

A category $\cat C$ is a collection of objects and a collection of arrows (or morphisms) from one object to another. For every object $X \in \cat C$, there is an identity arrow $\id_X\colon X \to X$. For any three objects $X, Y, Z \in \cat C$, given two arrows $f\colon X \to Y$ and $g\colon Y\to Z$, there exists an arrow $g \after f\colon X \to Z$. Arrow composition is associative and $\id_X$ is neutral w.r.t. composition. 
The standard example is $\Sets$, the category of sets and functions. 

A functor $F$ from a category $\cat C$ to a category $\cat D$, notation $F\colon \cat C \to \cat D$, assigns to every object $X\in \cat C$, an object $FX \in \cat D$, and to every arrow $f\colon X \to Y$ in $\cat C$ an arrow $Ff\colon FX \to FY$ in $\cat D$ such that identity arrows and composition are preserved.

A concrete category is a category $\cat C$ equipped with a faithful functor $\mathcal U\colon \cat C \to \Sets$. Intuitively, a concrete category has objects that are sets with some additional structure, e.g. algebras, and morphisms that are particular kind of functions, and $\mathcal U$ is a canonical forgetful functor. All categories that we consider are algebraic and hence concrete. 

\begin{wrapfigure}{r}{.1\textwidth}\vspace*{-.95cm}
\xymatrix@R-.5pc{
FX\ar[d]^-{\sigma_X}\ar[r]^-{Ff} & FY\ar[d]^-{\sigma_Y}\\
GX\ar[r]^-{Gf} & GY
}\vspace*{-.55cm}
\end{wrapfigure}

Let $F\colon \cat C \to \cat D$ and $G\colon \cat C \to \cat D$ be two functors. A natural transformation $\sigma\colon F \Rightarrow G$ is a family of arrows $\sigma_X\colon FX \to GX$ in $\cat D$ such that the diagram on the right commutes for all arrows $f\colon X \to Y$. 

\subsection{Monads and Algebras}\label{sec-app:monads}

A monad is a functor $T\colon\cat{C} \rightarrow \cat{C}$ together with two
natural transformations: a unit $\eta\colon \id_{\cat{C}}
\Rightarrow T$ and multiplication $\mu \colon T^{2} \Rightarrow
T$. These are required to make the following diagrams commute, for
$X\in\cat{C}$.
$$\xymatrix@C-.5pc@R-.5pc{
T X\ar[rr]^-{\eta_{T X}}\ar@{=}[drr] & & T^{2}X\ar[d]^{\mu_X} & &
   T X\ar[ll]_{T\eta_{X}}\ar@{=}[dll] 
&  & T^{3}X\ar[rr]^-{\mu_{T X}}\ar[d]_{T\mu_{X}} 
   & & T^{2}X\ar[d]^{\mu_X} \\
& & T X & &
& &
T^{2}X\ar[rr]_{\mu_X} & & T X
}$$
Given two monads $S,T$ with units and multiplications $\eta^S,\eta^T$ and $\mu^S,\mu^T$, respectively, and 
a natural transformation $\iota\colon S\Rightarrow T$, we say that $\iota$ is a monad morphism, and $S\leq T$ along $\iota$, 
if $\eta^T=\iota\circ\eta^S$ and $\iota\circ\mu^S=\mu^T\circ\iota\iota$
where $\iota\iota \stackrel{\text{def}}{=}T\iota\after \iota \stackrel{\text{nat.}}{=} \iota \after S\iota$.

We briefly describe some examples of monads on $\Sets$.
\begin{itemize}
\item The finitely supported subprobability distribution monad $\Dis$ is defined, for a set $X$ and a function $f\colon X \to Y$,  as
$$\Dis X
\,\,\, = \,\,\,
\{\varphi\colon X \to [0,1] \mid \sum_{x \in X} \varphi(x) \le 1,\, \supp(\varphi) \text{~is~finite}\}$$
and $$\Dis f(\varphi)(y)
\,\,\, = \,\,\,
\sum\limits_{x\in f^{-1}(\{y\})} \varphi(x).$$ 
Here and below $\supp(\varphi) = \{x \in X \mid \varphi(x) \neq 0\}$.
The unit of $\Dis$ is given by a Dirac
distribution $\eta_X(x) = \delta_x = ( x \mapsto 1)$ for $x \in X$ and
the multiplication by $\mu_X(\Phi)(x) = \sum\limits_{\varphi \in \supp(\Phi)}
\Phi(\varphi)\cdot \varphi(x)$ for $\Phi \in \Dis\Dis X$. 
\item For a semiring $\mathbb S$ the $\mathbb S$-valuations monad $\Ts$ is defined as 
	$\Ts X = \{\varphi\colon X \to \mathbb S \mid \supp(\varphi) \text{ is finite}\}$ and on functions 
	$f\colon X \to Y$ we have $\Ts f(\varphi)(y) = \sum_{x \in f^{-1}(\{y\})} \varphi(x)$. 
	Its unit is given by $\eta_X(x) = ( x \mapsto 1)$ and multiplication by 
	$\mu_X(\Phi)(x) = \sum_{\varphi \in \supp \Phi} \Phi(\varphi) \cdot \varphi(x)$ for $\Phi \in \Ts\Ts X$.
\item To illustrate the connection between $\Dis$ and $\Ts$, consider yet another monad: For a semiring $\mathbb S$, and 
	a (suitable) subset $S \subseteq \mathbb S$, the ($\mathbb S, S$)-valuations monad $T_{\mathbb S, S}$ is defined as follows. 
	On objects it acts like 
	$$T_{\mathbb S,S}X = \{\varphi\colon X \to \mathbb S \mid \supp(\varphi) \text{ is finite and } 
	\sum_{x \in X} \varphi(x) \in S\}$$ 
	on functions it acts like $\Ts$. The unit and multiplication are defined as in $\Ts$. 
	Note that $\Dis = T_{\mathbb R_+,[0,1]}$. 
\end{itemize}

With a monad $T$ on a category $\cat{C}$ one associates the  Eilenberg-Moore category
$\cat{C}^{T}$ of Eilenberg-Moore algebras. Objects of
$\cat{C}^{T}$ are pairs $\alg A = (A, \alpha)$ of an object $A \in \cat C$ and an arrow
$\alpha\colon T A \rightarrow A$, making the first two
diagrams below commute. 
$$\xymatrix@R-.5pc{
A\ar@{=}[dr]\ar[r]^-{\eta_A} & T A\ar[d]^{\alpha}
& 
T^{2}A\ar[d]_{\mu_A}\ar[r]^-{T\alpha} & T A\ar[d]^{\alpha}
& &
T A\ar[d]_{\alpha}\ar[r]^-{T h} & T B\ar[d]^{\beta} \\
& A 
&
T A\ar[r]_-{\alpha} & A
& &
A\ar[r]_-{h} & B
}$$

\noindent A homomorphism from an algebra $\alg A = (A, \alpha)$ to an algebra $\alg B = (B, b)$ is a map $h\colon
A\rightarrow B$ in $\cat{C}$ between the underlying objects making the
diagram above on the right commute.  

From now on fix $\cat{C}=\Sets$.
A free Eilenberg-Moore algebra for a monad $T$ generated by $X$ is $(TX, \mu_X)$ and we will often denote it simply by $TX$. 
A free finitely generated Eilenberg-Moore algebra for $T$ is an algebra $TX$ with $X$ a finite set. The diagram in the middle thus
says that the map $\alpha$ is a homomorphism from $T A$ to $\alg A$.

Indeed, $(TX,\mu_X)$ is free on $X$ as for every $T$-algebra $\alg A = (A, \alpha)$ and any $\Sets$-morphism $f\colon X \to A$ there is a unique $\EM{T}$-morphism $f^\#\colon (TX,\mu_X) \to \mathbb A$ such that $f^\# \after \eta_A = f$ --- it is easy to see that this unique extension $f^\#$ is the Kleisli extension of $f$, i.e., $f^\# = \alpha \after Tf$.  Moreover, note that $(f^\# \after \eta_A)^\# = f^\#$, by the uniqueness of the extension.

\section{Proof details for properness of cubic functors}\label{app:B}

\begin{Proof}[of Lemma~\ref{Prod}]
	Since $\tr_{c_1} x_1=\tr_{c_2} x_2$, we have 
	\[
		\Out{c_1}(\Comp{c_1}w(x_1))=[\tr_{c_1} x_1](w)=[\tr_{c_2} x_2](w)=\Out{c_2}(\Comp{c_2}w(x_2)),\quad w\in A^*,
	\]
	and therefore $d_1|_Z=d_2|_Z$. Moreover, 
	\[
		\Comp{c_j}a(\Comp{c_j}w(x_j))=\Comp{c_j}{wa}(x_j),\quad w\in A^*,
	\]
	and therefore $d_j(Z)\subseteq\mathbb S\times Z^A$.
\end{Proof}

\begin{Proof}[of Corollary~\ref{Noetherian}]
	Remembering Remark~\ref{ProperSemiring}, we have to show that the functor $\Cub{\mathbb S}$ is proper. 
	We have the zig-zag \eqref{ZZ}, and the $\mathbb S$-semimodule $Z$ is, 
	as a subsemimodule of the finitely generated $\mathbb S$-semimodule 
	$\mathbb S^{n_1}\times\mathbb S^{n_2}$, itself finitely generated. 
\end{Proof}

\subsection{Proof of the extension lemma}

The proof of the extension lemma follows directly from the following two abstract properties.

\begin{lemma}\label{lem:ext-lem-1}
	Assume $\Ts \le \Te$ via $\iota\colon \Ts \Rightarrow \Te$ and let $X$ be a finite set. Let $\alg{Y} \in \EM{\Ts}$ and $\alg{Z} \in \EM{\Te}$ and assume we are given an arrow $a_Y\colon \Ts X \to \alg{Y}$ in $\EM{\Ts}$ and a $\Ts \le \Te$-homomorphism  $h\colon \alg{Y} \to \alg{Z}$. Then there exists an arrow $a_Z\colon \Te X \to \alg{Z}$ in $\EM{\Te}$ making the following diagram commute. 
		$$\xymatrix@R=0.7em{
	\Ts X \ar[rr]^{\iota} \ar[d]_{a_Y}&& \Te X \ar[d]^{a_Z}\\
	\alg{Y} \ar[rr]^{h}&& \alg{Z}
	}$$
\end{lemma}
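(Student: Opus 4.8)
The plan is to exploit the freeness of $\Ts X$ on $X$ and then the freeness of $\Te X$ on $X$, bridging the two by the fact that $h$ is a $\Ts \le \Te$-homomorphism. First I would produce a $\Sets$-map $f\colon X \to Z$ by setting $f = h \after a_Y \after \eta^{\mathbb S}_X$, i.e., restricting the composite $h \after a_Y$ along the unit of $\Ts$ at the finite set $X$. Since $(\Te X,\mu^{\mathbb E}_X)$ is the free $\Te$-algebra on $X$ and $\alg Z \in \EM{\Te}$, there is a unique $\EM{\Te}$-morphism $a_Z\colon \Te X \to \alg Z$ with $a_Z \after \eta^{\mathbb E}_X = f$; this is the Kleisli extension $a_Z = \alpha_Z \after \Te f$, where $\alpha_Z$ is the structure map of $\alg Z$. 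This defines the arrow $a_Z$; it remains to verify that the square commutes, i.e., that $a_Z \after \iota_X = h \after a_Y$ as $\Sets$-maps $\Ts X \to Z$.

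To check commutativity I would argue that both sides are $\Ts \le \Te$-homomorphisms from $(\Ts X,\mu^{\mathbb S}_X)$ to $\alg Z$ (equivalently, $\EM{\Ts}$-morphisms into $M_\iota \alg Z$) that agree after precomposition with $\eta^{\mathbb S}_X$, and then invoke uniqueness of such extensions. Concretely: $a_Y$ is an $\EM{\Ts}$-morphism and $h$ is a $\Ts\le\Te$-homomorphism, so $h \after a_Y$ is an $\EM{\Ts}$-morphism from $\Ts X$ to $M_\iota\alg Z$. On the other side, $a_Z$ is an $\EM{\Te}$-morphism, hence $M_\iota a_Z = a_Z$ is an $\EM{\Ts}$-morphism from $M_\iota\Te X$ to $M_\iota\alg Z$, and $\iota_X\colon \Ts X \to M_\iota\Te X$ is an $\EM{\Ts}$-morphism because $\iota$ is a monad morphism (this is exactly the identity $\iota \after \mu^{\mathbb S} = \mu^{\mathbb E} \after \iota\iota$, which says $\iota_X$ intertwines the $\Ts$-algebra structures $\mu^{\mathbb S}_X$ and $\iota_X \after \mu^{\mathbb E}_X$). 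Thus $a_Z \after \iota_X$ is also an $\EM{\Ts}$-morphism $\Ts X \to M_\iota \alg Z$. Now compute the restrictions along $\eta^{\mathbb S}_X$: on the one hand $(h \after a_Y) \after \eta^{\mathbb S}_X = f$ by definition of $f$; on the other hand $a_Z \after \iota_X \after \eta^{\mathbb S}_X = a_Z \after \eta^{\mathbb E}_X = f$, using $\eta^{\mathbb E} = \iota \after \eta^{\mathbb S}$ and $a_Z \after \eta^{\mathbb E}_X = f$. Since $\Ts X$ is free on $X$, any two $\EM{\Ts}$-morphisms out of it that agree after $\eta^{\mathbb S}_X$ coincide, so $a_Z \after \iota_X = h \after a_Y$, as desired.

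The only genuinely delicate point is the bookkeeping of which maps are morphisms in $\EM{\Ts}$ versus $\EM{\Te}$, and in particular that $\iota_X$ is an $\EM{\Ts}$-morphism $(\Ts X,\mu^{\mathbb S}_X) \to M_\iota(\Te X,\mu^{\mathbb E}_X)$ — this is where the monad-morphism axiom for $\iota$ is used and it is the crux that makes the uniqueness argument applicable. Everything else is a routine application of the universal property of free algebras together with naturality of $\eta$ and $\iota$. I would write the commutativity verification either as the two-step diagram chase above or, equivalently and more slickly, by noting $M_\iota$ is a functor commuting with the forgetful functors and applying the uniqueness clause of freeness of $\Ts X$ in $\EM{\Ts}$ directly to $M_\iota\alg Z$.
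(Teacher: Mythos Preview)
Your proposal is correct and follows essentially the same argument as the paper: define $a_Z$ as the $\Te$-Kleisli extension of $h\circ a_Y\circ\eta^{\mathbb S}_X$, observe that all four edges of the square are $\EM{\Ts}$-morphisms (using that $\iota_X$ is one by the monad-morphism law and that $h$ is one by definition of a $\Ts\le\Te$-homomorphism), and conclude by the uniqueness clause of the universal property of $\Ts X$ after checking agreement on $\eta^{\mathbb S}_X$ via $\eta^{\mathbb E}=\iota\circ\eta^{\mathbb S}$. The paper's write-up is terser but the structure and key steps are identical.
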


\begin{proof}
	Consider the map $h \after a_Y \after \eta_{\mathbb S,X}\colon X \to Z$. Let $a_Z =(h \after a_Y \after \eta_{\mathbb S,X})^{\#_{\mathbb{E}}}$. 

All morphisms in the square are $\EM{\Ts}$-morphisms: $\alpha_Y$ by definition; $\iota_x$ as one of the monad morphism laws shows this; $h$ as it is a $\Ts \le \Te$-homomorphism; and $\alpha_Z = M_\iota(\alpha_Z)$. Clearly, then $h \after \alpha_Y$ and $\alpha_Z \after \iota_X$ are $\EM{\Ts}$-morphisms from the free algebra $(\Ts X, \mu_X)$ to $\alg Z$. 

Therefore, for the commutativity of the square it suffices to show that $$h \after \alpha_Y \after \eta_{\mathbb S} = \alpha_Z \after \iota_X \after \eta_{\mathbb S}$$ as then, by the uniqueness of the extension, $$h \after \alpha_Y = (h\after \alpha_Y \after \eta_{\mathbb S})^{\#_{\mathbb{S}}} = 
(\alpha_Z \after \iota_X \after \eta_{\mathbb S})^{\#_{\mathbb{S}}} = \alpha_Z \after \iota_X.$$

The last needed equality follows because $\Ts \le \Te$ along $\iota$ and so
$$(h \after a_Y \after \eta_{\mathbb S,X})^{\#_{\mathbb{E}}} \after \iota_X \after \eta_{\mathbb{S},X} = (h \after a_Y \after \eta_{\mathbb S,X})^{\#_{\mathbb{E}}} \after \eta_{\mathbb E,X} = h \after a_Y \after \eta_{\mathbb S,X}. \vspace*{-7mm}$$\qed 
\end{proof}

\begin{lemma}\label{lem:ext-lem-2}
The map $\iota \times \iota^A$ is a $\Ts\le\Te$-homomorphism from $\Cub{\mathbb{S}}(\Ts X)$ to $\Cub{\mathbb{E}}(\Te X)$.
\end{lemma}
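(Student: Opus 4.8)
The plan is to verify the single commuting diagram that defines a $\Ts\le\Te$-homomorphism for the map $\iota_1\times(\iota_X)^A\colon \Cub{\mathbb S}(\Ts X)\to\Cub{\mathbb E}(\Te X)$. Recall that $\Cub{\mathbb S}(\Ts X)=\mathbb S\times(\Ts X)^A=\Ts 1\times(\Ts X)^A$ carries its canonical $\Ts$-algebra structure as a product of free $\Ts$-algebras, namely $\mu_1\times(\mu_X)^A$ precomposed with the canonical distribution of $\Ts$ over the finite product; similarly on the $\Te$-side. So what must be checked is that the outer square
\[
\xymatrix@R=1.2em@C=3em{
\Ts\big(\mathbb S\times(\Ts X)^A\big) \ar[r]^{\iota\,(\iota_1\times\iota_X^A)} \ar[d]_{\alpha_{\mathbb S}} & \Te\big(\mathbb E\times(\Te X)^A\big) \ar[d]^{\alpha_{\mathbb E}} \\
\mathbb S\times(\Ts X)^A \ar[r]^{\iota_1\times\iota_X^A} & \mathbb E\times(\Te X)^A
}
\]
commutes in $\Sets$, where $\alpha_{\mathbb S}$, $\alpha_{\mathbb E}$ are the product algebra structures and $\iota(-)$ is the mate notation $\Te(-)\after\iota_{-}$ from the excerpt.

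**The key steps**, in order: first, reduce the product to its factors. Since both algebra structures are built componentwise from the algebra structures on the factors, and since $\iota_1\times\iota_X^A$ is a product of maps, it suffices to check the square separately on the first factor $\mathbb S=\Ts 1\to\Te 1=\mathbb E$ (with structure maps $\mu_1$, resp.\ $\mu_1^{\mathbb E}$) and on each of the $|A|$ copies of the factor $\Ts X\to\Te X$ (with structure maps $\mu_X$, resp.\ $\mu_X^{\mathbb E}$). Second, observe that each of these factor-squares is exactly the statement that $\iota_1$ (resp.\ $\iota_X$) is a $\Ts\le\Te$-homomorphism from the free $\Ts$-algebra $(\Ts 1,\mu_1)$ to $M_\iota(\Te 1,\mu_1^{\mathbb E})$ (resp.\ from $(\Ts X,\mu_X)$ to $M_\iota(\Te X,\mu_X^{\mathbb E})$). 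Third, verify that $\iota$ itself has this property: the required identity $\mu_S^{\mathbb E}\after\iota(\iota_S)=\iota_S\after\mu_S$ for any set $S$ is precisely one of the monad-morphism laws recalled in Appendix~\ref{sec-app:monads} (the compatibility of $\iota$ with multiplication, $\iota\after\mu^{\Ts}=\mu^{\Te}\after\iota\iota$), so this factor-level claim is immediate. Finally, glue: since a cone of morphisms into a product is a morphism into the product, the componentwise commutativity assembles into commutativity of the outer square.

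**The main obstacle** is really just bookkeeping rather than mathematics: one must be careful about what "$\iota(\iota_1\times\iota_X^A)$" denotes (it is the mate $\Te(\iota_1\times\iota_X^A)\after\iota_{\Cub{\mathbb S}(\Ts X)}$, equivalently $\iota_{\Cub{\mathbb E}(\Te X)}\after\Ts(\iota_1\times\iota_X^A)$ by naturality of $\iota$), and one must use naturality of $\iota$ to slide it past the canonical product-distribution natural transformations so that the square genuinely decomposes factorwise. Once the naturality squares are drawn, no computation beyond invoking the monad-morphism axiom remains. I would present this as: reduce to factors by universality of products, identify each factor-square with the multiplication-compatibility law of the monad morphism $\iota$ instantiated at $1$ and at $X$, and conclude.
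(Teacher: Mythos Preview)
Your proposal is correct and follows essentially the same idea as the paper: reduce to the individual factors $\iota_1$ and $\iota_X$, observe that each is a $\Ts$-algebra morphism into the corresponding $M_\iota$-algebra by the multiplication law for the monad morphism $\iota$, and then reassemble. The paper packages the ``reduce to factors / glue'' steps more abstractly: it notes that $M_\iota$ preserves products (since $U_{\mathbb S}\circ M_\iota = U_{\mathbb E}$, $U_{\mathbb E}$ preserves limits, and $U_{\mathbb S}$ reflects them), so $M_\iota(\Te 1\times(\Te X)^A)=M_\iota(\Te 1)\times M_\iota(\Te X)^A$ on the nose, and then a product of $\Ts$-homomorphisms is a $\Ts$-homomorphism. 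This categorical shortcut avoids the explicit bookkeeping with the product-distribution natural transformation that you flag as the main obstacle, but the mathematical content is the same.
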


\begin{proof}
Since the functor $M_\iota$ induced by the monad morphism $\iota$ satisfies $\mathcal U_\mathbb{S} \after M_\iota = \mathcal U_\mathbb{E}$, it preserves all limits (as $\mathcal U_\mathbb{E}$ preserves them and $\mathcal U_\mathbb{S}$ reflects them). Hence, in particular, it preserves products.
Since $\iota_X\colon (\Ts X, \mu_{\mathbb S,X}) \to M_\iota(\Te X, \mu_{\mathbb E,X})$ is a $\Ts$-algebra homomorphism by one of the monad morphism laws, we have 
$$\iota_1 \times \iota_X^A \colon \Ts 1 \times \Ts X^A \to M_\iota(\Te 1) \times M_\iota(\Te X)^A = M_\iota(\Te 1 \times \Te X^A)$$
is one as well. \qed
\end{proof}

\begin{Proof}[of Lemma~\ref{CubProperties}]
	Since $\mathbb S\subseteq\mathbb E$, we have 
	\[
		\Cub{\mathbb E}X\cap\Cub{\mathbb S}Y=(\mathbb E\times X^A)\cap(\mathbb S\times Y^A)
		=\mathbb S\times(X\cap Y)^A=\Cub{\mathbb S}(X\cap Y)
		.
	\]
	Assume now that $Y_j\subseteq X_j$. We have 
	\[
		(\Cub{\mathbb E}\pi_1)^{-1}(\Cub{\mathbb S}Y_1)=
		\{(o,(({x_1}_a,{x_2}_a))_{a\in A})\in\mathbb E\times(X_1\times X_2)^A\mid o\in\mathbb S,{x_1}_a\in Y_1\}
		,
	\]
	and the analogous formula for $(\Cub{\mathbb E}\pi_2)^{-1}(\Cub{\mathbb S}Y_2)$. 
	This shows that the intersection of these two inverse images is equal to $\mathbb S\times(Y_1\times Y_2)^A$.
\end{Proof}

\subsection{Proof of the reduction lemma}

\subsection*{\ding{226}\ Reducing from \Ab\ to \CMon}

The reduction lemma for passing from abelian groups to commutative monoids arises from a classical result of algebra. 
Namely, it is a corollary of the following theorem due to D.Hilbert, 
cf.\ \cite[Theorem~II]{hilbert:1890} see also \cite[Theorem~1.1]{bachem:1978}.

\begin{theorem}[Hilbert 1890]\label{PolyhedralZ}
	Let $W$ be a $n\times m$-matrix with integer entries, and let $X$ be the commutative monoid 
	\[
		X=\big\{x\in\mathbb Z^n\mid x\cdot W\geq 0\big\}
		,
	\]
	where the monoid operation is the usual addition on $\mathbb Z^n$. 
	Then $X$ is finitely generated as a commutative monoid. 
\end{theorem}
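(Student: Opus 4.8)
The plan is to recognise $X$ as the set of lattice points of a rational polyhedral cone and then to run Gordan's classical finiteness argument. First I would write $C=\{x\in\mathbb R^n\mid x\cdot W\geq 0\}$ for the convex cone cut out in $\mathbb R^n$ by the $m$ homogeneous inequalities encoded by the columns of $W$; then $X=C\cap\mathbb Z^n$, equipped with the restriction of vector addition. By the Minkowski--Weyl finite-basis theorem for polyhedral cones (see e.g.\ \cite{rockafellar:1970}), a cone defined by finitely many inequalities with rational coefficients is generated, as a cone, by finitely many rays; since the defining data are integral, these rays have rational direction vectors, which after clearing denominators we may take to be integer vectors $v_1,\dots,v_k\in C\cap\mathbb Z^n$. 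Thus $C=\{\sum_{j=1}^k\lambda_jv_j\mid \lambda_j\geq 0\}$. Because $x$ is allowed to range over all of $\mathbb Z^n$ and not merely over the positive orthant, $C$ may contain a nontrivial linear subspace; this causes no difficulty, as the lineality directions together with their negatives simply occur among the $v_j$.

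Next I would introduce the fundamental parallelepiped $P=\{\sum_{j=1}^k\lambda_jv_j\mid 0\leq\lambda_j\leq 1\}$. As the image of the compact cube $[0,1]^k$ under a linear map, $P$ is bounded, so $G:=P\cap\mathbb Z^n$ is a \emph{finite} subset of $X$; note that $0\in G$ and that $v_j\in G$ for every $j$.

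The heart of the proof is then to show that $G$ generates $X$ as a commutative monoid. Given $x\in X$, use $x\in C$ to write $x=\sum_{j=1}^k\lambda_jv_j$ with all $\lambda_j\geq 0$, and split each coefficient as $\lambda_j=\lfloor\lambda_j\rfloor+f_j$ with $\lfloor\lambda_j\rfloor\in\mathbb Z_{\geq 0}$ and $f_j\in[0,1)$. Put $r:=\sum_{j=1}^k f_jv_j$. On one hand $r\in P$ by construction; on the other hand $r=x-\sum_{j=1}^k\lfloor\lambda_j\rfloor v_j$ is a $\mathbb Z$-linear combination of integer vectors, hence $r\in\mathbb Z^n$. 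Therefore $r\in G$, and $x=\sum_{j=1}^k\lfloor\lambda_j\rfloor v_j+r$ exhibits $x$ as a finite sum of elements of the finite set $G$ with nonnegative integer multiplicities. Hence $X$ is finitely generated as a commutative monoid.

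The one nonroutine ingredient is the passage from the inequality description of $C$ to a finite system of integer ray generators, i.e.\ the Minkowski--Weyl theorem; granting it, the decomposition in the last step is elementary. If a self-contained account is preferred, this finite-basis statement can be proved by Fourier--Motzkin elimination (induction on the number $m$ of inequalities), keeping all intermediate data rational so that integrality of the generators is recovered by clearing denominators at the end.
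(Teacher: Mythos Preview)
Your argument is correct; this is Gordan's lemma, and you handle the possible lineality space of $C$ properly by allowing both a direction and its negative among the $v_j$.

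The paper, however, takes a genuinely different route. In its self-contained appendix it avoids convex geometry entirely and proceeds order-theoretically: first it proves (essentially Dickson's lemma) that every antichain in $(\mathbb N^m,\leq)$ is finite, and then, writing $\varphi\colon\mathbb Z^n\to\mathbb Z^m$, $\varphi(x)=x\cdot W$, it shows that the image submonoid $\varphi(\mathbb Z^n)\cap\mathbb N^m$ is generated by its (finite) set of minimal nonzero elements, using only the descending chain condition on $\mathbb N^m$. Finally $X=\varphi^{-1}(\mathbb N^m)$ is generated by arbitrary preimages of those minimal elements together with a finite generating set for $\ker\varphi$ (finite because $\mathbb Z$ is Noetherian; taking $\pm$ the group generators gives monoid generators).

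Your approach is geometric and dovetails nicely with the Minkowski theorem the paper invokes for the real case; it also yields an explicit generating set as the lattice points of a bounded parallelepiped. The paper's approach is more elementary in that it needs no Minkowski--Weyl or Fourier--Motzkin black box, only the well-quasi-order structure of $\mathbb N^m$. Each is short once its key lemma---Minkowski--Weyl for you, Dickson's lemma for the paper---is granted.
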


The reduction lemma for passing from \Ab\ to \CMon\ is a corollary
Since every finitely generated abelian group is also finitely generated as a commutative monoid, 
we obtain a somewhat stronger variant.

\begin{lemma}\label{RedLemN}
	Let $Z$ be a finitely generated abelian group, let $m\in\mathbb N$, and let 
	$\varphi\colon Z\to\mathbb Z^m$ be a group homomorphism. 
	Then $\varphi^{-1}(\mathbb N^m)$ is finitely generated as a commutative monoid. 
\end{lemma}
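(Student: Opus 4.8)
Let $Z$ be a finitely generated abelian group, let $m\in\mathbb N$, and let $\varphi\colon Z\to\mathbb Z^m$ be a group homomorphism. Then $\varphi^{-1}(\mathbb N^m)$ is finitely generated as a commutative monoid.

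The plan is to reduce to Hilbert's theorem (Theorem~\ref{PolyhedralZ}) by representing $Z$ as a quotient of a free abelian group and transporting the problem to that free group. First I would use the structure theorem for finitely generated abelian groups, or more simply the fact that $Z$ is a quotient $q\colon\mathbb Z^n\twoheadrightarrow Z$ for some $n\in\mathbb N$. The composite $\psi=\varphi\circ q\colon\mathbb Z^n\to\mathbb Z^m$ is a group homomorphism, hence given by right multiplication $x\mapsto x\cdot W$ by some integer $n\times m$-matrix $W$. By Hilbert's theorem, the commutative monoid $X=\{x\in\mathbb Z^n\mid x\cdot W\geq 0\}=\psi^{-1}(\mathbb N^m)$ is finitely generated as a commutative monoid.

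Next I would observe that $q$ maps $X$ onto $\varphi^{-1}(\mathbb N^m)$: indeed $q(X)\subseteq\varphi^{-1}(\mathbb N^m)$ because $\varphi(q(x))=\psi(x)\in\mathbb N^m$ for $x\in X$, and conversely every $z\in\varphi^{-1}(\mathbb N^m)$ has a preimage $x\in\mathbb Z^n$ under $q$, which then satisfies $\psi(x)=\varphi(z)\in\mathbb N^m$, so $x\in X$ and $z=q(x)$. Since $q$ is a group homomorphism it is in particular a monoid homomorphism, so its restriction $X\to\varphi^{-1}(\mathbb N^m)$ is a surjective homomorphism of commutative monoids. A surjective homomorphic image of a finitely generated commutative monoid is finitely generated (the images of a generating set generate), so $\varphi^{-1}(\mathbb N^m)$ is finitely generated as a commutative monoid.

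The only slightly delicate point is making sure $q$ exists as a surjection $\mathbb Z^n\twoheadrightarrow Z$; this is immediate since $Z$ is finitely generated, so picking generators $z_1,\dots,z_n$ of $Z$ gives the desired surjection sending the $i$-th basis vector to $z_i$. The remark preceding the lemma already records that finitely generated abelian groups are finitely generated as commutative monoids, so the conclusion indeed gives the "somewhat stronger variant" referred to there: the Reduction Lemma for passing from \Ab\ to \CMon\ follows by taking $Z$ to be an $\mathbb E$-submodule ($\mathbb E=\mathbb Z$) of $\mathbb Z^{n_1}\times\mathbb Z^{n_2}$, hence a finitely generated abelian group, and $\varphi$ the inclusion into $\mathbb Z^{n_1+n_2}$ composed with the identity, noting $X_1\times X_2=\mathbb N^{n_1}\times\mathbb N^{n_2}=\mathbb N^{n_1+n_2}$. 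I do not foresee a substantial obstacle; essentially all the work is hidden in Hilbert's theorem, which we are permitted to invoke.
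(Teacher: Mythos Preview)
Your argument is correct and, if anything, slightly cleaner than the paper's. Both proofs reduce to Hilbert's theorem applied to a free abelian group, but they handle the possible non-freeness of $Z$ differently. The paper invokes the structure theorem to write $Z\cong\mathbb Z^k\oplus T$ with $T$ finite torsion, observes that $\varphi$ kills $T$ because $\mathbb Z^m$ is torsion-free, and then exhibits $\varphi^{-1}(\mathbb N^m)$ as the product of $\psi^{-1}(\mathbb N^m)$ (where $\psi=\varphi|_{\mathbb Z^k}$) with the finite monoid $T$. You instead present $Z$ as a quotient $q\colon\mathbb Z^n\twoheadrightarrow Z$, apply Hilbert to the composite $\psi=\varphi\circ q$, and note that $q$ restricts to a surjective monoid map $\psi^{-1}(\mathbb N^m)\twoheadrightarrow\varphi^{-1}(\mathbb N^m)$. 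Your route avoids the structure theorem entirely and needs only the trivial fact that surjective monoid images of finitely generated monoids are finitely generated; the paper's route makes the generators more explicit (those of $\psi^{-1}(\mathbb N^m)$ together with the residue classes generating $T$), which is not needed here but could be useful if one wanted bounds.
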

\begin{Proof}
	Write $Z$, up to an isomorphism, as a direct sum of cyclic abelian groups
	\begin{equation}\label{3}
		Z=\mathbb Z^k\oplus\Big[\bigoplus_{j=1}^n\mathbb Z/a_j\mathbb Z\Big]
	\end{equation}
	with $a_j\geq 2$. Since $\varphi$ maps into the torsionfree group $\mathbb Z^m$, we must have 
	\[
		\varphi\Big(\bigoplus_{j=1}^n\mathbb Z/a_j\mathbb Z\Big)=\{0\}
		.
	\]
	Hence, an element $x\in Z$ satisfies $\varphi(x)\geq 0$, if and only if $\varphi(x_0)\geq 0$ where 
	$x=x_0+x_1$ is the decomposition of $x$ according to the direct sum \eqref{3}.
	The action of the map $\psi=\varphi|_{\mathbb Z^k}\colon \mathbb Z^k\to\mathbb Z^m$ is described as multiplication of 
	$x_0=(\xi_1,\ldots,\xi_k)$ with some $k\times m$-matrix $W$ having integer coefficients. Thus
	\[
		\psi^{-1}(\mathbb N^m)=\big\{x_0\in\mathbb Z^k\mid x_0\cdot W\geq 0\big\}
		,
	\]
	and by Hilbert's Theorem $\psi^{-1}(\mathbb N^m)$ is finitely generated as a commutative monoid. 

	The set $\bigoplus_{j=1}^n\mathbb Z/a_j\mathbb Z$ also has a finite set of generators as a monoid, for example 
	the residue classes $1/a_j\mathbb Z$, $j=1,\ldots,n$. Together we see that $\varphi^{-1}(\mathbb N^m)$ 
	has a finite set of generators as a commutative monoid.
\end{Proof}

\subsection*{\ding{226}\ Reducing from \Vect Q\ to \Mod{Q_+}}

The reduction lemma for passing from vector spaces over $\mathbb Q$ to $\mathbb Q_+$-semimodules 
is a corollary of the one passing from \Ab\ to \CMon. Thus we have the corresponding stronger variant also in this case.

\begin{lemma}\label{RedLemQ+}
	Let $Z$ be a finite dimensional $\mathbb Q$-vector space, let $m\in\mathbb N$, and let 
	$\varphi\colon Z\to\mathbb Q^m$ be $\mathbb Q$-linear. 
	Then $\varphi^{-1}(\mathbb Q_+^m)$ is finitely generated as a $\mathbb Q_+$-semimodule.
\end{lemma}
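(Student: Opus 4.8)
The plan is to reduce Lemma~\ref{RedLemQ+} to the already-established Lemma~\ref{RedLemN} by clearing denominators. The key observation is that a finite-dimensional $\mathbb Q$-vector space $Z$ of dimension $k$ is, after choosing a basis, just $\mathbb Q^k$, and a $\mathbb Q$-linear map $\varphi\colon\mathbb Q^k\to\mathbb Q^m$ is given by multiplication with a $k\times m$ matrix $W$ over $\mathbb Q$. First I would pick a common denominator $N\in\mathbb N$ for all entries of $W$, so that $NW$ has integer entries; since scaling $\varphi$ by the positive scalar $N$ does not change the sign condition, we have $\varphi^{-1}(\mathbb Q_+^m)=\{x\in\mathbb Q^k\mid x\cdot(NW)\geq 0\}$.

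Next I would pass from $\mathbb Q^k$ to $\mathbb Z^k$ by another denominator-clearing step. An element $x\in\mathbb Q^k$ lies in $\varphi^{-1}(\mathbb Q_+^m)$ if and only if $x=\frac1{q}y$ for some $q\in\mathbb N$ and $y\in\mathbb Z^k$ with $y\cdot(NW)\geq 0$, i.e.\ $y\in\psi^{-1}(\mathbb N^m)$ where $\psi\colon\mathbb Z^k\to\mathbb Z^m$ is given by the integer matrix $NW$. By Lemma~\ref{RedLemN} (applied with $Z=\mathbb Z^k$), the commutative monoid $G:=\psi^{-1}(\mathbb N^m)$ is finitely generated, say by $g_1,\ldots,g_r$. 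I would then claim that these same vectors $g_1,\ldots,g_r$ generate $\varphi^{-1}(\mathbb Q_+^m)$ as a $\mathbb Q_+$-semimodule: given $x\in\varphi^{-1}(\mathbb Q_+^m)$, write $x=\frac1q y$ with $y\in G$, expand $y=\sum_i n_i g_i$ with $n_i\in\mathbb N$, and conclude $x=\sum_i\frac{n_i}{q}g_i$, a $\mathbb Q_+$-linear combination. Conversely every $\mathbb Q_+$-combination of the $g_i$ lies in $\varphi^{-1}(\mathbb Q_+^m)$ since that set is a $\mathbb Q_+$-subsemimodule of $\mathbb Q^k$ (it is the preimage of the subsemimodule $\mathbb Q_+^m$ under a linear map). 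Hence $\varphi^{-1}(\mathbb Q_+^m)=\operatorname{span}_{\mathbb Q_+}\{g_1,\ldots,g_r\}$ is finitely generated as a $\mathbb Q_+$-semimodule, which is exactly the assertion (note $\Mod{Q_+}=\SMod{Q_+}$, and finitely generated as a module here means finitely generated as an algebra in that category).

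The step I expect to require the most care is the bookkeeping around the two denominator-clearing operations and verifying that the finite generating set of the integer monoid $G$ really serves, unchanged, as a $\mathbb Q_+$-generating set upstairs; this rests on the simple but essential fact that every element of $\mathbb Q^k$ is a positive-rational multiple of an element of $\mathbb Z^k$ and that the sign conditions defining the two preimages match up after scaling. None of this is deep, but one should state explicitly that $\varphi^{-1}(\mathbb Q_+^m)$ is closed under $\mathbb Q_+$-combinations (so that ``$\supseteq$'' holds) and that the isomorphism $Z\cong\mathbb Q^k$ is harmless because finite generation is preserved by $\mathbb Q_+$-semimodule isomorphisms.
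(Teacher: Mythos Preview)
Your proposal is correct and follows essentially the same approach as the paper's proof: both clear denominators in the matrix of $\varphi$ to land in $\mathbb Z^m$, apply Lemma~\ref{RedLemN} to the resulting integer lattice to obtain a finite monoid-generating set, and then clear denominators in an arbitrary $x\in\varphi^{-1}(\mathbb Q_+^m)$ to exhibit it as a $\mathbb Q_+$-combination of those generators. The only cosmetic difference is that you fix a basis and identify $Z$ with $\mathbb Q^k$ from the outset, whereas the paper works with an arbitrary generating set $\{u_1,\dots,u_k\}$ and the $\mathbb Z$-submodule it spans after scaling; the substance is identical.
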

\begin{Proof}
	Let $\{u_1,\ldots,u_k\}$ be a set of generators of $Z$ as a $\mathbb Q$-vector space.
	Write 
	\[
		\varphi(u_j)=\Big(\frac{a_{j,1}}{b_{j,1}},\ldots,\frac{a_{j,m}}{b_{j,m}}\Big),\quad 
		j=1,\ldots,k
		,
	\]
	with $a_{j,i}\in\mathbb Z$ and $b_{j,i}\in\mathbb N\setminus\{0\}$. 
	Set $b=\prod_{j=1}^k\prod_{i=1}^mb_{j,i}$, then $\varphi(bu_j)\in\mathbb Z^m$, $j=1,\ldots,k$. 

	Let $Z'\subseteq Z$ be the $\mathbb Z$-submodule generated by $\{bu_1,\ldots,bu_k\}$, and set $\psi=\varphi|_{Z'}$. 
	Then $\psi$ is a $\mathbb Z$-linear map of $Z'$ into $\mathbb Z^m$. 
	By Lemma~\ref{RedLemN}, $\psi^{-1}(\mathbb N^m)$ is finitely generated as $\mathbb N$-semimodule, 
	say by $\{v_1,\ldots,v_l\}\subseteq Z'$.

	Given $x\in\varphi^{-1}(\mathbb Q_+^m)$, choose $\nu_1,\ldots,\nu_k\in\mathbb Q$ with 
	$x=\sum_{j=1}^k\nu_ju_j$. Write $\nu_j=\frac{\alpha_j}{\beta_j}$ with $\alpha_j\in\mathbb Z$ and 
	$\beta_j\in\mathbb N\setminus\{0\}$, and set $\beta=\prod_{j=1}^k\beta_j$. Then 
	\[
		\beta b\cdot x=\sum_{j=1}^k(\beta\nu_j)\cdot bu_j\in Z'
		,
	\]
	and 
	\[
		\psi(\beta b\cdot x)=\varphi(\beta b\cdot x)=\beta b\cdot\varphi(x)\in\mathbb Q_+^m\cap\mathbb Z^m=\mathbb N^m
		.
	\]
	Thus $\beta b\cdot x$ is an $\mathbb N$-linear combination of the elements $v_1,\ldots,v_l$, and hence 
	$x$ is a $\mathbb Q_+$-linear combination of these elements. 
	This shows that $\varphi^{-1}(\mathbb Q_+^m)$ is generated by $\{v_1,\ldots,v_l\}$ as a $\mathbb Q_+$-semimodule. 
\end{Proof}

\subsection*{\ding{226}\ Reducing from \Vect R\ to \Cone}

The reduction lemma for passing from vector spaces over $\mathbb R$ to convex cones arises from a different source than 
the previously studied. 
Namely, it is a corollary of the below classical theorem of H.Minkowski, 
cf.\ \cite{minkowski:1896} see also \cite[Theorem~19.1]{rockafellar:1970}.

Recall that a convex subset $X$ of $\mathbb R^n$ is called \emph{polyhedral}, if it is a finite intersection of half-spaces, 
i.e., if there exist $l\in\mathbb N$, $u_1,\ldots,u_l\in\mathbb R^n$, and $\nu_1,\ldots,\nu_l\in\mathbb R$, such that 
\[
	X=\big\{x\in\mathbb R^n\mid (x,u_j)\leq\nu_j,j=1,\ldots,l\big\}
	,
\]
where $(\cdot,\cdot)$ denotes the euclidean scalar product on $\mathbb R^n$. 
On the other hand, $X$ is said to be \emph{generated by points $a_1,\ldots,a_{l_1}$ and directions $b_1,\ldots,b_{l_2}$}, if 
\[
	X=\Big\{\sum_{j=1}^{l_1}\alpha_ja_j+\sum_{j=1}^{l_2}\beta_jb_j\mid 
	\alpha_j\in[0,1],\sum_{j=1}^{l_1}\alpha_j=1,\ \beta_j\geq 0,j=1,\ldots,l_2\Big\}
	.
\]
Note that a convex set generated by some points and directions is bounded, if and only if no (nonzero) directions are present.
Further, a convex set is a cone, if and only if it allows a representation where only directions occur.

\begin{theorem}[Minkowski 1896]\label{PolyhedralR}
	Let $X$ be a convex subset of $\mathbb R^n$. Then $X$ is polyhedral, if and only if $X$ is generated by a finite 
	set of points and directions.
\end{theorem}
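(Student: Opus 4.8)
This is the Minkowski--Weyl theorem, the equivalence of the exterior (half-space) and the interior (generator) descriptions of a polyhedron. The plan is to reduce to the conic case by \emph{homogenization}, prove both implications for cones, and transfer back. First note that each of the two conditions forces $X$ to be closed: a finite intersection of closed half-spaces is closed, and a set generated by finitely many points and directions is a Minkowski sum $\operatorname{conv}\{a_i\}+\operatorname{cone}\{b_j\}$ of a compact set and a finitely generated cone, the latter being closed by Carath\'eodory's theorem (every element is a nonnegative combination of a linearly independent subcollection of the $b_j$, and there are only finitely many such subcollections, each spanning a closed simplicial cone). Since the empty set trivially satisfies both conditions, we may assume $X\neq\emptyset$ and closed. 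Homogenize: set $\hat X=\operatorname{cl}\{(tx,t)\mid x\in X,\ t>0\}\subseteq\mathbb R^{n+1}$, a closed convex cone. If $X=\{x\mid (x,u_i)\leq\nu_i\}$ then $\hat X=\{(y,t)\mid t\geq 0,\ (y,u_i)\leq\nu_i t\}$ is polyhedral; conversely, since $X$ is closed one has $X=\{y\mid (y,1)\in\hat X\}$, so a representation $\hat X=\{\sum_i\alpha_i(a_i,1)+\sum_j\beta_j(b_j,0)\mid\alpha_i,\beta_j\geq 0\}$ of $\hat X$ as a finitely generated cone restricts --- reading off the last coordinate, which forces $\sum_i\alpha_i=1$ --- to exactly the desired representation of $X$ by points $a_i$ and directions $b_j$, and vice versa. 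It therefore suffices to prove that a convex cone in a euclidean space is polyhedral if and only if it is finitely generated.

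\emph{Finitely generated implies polyhedral.} If $C=\{\sum_{j=1}^m\beta_jb_j\mid\beta_j\geq 0\}$, then $C$ is the image of the obviously polyhedral set $\{(x,\beta)\in\mathbb R^n\times\mathbb R^m\mid x=\sum_j\beta_jb_j,\ \beta_j\geq 0\}$ under the coordinate projection that forgets $\beta$. I would then appeal to Fourier--Motzkin elimination: eliminating the variables $\beta_1,\dots,\beta_m$ one by one --- at each step replacing the current finite system by the finite system obtained by pairing the inequalities in which the eliminated variable occurs with opposite signs --- shows that a coordinate projection of a polyhedron is a polyhedron. Hence $C$ is polyhedral.

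\emph{Polyhedral implies finitely generated.} Let $C=\{x\mid (x,u_i)\leq 0,\ i=1,\dots,l\}$ and let $D$ be the cone generated by $u_1,\dots,u_l$. By the previous step $D$ is polyhedral, say $D=\{y\mid (y,w_k)\leq 0,\ k=1,\dots,p\}$. I claim $C$ is the cone generated by $w_1,\dots,w_p$. For ``$\supseteq$'': each $u_i\in D$ gives $(w_k,u_i)\leq 0$ for all $k$, hence every $w_k$ satisfies $(w_k,u_i)\leq 0$ for all $i$, so $w_k\in C$, and $C$ is a cone. For ``$\subseteq$'': if $x\in C$ then for every $d=\sum_i\lambda_iu_i\in D$ we have $(d,x)=\sum_i\lambda_i(u_i,x)\leq 0$, so $x$ lies in the polar of $D$; by Farkas' lemma the polar of the finitely generated cone $D$ is exactly the cone on the defining normals $w_1,\dots,w_p$, so $x$ is a nonnegative combination of the $w_k$. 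This settles the conic case and, via the homogenization above, the theorem.

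\emph{Main obstacle.} The genuinely non-formal ingredient is the polyhedral-implies-finitely-generated direction for cones: it cannot be reached by projection manipulations alone and rests on a Farkas-type duality (the polar of a polyhedral cone is the cone spanned by its facet normals), which in turn relies on a separating-hyperplane argument together with closedness of finitely generated cones. A secondary point requiring care is the bookkeeping in the homogenization --- distinguishing the generators of $\hat X$ at height $t=1$ (which yield points) from those at height $t=0$ (which yield directions), and checking that passing to the closure neither loses nor creates generators, which is precisely where closedness of $X$ enters.
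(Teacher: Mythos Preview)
The paper does not prove this theorem at all: it is quoted as a classical result of Minkowski (with a pointer to \cite[Theorem~19.1]{rockafellar:1970}) and used as a black box in the reduction lemmas for $\mathbb R_+$ and for \Pca. So there is no ``paper's own proof'' to compare against.

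Your sketch is the standard Minkowski--Weyl argument and is correct. The homogenization reduces the statement to the conic case; Fourier--Motzkin elimination gives ``finitely generated $\Rightarrow$ polyhedral''; and the double-polar trick (apply the first implication to the cone $D$ generated by the facet normals, then take polars) gives the converse. Your identification of the genuine analytic content --- closedness of finitely generated cones plus a separation/Farkas argument underlying the bipolar step --- is exactly right. One cosmetic point: in the ``$\subseteq$'' direction you invoke ``Farkas' lemma'' for the polar of $D$, but what you actually use is the bipolar identity $D^{\circ}=(\operatorname{cone}\{w_k\})^{\circ\circ}=\operatorname{cone}\{w_k\}$, which is the cleaner way to phrase it and makes the dependence on closedness of $\operatorname{cone}\{w_k\}$ explicit.
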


The relevance of Minkowski's Theorem in the present context is that it shows that the intersection of two finitely generated 
sets is finitely generated (since the intersection of two polyhedral sets is obviously polyhedral). 

The reduction lemma for passing from \Vect R\ to \Cone\ is an immediate corollary.
Since every finite dimensional $\mathbb R$-vector space is also finitely generated as a convex cone, we have the 
corresponding stronger version.

\begin{lemma}\label{RedLemR+}
	Let $Z$ be a finite dimensional $\mathbb R$-vector space, let $m\in\mathbb N$, and let 
	$\varphi\colon Z\to\mathbb R^m$ be $\mathbb R$-linear. 
	Then $\varphi^{-1}(\mathbb R_+^m)$ is finitely generated as a convex cone. 
\end{lemma}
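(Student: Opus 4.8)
I would prove Lemma~\ref{RedLemR+} as a direct corollary of Minkowski's Theorem (Theorem~\ref{PolyhedralR}), mimicking the structure already visible in the rational case (Lemma~\ref{RedLemQ+}) but using polyhedrality in place of Hilbert's theorem. First I would fix a basis of $Z$, say $\{u_1,\ldots,u_k\}$, identifying $Z$ with $\mathbb R^k$; under this identification $\varphi\colon \mathbb R^k\to\mathbb R^m$ is given by multiplication with a real $k\times m$ matrix $W$, so that
\[
	\varphi^{-1}(\mathbb R_+^m)=\big\{x\in\mathbb R^k\mid (x, W_i)\geq 0,\ i=1,\ldots,m\big\},
\]
where $W_i$ is the $i$-th column of $W$. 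Each inequality $(x,W_i)\geq 0$, equivalently $(x,-W_i)\leq 0$, defines a half-space, so $\varphi^{-1}(\mathbb R_+^m)$ is a finite intersection of half-spaces, i.e.\ a polyhedral convex subset of $\mathbb R^k$. By Minkowski's Theorem it is generated by a finite set of points $a_1,\ldots,a_{l_1}$ and directions $b_1,\ldots,b_{l_2}$.

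\emph{From a polyhedral generating set to a cone generating set.} The remaining point is to see that this makes $\varphi^{-1}(\mathbb R_+^m)$ finitely generated \emph{as a convex cone} in the sense of \SMod{R_+}. Since $\varphi^{-1}(\mathbb R_+^m)$ is a cone (it is closed under addition and under multiplication by non-negative scalars, being an intersection of the cone-preimages $\{(x,W_i)\geq 0\}$), Minkowski's representation simplifies: a convex set which is a cone admits a representation by directions only, as remarked right before Theorem~\ref{PolyhedralR}. Concretely, $0\in\varphi^{-1}(\mathbb R_+^m)$, and for a generating set of points and directions of a cone one checks that each point $a_j$ is itself a direction (as $\lambda a_j$ lies in the set for all $\lambda\geq 0$), so $\varphi^{-1}(\mathbb R_+^m)=\{\sum_j \gamma_j b_j\mid \gamma_j\geq 0\}$ with the $b_j$ now ranging over the $a_j$'s together with the original directions. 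That is precisely the statement that $\{b_1,\ldots,b_{l_1+l_2}\}$ generates $\varphi^{-1}(\mathbb R_+^m)$ as an $\mathbb R_+$-semimodule, i.e.\ as a convex cone.

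\emph{The stronger variant.} Finally, as in Lemma~\ref{RedLemQ+}, the same argument applied to $Z$ itself (or simply the observation that any finite-dimensional $\mathbb R$-vector space is a finitely generated convex cone, e.g.\ $\mathbb R^k$ is generated by $\{\pm e_1,\ldots,\pm e_k\}$) shows there is no hidden restriction; the lemma as stated covers exactly the situation needed in Step~3 of the proof scheme, namely $Z$ an $\mathbb R$-submodule of $\mathbb R^{n_1}\times\mathbb R^{n_2}$ and $\varphi$ the appropriate coordinate map, so that $Z\cap(X_1\times X_2)$ is finitely generated in $\Cone$. I do not expect a serious obstacle here: the real content is entirely packaged in Minkowski's Theorem, and the only mild subtlety is the bookkeeping converting ``points and directions'' into ``cone generators'', which is the step I would write out most carefully.
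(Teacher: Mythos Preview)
Your argument is correct. It differs from the paper's proof in a mild but genuine way. After fixing a basis $Z\cong\mathbb R^k$, you observe that $\varphi^{-1}(\mathbb R_+^m)$ is itself a polyhedral cone in $\mathbb R^k$ (an intersection of $m$ half-spaces through the origin) and apply Minkowski's Theorem directly to it, invoking the remark before Theorem~\ref{PolyhedralR} that a polyhedral cone admits a representation by directions only. The paper instead mirrors the two-step structure of Lemma~\ref{RedLemN} and Lemma~\ref{RedLemQ+}: it applies Minkowski's Theorem in the target space to $\varphi(Z)\cap\mathbb R_+^m\subseteq\mathbb R^m$, chooses preimages of the resulting directions, and then adjoins generators $\pm u_i$ of the kernel to obtain a cone generating set for $\varphi^{-1}(\mathbb R_+^m)$. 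Your route is shorter and exploits the fact that, unlike in the $\mathbb N$-case, the domain can be identified with a Euclidean space so that polyhedrality is available on the preimage side; the paper's route buys uniformity with the earlier reduction lemmas, where that identification is not available (e.g.\ $Z$ may have torsion in Lemma~\ref{RedLemN}).
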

\begin{Proof}
	\hspace*{0pt}
	\begin{Nili}
	\item \emph{Step~1:}\ 
		The image $\varphi(Z)$ is a linear subspace of $\mathbb R^m$, in particular, polyhedral. The positive 
		cone $\mathbb R_+^m$ is obviously also polyhedral. We conclude that the convex cone 
		$\varphi(Z)\cap\mathbb R_+^m$ is generated by some finite set of directions.
	\item \emph{Step~2:}\ 
		The kernel $\varphi^{-1}(\{0\})$ is, as a linear subspace of the finite dimensional vector space $Z$, 
		itself finite dimensional (generated, say, by $\{u_1,\ldots,u_k\}$). 
		Thus it is also finitely generated as a convex cone (in fact, $\{\pm u_1,\ldots,\pm u_k\}$ is a set of generators).

		Choose a finite set of directions $\{a_1,\ldots,a_l\}$ generating $\varphi(Z)\cap\mathbb R_+^m$ as a 
		convex cone, and choose $v_j\in Z$ with $\varphi(v_j)=a_j$, $j=1,\ldots,l$. 
		We claim that $\{\pm u_1,\ldots,\pm u_k\}\cup\{v_1,\ldots,v_l\}$ generates $\varphi^{-1}(\mathbb R_+^m)$ 
		as a convex cone. 
		To see this, let $x\in \varphi^{-1}(\mathbb R_+^m)$. 
		Choose $\alpha_1,\ldots,\alpha_l\geq 0$ with $\varphi(x)=\alpha_1a_1+\ldots+\alpha_la_l$. 
		Then 
		\[
			\varphi\big(x-(\alpha_1v_1+\ldots+\alpha_lv_l)\big)=
			\varphi(x)-\big(\alpha_1\varphi(v_1)+\ldots+\alpha_l\varphi(v_l)\big)=0
			,
		\]
		and hence we find $\beta_1^\pm,\ldots,\beta_k^\pm\geq 0$ with 
		\[
			x-(\alpha_1v_1+\ldots+\alpha_lv_l)=(\beta_1^+u_1+\ldots+\beta_k^+u_k)+(\beta_1^-(-u_1)+\ldots+\beta_k^-(-u_k))
			.
		\]
	\end{Nili}
\end{Proof}

\subsection*{\ding{226}\ Reducing from \Vect{R}\ to \Pca}

The reduction lemma for passing from vector spaces over $\mathbb R$ to positively convex algebras is again a corollary of 
Theorem~\ref{PolyhedralR}. 
However, in a sense the situation is more complicated. 
One, the corresponding strong version fails; in fact, 
no (nonzero) $\mathbb R$-vector space is finitely generated as a \Pca. 
Two, unlike in categories of semimodules, the direct product $T_{[0,1]}B_1\times T_{[0,1]}B_2$ 
does not coincide with $T_{[0,1]}(B_1\dot\cup B_2)$. 

\begin{lemma}\label{RedLemPca}
	Let $n_1,n_2\in\mathbb N$, and let $Z$ be a linear subspace of $\mathbb R^{n_1}\times\mathbb R^{n_2}$. 
	Then $Z\cap(\Delta^{n_1}\times\Delta^{n_2})$ is finitely generated as a positively convex algebra.
\end{lemma}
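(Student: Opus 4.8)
The plan is to deduce the statement from Minkowski's Theorem (Theorem~\ref{PolyhedralR}), in the same spirit as the cone case (Lemma~\ref{RedLemR+}); the extra work compared to that case is the passage from ``generated by finitely many points as a convex set'' to ``finitely generated as a positively convex algebra''.

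First I would record that, viewing $\mathbb R^{n_1}\times\mathbb R^{n_2}$ as $\mathbb R^{n_1+n_2}$, the set $K:=Z\cap(\Delta^{n_1}\times\Delta^{n_2})$ is a bounded polyhedral subset of $\mathbb R^{n_1+n_2}$ that contains $0$. Indeed, $\Delta^{n_1}\times\Delta^{n_2}$ is the intersection of the $n_1+n_2$ half-spaces $\xi_i\geq 0$ with the two half-spaces $\sum_{i\leq n_1}\xi_i\leq 1$ and $\sum_{i>n_1}\xi_i\leq 1$, hence polyhedral, and it is bounded since it is contained in $[0,1]^{n_1+n_2}$; the linear subspace $Z$ is polyhedral as well; and a finite intersection of polyhedral sets is polyhedral. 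Since $0\in Z$ and $0\in\Delta^{n_1}\times\Delta^{n_2}$ we have $0\in K$, and $K$ is bounded because $\Delta^{n_1}\times\Delta^{n_2}$ is. By Theorem~\ref{PolyhedralR}, $K$ is generated by a finite set of points and directions, and boundedness forbids nonzero directions, so $K=\co\{a_1,\dots,a_l\}$ for some finite set $\{a_1,\dots,a_l\}\subseteq K$.

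It then remains to check that $\{a_1,\dots,a_l\}$ generates $K$ as a positively convex algebra. One inclusion is immediate: a convex combination is in particular a subconvex combination, so $K=\co\{a_1,\dots,a_l\}$ is contained in the sub-\Pca\ of $\mathbb R^{n_1+n_2}$ generated by $\{a_1,\dots,a_l\}$. For the reverse inclusion, $K$ is itself a \Pca\ --- a convex subset of a euclidean space containing $0$ is closed under subconvex combinations, since $\sum_i p_ix_i=\sum_i p_ix_i+(1-\sum_i p_i)\cdot 0$ is a genuine convex combination of elements of $K$ --- and $K$ contains each $a_j$, hence it contains the sub-\Pca\ they generate. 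Thus $K$ equals the sub-\Pca\ generated by the finite set $\{a_1,\dots,a_l\}$, which is precisely the assertion. The same argument applies verbatim with $2\Delta^{n_1+n_2}$ in place of $\Delta^{n_1}\times\Delta^{n_2}$, which is the variant actually invoked in the proof of Theorem~\ref{GhatProp}.

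The only genuine ingredient is Minkowski's Theorem, which converts the intersection of the two polyhedra $Z$ and $\Delta^{n_1}\times\Delta^{n_2}$ into a finite generating list; the remaining steps are bookkeeping. The point one must not overlook --- and the reason the ``strong'' reformulation available in the cone and monoid cases fails here --- is that one cannot simply invoke ``every finite-dimensional vector space is finitely generated'', since no nonzero vector space is a finitely generated \Pca: it is the boundedness produced by intersecting with $\Delta^{n_1}\times\Delta^{n_2}$, together with the membership $0\in K$, that makes the reduction go through.
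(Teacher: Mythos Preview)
Your argument is correct and follows the same route as the paper's own proof: both observe that $Z$ and $\Delta^{n_1}\times\Delta^{n_2}$ are polyhedral, apply Minkowski's Theorem to the intersection, and use boundedness to exclude directions. You make explicit the one step the paper passes over in silence, namely why ``convex hull of finitely many points'' coincides with ``sub-\Pca\ generated by those points''; your use of $0\in K$ to turn a subconvex combination into a genuine convex combination is exactly the missing detail.
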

\begin{Proof}
	Obviously, $Z$ and $\Delta^{n_1}\times\Delta^{n_2}$ are both polyhedral. 
	We conclude that $Z\cap(\Delta^{n_1}\times\Delta^{n_2})$ 
	is generated by a finite set of points and directions. Since it is bounded, no direction can occur, and 
	it is thus finitely generated as a \Pca.
\end{Proof}

\section{Self-contained proof of Lemma~\ref{RedLemN}}

We provide a short and self-contained proof of the named reduction lemma. 
It proceeds via an argument very specific for $\mathbb N$; the essential ingredient is that 
the order of $\mathbb N$ is total and satisfies the descending chain condition.
Note that the following argument also proves Hilbert's Theorem.

First, a common fact about the product order on $\mathbb N^m$ 
(we provide an explicit proof since we cannot appoint a reference).

\begin{lemma}\label{Incomparable}
	Let $m\in\mathbb N$, and let $M\subseteq\mathbb N^m$ be a set of pairwise incomparable elements. 
	Then $M$ is finite. 
\end{lemma}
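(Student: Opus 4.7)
The statement is the classical Dickson lemma, and my plan is to prove it by induction on $m$, using a simple extraction argument based on the well-ordering of $\mathbb{N}$.

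The base case $m=1$ is immediate, since $\mathbb{N}$ is totally ordered, so any antichain has at most one element. For the inductive step, I would argue by contradiction: suppose $M \subseteq \mathbb{N}^{m+1}$ is an infinite antichain, and enumerate an injective sequence $(x^{(k)})_{k\in\mathbb{N}}$ in $M$. The key auxiliary fact I would use is that every infinite sequence in $\mathbb{N}$ admits an infinite non-decreasing subsequence; this is elementary from the well-foundedness of the order on $\mathbb{N}$ (either directly, or via the observation that otherwise one could build an infinite strictly descending sequence).

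Applying this fact to the first coordinate of $(x^{(k)})$ yields an infinite subsequence that is non-decreasing in coordinate~$1$. Applying it again to the second coordinate of that subsequence, and iterating through all $m+1$ coordinates, I obtain after finitely many extractions an infinite subsequence $(x^{(k_\ell)})_{\ell\in\mathbb{N}}$ which is non-decreasing in every coordinate simultaneously. In particular, $x^{(k_1)} \leq x^{(k_2)}$ in the product order on $\mathbb{N}^{m+1}$, contradicting the assumption that $M$ is pairwise incomparable.

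The step requiring the most care is the iterated extraction: one must be explicit that the property ``non-decreasing in coordinate~$i$'' is preserved when passing to any further subsequence, so that after $m+1$ extractions all coordinates are simultaneously monotone. Once this bookkeeping is in place, the contradiction is immediate and no further work is needed. I do not expect any genuine obstacle; the argument is short and self-contained, and notably avoids appeal to Hilbert's theorem or any structural hypothesis beyond the well-ordering of $\mathbb{N}$.
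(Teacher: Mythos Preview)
Your proposal is correct and uses essentially the same core idea as the paper: iterated subsequence extraction coordinate by coordinate. The executions differ slightly. The paper, for each coordinate, either keeps the whole sequence (if that coordinate is bounded) or extracts a strictly increasing subsequence, and then finishes with a pigeonhole argument on the bounded coordinates to find two indices agreeing there and strictly ordered elsewhere. You instead use the single uniform fact that every sequence in $\mathbb N$ has an infinite non-decreasing subsequence, apply it $m$ times, and read off comparability directly from any two consecutive terms of the final subsequence. Your route is a bit more streamlined, avoiding the case distinction and the pigeonhole step; the paper's route makes the dichotomy bounded/unbounded explicit but needs the extra finiteness argument at the end.

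One cosmetic remark: you announce induction on $m$, but the argument you describe never invokes the induction hypothesis --- the iterated extraction already works directly for any fixed $m$. You could either drop the induction framing altogether, or genuinely use it (extract a non-decreasing subsequence in the last coordinate, then apply the hypothesis to the projection onto the first $m$ coordinates). Either is fine; as written the label ``inductive step'' is a slight misnomer but the mathematics is sound.
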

\begin{Proof}
	Assume that $M$ is infinite, and choose a sequence $(a_n)_{n\in\mathbb N}$ of different elements of $M$. 
	Write $a_n=(\alpha_{n,1},\ldots,\alpha_{n,m})$. 
	We construct, in $m$ steps, a subsequence $(b_n)_{n\in\mathbb N}$ of $(a_n)_{n\in\mathbb N}$ with the property 
	that (we write $b_n=(\beta_{n,1},\ldots,\beta_{n,m})$)
	\begin{equation}\label{5}
		\forall k\in\{1,\ldots,m\}\Deli
		L_k=\sup_{n\in\mathbb N}\beta_{n,k}<\infty
		\ \vee\ 
		\beta_{0,k}<\beta_{1,k}<\beta_{2,k}<\cdots
	\end{equation}
	In the first step, extract a subsequence of $(a_n)_{n\in\mathbb N}$ according to the behaviour of the 
	sequence of first components $(\alpha_{n,1})_{n\in\mathbb N}$.
	If $\sup_{n\in\mathbb N}\alpha_{n,1}<\infty$, take the whole sequence $(a_n)_{n\in\mathbb N}$ as the subsequence. 
	If $\sup_{n\in\mathbb N}\alpha_{n,1}=\infty$, take a subsequence $(a_{n_j})_{j\in\mathbb N}$ with 
	\[
		\alpha_{n_0,1}<\alpha_{n_1,1}<\alpha_{n_2,1}<\cdots
		.
	\]
	Repeating this step, always starting from the currently chosen subsequence, we succesively extract subsequences which 
	after $l$ steps satisfy the property \eqref{5} for the components up to $l$. 

	Denote 
	\[
		I_1=\big\{k\in\{1,\ldots,m\}\mid \sup_{n\in\mathbb N}\beta_{n,k}<\infty\big\}
		,\quad 
		I_2=\big\{k\in\{1,\ldots,m\}\mid \sup_{n\in\mathbb N}\beta_{n,k}=\infty\big\}
	\]
	The map $n\mapsto(\beta_{n,k})_{k\in I_1}$ maps $\mathbb N$ into the finite set $\prod_{k\in I_1}\{0,\ldots,L_k\}$, 
	and hence is not injective. Choose $n_1<n_2$ with $\beta_{n_1,k}=\beta_{n_2,k}$, $k\in I_1$. 
	Since $\beta_{n_1,k}<\beta_{n_2,k}$, $k\in I_2$, we obtain $b_{n_1}\leq b_{n_2}$. 
	However, by our choice of the elements $a_n$, $b_{n_1}\neq b_{n_2}$. 
	Thus $M$ contains a pair of different but comparable elements. 
\end{Proof}

\begin{Proof}[of Lemma~\ref{RedLemN}]
	If $\varphi^{-1}(\mathbb N^m)=\{0\}$, there is nothing to prove. Hence, assume that $\varphi^{-1}(\mathbb N^m)\neq\{0\}$. 
	\begin{Nili}
	\item \emph{Step~1:}\ 
		We settle the case that $Z\subseteq\mathbb Z^m$ and $\varphi$ is the inclusion map. 
		Let $M$ be the set of minimal elements of $(Z\cap\mathbb N^m)\setminus\{0\}$. 
		From the descending chain condition we obtain
		\[
			\forall x\in(Z\cap\mathbb N^m)\setminus\{0\}\Deli \exists\,y\in M\Deli y\leq x
		\]
		By Lemma~\ref{Incomparable}, $M$ is finite, say $M=\{a_1,\ldots,a_l\}$. 
		Now we show that $M$ generates $Z$ as commutative monoid. Let $x\in Z$, and assume that 
		$x-\sum_{j=1}^l\alpha_ja_j\neq 0$ for all $\alpha_j\in\mathbb N$. By the descending chain condition, 
		the set of all elements of this form 
		contains a minimial element, say, $x-\sum_{j=1}^l\tilde\alpha_ja_j$. Choose $y\in M$ with 
		$y\leq x-\sum_{j=1}^l\tilde\alpha_ja_j$. Since $y\neq 0$, we have 
		$x-\sum_{j=1}^l\tilde\alpha_ja_j-y<x-\sum_{j=1}^l\tilde\alpha_ja_j$ and we reached a contradiction.
	\item \emph{Step~2:}\ 
		The kernel $\varphi^{-1}(\{0\})$ is, as a subgroup of the finitely generated abelian group $Z$, 
		itself finitely generated (remember here that $\mathbb Z$ is a Noetherian ring). 
		Let $\{u_1,\ldots,u_k\}$ be a set of generators of $\varphi^{-1}(\{0\})$ as abelian group. 
		Then $\{\pm u_1,\ldots,\pm u_k\}$ is a set of generators of $\varphi^{-1}(\{0\})$ as a 
		commutative monoid.

		By Step~1 we find $\{a_1,\ldots,a_l\}\subseteq\mathbb Z^m$ generating $\varphi(Z)\cap\mathbb N^m$ as a 
		commutative monoid. Choose $v_j\in Z$ with $\varphi(v_j)=a_j$, $j=1,\ldots,l$. 
		Then we find, for each $x\in Z$, a linear combination of the $v_j$'s with nonnegative integer coefficients
		such that 
		\[
			\varphi\Big(x-\sum_{j=1}^l\nu_jv_j\Big)=0
			.
		\]
		Hence, $\{\pm u_1,\ldots,\pm u_k\}\cup\{v_1,\ldots,v_l\}$ generates $\varphi^{-1}(Z)$ as commutative monoid. 
	\end{Nili}
\end{Proof}

\section{Properties of $\Gh$}
\label{app:D}

\begin{Proof}[of Lemma~\ref{Ghat-simple}]
	Here the inclusion ``$\supseteq$'' is obvious. For the reverse inclusion, let $(o,\phi)\in\Gh X$ and 
	choose $p_{a,j}$ and $x_{a,j}$ according to Definition~\ref{Ghat-def}. 
	Set $p_a=\sum_{j=1}^{n_a}p_{a,j}$. 
	If $p_a=0$, set $x_a=0$. If $p_a>0$, set $x_a=\sum_{j=1}^n\frac{p_{a,j}}{p_a}x_{a,j}$. 
	Then $x_a\in X$ and $f(a)=\sum_{j=1}^{n_a}p_{a,j}x_{a,j}=p_ax_a$.
\end{Proof}

\begin{lemma}\label{GhSurj}
	The functor $\Gh$ preserves surjective algebra homomorphisms.
\end{lemma}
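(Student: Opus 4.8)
The plan is to deduce surjectivity of $\Gh f$ directly from the one-term description of $\Gh$ provided by Lemma~\ref{Ghat-simple}. First I would fix a surjective \Pca-morphism $f\colon X\to Y$ and an arbitrary element $(o,\psi)\in\Gh Y$, and invoke Lemma~\ref{Ghat-simple} to obtain witnesses $p_a\in[0,1]$ and $y_a\in Y$ (for $a\in A$) with $o+\sum_{a\in A}p_a\le 1$ and $\psi(a)=p_ay_a$.

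Next I would use surjectivity of $f$ to pick, for each of the finitely many $a\in A$, a preimage $x_a\in X$ with $f(x_a)=y_a$, and define $\phi\in X^A$ by $\phi(a)=p_ax_a$. The very same data $(p_a,x_a)_{a\in A}$ witness, again via Lemma~\ref{Ghat-simple}, that $(o,\phi)\in\Gh X$. It then remains to check $(\Gh f)(o,\phi)=(o,\psi)$: since $\Gh f=\id_{[0,1]}\times(f\circ-)$, this amounts to $f(\phi(a))=\psi(a)$ for all $a\in A$, which holds because $f$, being a \Pca-morphism, preserves the unary operation $p_a\cdot(-)$ (a particular subconvex combination), whence $f(p_ax_a)=p_af(x_a)=p_ay_a=\psi(a)$. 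In particular the case $p_a=0$ is covered, as then $\phi(a)=0$ and $f(0)=0=\psi(a)$.

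I do not expect any genuine obstacle here. The one point worth flagging is that it is precisely the simplified one-term representation of $\Gh$ from Lemma~\ref{Ghat-simple} that makes the lifting of witnesses along $f$ completely transparent; had one insisted on the original multi-term representation from Definition~\ref{Ghat-def}, the argument would run identically but with each individual $x_{a,j}$ lifted along $f$, at the cost of marginally more indices to carry around.
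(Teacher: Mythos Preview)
Your proposal is correct and is essentially identical to the paper's own proof: both invoke Lemma~\ref{Ghat-simple} to represent an element of $\Gh Y$ via witnesses $(p_a,y_a)$, lift each $y_a$ along the surjection $f$, and observe that the resulting $(o,\phi)$ lies in $\Gh X$ and maps to the given element under $\Gh f$. Your write-up is slightly more explicit about why $f(p_ax_a)=p_ay_a$, but the argument is the same.
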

\begin{proof}
	Let $X,Y$ be \Pca s, and $f:X\to Y$ a surjective algebra homomorphism. 
	Let $(o,g)\in\Gh Y$ be given. By Lemma~\ref{Ghat-simple} we can choose $p_a\in[0,1]$ and $y_a\in Y$ such that 
	$o+\sum_{a\in A}p_a\leq 1$ and $g(a)=p_ay_a$, $a\in A$. 
	Since $f$ is surjective, we find $x_a\in X$ with $f(x_a)=y_a$. 
	Let $h:A\to X$ be the function $h(a)=p_ax_a$. Again using Lemma~\ref{Ghat-simple}, we see 
	that $(o,h)\in\Gh X$. By our choice of $x_a$, it holds that $(\Gh f)(o,h)=(o,g)$. 
\end{proof}

\begin{Proof}[of Lemma~\ref{Ghat-Minko}]
	Let $(o,\phi)\in\Gh X$, and choose $p_a\in[0,1]$ and $x_a\in X$ as in Lemma~\ref{Ghat-simple}. 
	Then $\Min X(\phi(a))=p_a\Min X(x_a)\leq p_a$, and hence $o+\sum_{a\in A}\Min X(\phi(a))\leq 1$. 
	Further, $o\in[0,1]$, in particular $o\geq 0$. 

	Conversely, assume that $o\geq 0$ and $o+\sum_{a\in A}\Min X(\phi(a))\leq 1$. 
	Let $a\in A$. Set $p_a=\Min X(\phi(a))$, then $p_a\in[0,1]$ since $\sum_{a\in A}p_a\leq 1$. 
	To define $x_a$ consider first the case that $\Min X(\phi(a))=0$. 
	In this case $\phi(a)=0$ since $X$ is bounded, and we set $x_a=0$. 
	If $\Min X(\phi(a))>0$, set $x_a=\frac 1{\Min X(\phi(a))}\phi(a)$. 
	Since $X$ is closed, we have $x_a\in X$. In both cases, we obtained a representation $\phi(a)=p_ax_a$ with 
	$p_a\in[0,1]$ and $x_a\in X$. Clearly, $o+\sum_{a\in A}p_a\leq 1$, and we conclude that $(o,\phi)\in\Gh X$. 
\end{Proof}

\begin{Proof}[of Lemma~\ref{LinExt}]
	We build the extension in three stages.
	\begin{Nili}
	\item \ding{172}\ We extend $c$ to the cone generated by $X$:\ 
		Set $C=\bigcup_{t>0}tX$, and define $c_1\colon C\to V_2$ by the following procedure. 
		Given $x\in C$, choose $t>0$ with $x\in tX$, and set $c_1(x)=t\cdot c\big(\frac 1tx\big)$. 
		By this procedure the map $c_1$ is indeed well-defined.
		To see this, assume $x\in tX\cap sX$ where w.l.o.g.\ $s\leq t$. Then $\frac 1tx=\frac st\cdot\frac 1sx$. Since 
		$\frac st\leq 1$, it follows that $c(\frac 1tx)=\frac st c(\frac 1sx)$, and hence 
		$t\cdot c(\frac 1tx)=s\cdot c(\frac 1sx)$. 
		Let us check that $c_1$ is cone-morphism, i.e., that 
		\[
			c_1(x+y)=c_1(x)+c_2(y),\ x,y\in C,\qquad c_1(px)=pc_1(x),\ x\in C,p\geq 0
			.
		\]
		Given $x,y\in C$, choose $t>0$ such that $x,y,x+y\in tX$. Observe here that $C$ is a union of 
		an increasing family of sets. Then 
		\begin{align*}
			c_1(x+y)= &\, 2t\cdot c\Big(\frac 1{2t}(x+y)\Big)
			=2t\cdot c\Big(\frac 12\cdot\frac 1tx+\frac 12\cdot\frac 1ty\Big)
			\\[2mm]
			= &\, 2t\cdot\Big[\frac 12c\big(\frac 1tx\big)+\frac 12c\big(\frac 1ty\big)\Big]
			\\[2mm]
			= &\, t\cdot c\Big(\frac 1tx\Big)+t\cdot c\Big(\frac 1ty\Big)
			=c_1(x)+c_2(y)
			.
		\end{align*}
		Given $x\in C$ and $p>0$, choose $t>0$ with $x\in tX$. Then $px\in(pt)X$, and we obtain 
		\[
			c_1(px)=pt\cdot c\Big(\frac 1{pt}(px)\Big)=pt\cdot c\Big(\frac 1tx\Big)=pt\cdot\frac 1t c_1(x)
			=pc_1(x)
			.
		\]
		For $p=0$, the required equality is trivial. 
		Finally, observe that $c_1$ extends $c$, since for $x\in X$ we can choose $t=1$ in the definition of $c_1$. 
	\item \ding{173}\ We extend $c_1$ to the linear subspace generated by $C$:\ 
		Since $C$ is a cone, we have $\spn C=C-C$. We define $c_2\colon \spn C\to V_2$ by the following procedure. 
		Given $x\in\spn C$, choose $a_+,a_-\in C$ with $x=a_+-a_-$, and $c_2(x)=c_1(a_+)-c_2(a_-)$. 
		By this procedure the map $c_2$ is indeed well-defined.
		Assume $x=a_+-a_-=b_+-b_-$. Then $a_++b_-=b_++a_-$, and we obtain 
		\[
			c_1(a_+)+c_1(b_-)=c_1(a_++b_-)=c_1(b_++a_-)=c_1(b_+)+c_1(a_-)
			,
		\]
		which yields $c_1(a_+)-c_1(a_-)=c_1(b_+)-c_1(b_-)$. 
		Let us check that $c_2$ is linear.
		Given $x,y\in\spn C$, choose representations $x=a_+-a_-$, $y=b_+-b_-$. Then $x+y=(a_++b_+)-(a_-+b_-)$, 
		and we obtain 
		\begin{align*}
			c_2(x+y)= &\, c_1(a_++b_+)-c_1(a_-+b_-)
			=\big[c_1(a_+)+c_1(b_+)\big]-\big[c_1(a_-)+c_1(b_-)\big]
			\\
			= &\, \big[c_1(a_+)-c_1(a_-)\big]+\big[c_1(b_+)-c_1(b_-)\big]
			=c_2(x)+c_2(y)
			.
		\end{align*}
		Given $x\in\spn C$ and $p\in\mathbb R$, choose a representation $x=a_+-a_-$ and distinguish cases 
		according to the sign of $p$. 
		If $p>0$, we have the representation $px=pa_+-pa_-$ and hence 
		\begin{align*}
			c_2(px)= &\, c_1(pa_+)-c_1(pa_-)=pc_1(a_+)-pc_1(a_-)
			\\
			= &\, p\big[c_1(a_+)-c_1(a_-)\big]=pc_2(x)
			.
		\end{align*}
		If $p<0$, we have the representation $px=(-p)a_--(-p)a_+$ and hence 
		\begin{align*}
			c_2(px)= &\, c_1((-p)a_-)-c_1((-p)a_+)=(-p)c_1(a_-)-(-p)c_1(a_+)
			\\
			= &\, p\big[c_1(a_+)-c_1(a_-)\big]=pc_2(x)
			.
		\end{align*}
		For $p=0$, the required equality is trivial. 
		Finally, observe that $c_2$ extends $c_1$, since for $x\in C$ we can choose the representation 
		$x=x-0$ in the definition of $c_2$. 
	\item \ding{174}\ We extend $c_2$ to $V_1$:\ 
		By linear algebra a linear map given on a subspace can be extended to a linear map on the whole space. 
	\end{Nili}
	The uniqueness statement is clear. 
\end{Proof}

\begin{Proof}[of Corollary~\ref{Ghat-Minko-Coalg}]
	First assume that \eqref{2} holds. 
	Let $x\in X$. Then $\Min X(x)\leq 1$, and we obtain 
	\[
		\Out c(x)+\sum_{a\in A}\Min Y(\Comp ca(x))=
		\Out{\tilde c}(x)+\sum_{a\in A}\Min Y(\Comp{\tilde c}a(x))\leq\Min X(x)\leq 1
		.
	\]
	Further, $\Out c(x)\geq 0$ by assumption. Now Lemma~\ref{Ghat-Minko} gives $c(x)\in\Gh Y$. 

	Conversely, assume $c(X)\subseteq\Gh Y$, and let $x\in\mathbb R^n$ be given. 
	If $\Min X(x)=\infty$, the relation \eqref{2} trivially holds. 
	If $\Min X(x)=0$, then $x=0$ since $X$ is bounded. Hence, the left side of \eqref{2} equals $0$, and again \eqref{2} holds. 
	Assume that $\Min X(x)\in(0,\infty)$. Since $X$ is closed, we have $\Min X(x)^{-1}x\in X$, and hence 
	$c(\Min X(x)^{-1}x)\in\Gh Y$. From Lemma~\ref{Ghat-Minko}, we get the estimate 
	\begin{align*}
		\Out{\tilde c}(x)+\sum_{a\in A}\Min Y(\Comp{\tilde c}a(x))= &\, \Min X(x)
		\Big(
		\Out{\tilde c}\big(\frac 1{\Min X(x)}x\big)+\sum_{a\in A}\Min Y\big(\Comp{\tilde c}a(\frac 1{\Min X(x)}x)\big)
		\Big)
		\\
		= &\, \Min X(x)
		\Big(
		\Out c\big(\frac 1{\Min X(x)}x\big)+\sum_{a\in A}\Min Y\big(\Comp ca(\frac 1{\Min X(x)}x)\big)
		\Big)
		\leq\Min X(x)
		.
	\end{align*}
\end{Proof}

\section{Proof details of the Extension Theorem}

Recall Kakutani's theorem \cite[Corollary]{kakutani:1941}.

\begin{theorem}[Kakutani 1941]\label{Kak}
	Let $M\subseteq\mathbb R^n$ and $P\colon M\to\mathcal P(M)$. Assume 
	\begin{enumerate}
	\item $M$ is nonempty, compact, and convex,
	\item for each $x\in M$, the set $P(x)$ is nonempty, closed, and convex,
	\item the map $P$ has closed graph in the sense that, whenver $x_n\in M$, $x_n\to x$, and 
		$y_n\in P(x_n)$, $y_n\to y$, it follows that $y\in P(x)$.
	\end{enumerate}
	Then there exists $x\in M$ with $x\in P(x)$.
\end{theorem}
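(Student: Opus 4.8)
The plan is to derive Kakutani's theorem from Brouwer's fixed point theorem by the classical simplicial approximation argument, after first reducing the shape of $M$ to a simplex.

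\textbf{Step 1 (reduction to a simplex).} First I would observe that it suffices to treat the case where $M$ is a simplex. Indeed, $M$ is bounded, hence contained in some $n$-simplex $\Delta\subseteq\mathbb R^n$, and since $M$ is closed and convex the nearest-point map $r\colon\Delta\to M$ sending $y$ to the unique point of $M$ closest to $y$ is well defined, continuous (in fact $1$-Lipschitz), and restricts to the identity on $M$. Define $P'\colon\Delta\to\mathcal P(\Delta)$ by $P'(y)=P(r(y))$. Then each $P'(y)$ is a nonempty closed convex subset of $M\subseteq\Delta$, and $P'$ has closed graph: if $y_k\to y$ and $z_k\in P'(y_k)$ with $z_k\to z$, then $r(y_k)\to r(y)$, so $z\in P(r(y))=P'(y)$. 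Granting the simplex case applied to $(\Delta,P')$, we obtain $y^*\in\Delta$ with $y^*\in P'(y^*)=P(r(y^*))\subseteq M$; hence $y^*\in M$, so $r(y^*)=y^*$ and $y^*\in P(y^*)$. So from now on $M$ may be assumed to be an $n$-simplex.

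\textbf{Step 2 (approximation and Brouwer).} For each $k\in\mathbb N$ take the $k$-th iterated barycentric subdivision of $M$; this is a simplicial complex covering $M$ whose simplices have diameter at most $\delta_k$ with $\delta_k\to0$. For every vertex $v$ of this subdivision pick a point $\sigma_k(v)\in P(v)$, and let $f_k\colon M\to M$ be the unique map agreeing with $\sigma_k$ on vertices and affine on each simplex of the subdivision; it is continuous (the affine pieces agree on shared faces) and maps into $M$ since $M$ is convex and the $\sigma_k(v)$ lie in $M$. By Brouwer's theorem $f_k$ has a fixed point $x_k$, lying in some top-dimensional simplex $S_k$ of the subdivision with vertices $v_0^k,\dots,v_n^k$; writing $x_k=\sum_{i=0}^n\lambda_i^k v_i^k$ in barycentric coordinates and using affineness of $f_k$ on $S_k$ gives $x_k=f_k(x_k)=\sum_{i=0}^n\lambda_i^k\,\sigma_k(v_i^k)$. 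Using compactness of $M$ and of the standard simplex of coordinate vectors, pass to a subsequence (not relabelled) along which $x_k\to x^*\in M$, $\lambda_i^k\to\lambda_i\ge0$ with $\sum_i\lambda_i=1$, and $\sigma_k(v_i^k)\to y_i$ for each $i$. Since $\operatorname{diam}S_k\le\delta_k\to0$ and $x_k\in S_k$, we also have $v_i^k\to x^*$ for every $i$; as $\sigma_k(v_i^k)\in P(v_i^k)$, the closed-graph hypothesis yields $y_i\in P(x^*)$. Finally
\[
x^*=\lim_k x_k=\lim_k\sum_{i=0}^n\lambda_i^k\,\sigma_k(v_i^k)=\sum_{i=0}^n\lambda_i\,y_i
\]
is a convex combination of points of $P(x^*)$, and since $P(x^*)$ is convex we conclude $x^*\in P(x^*)$.

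\textbf{Main obstacle.} The delicate point is the limiting argument: one must control simultaneously, along a single subsequence, the Brouwer fixed points $x_k$, their barycentric coordinates, and the selected images $\sigma_k(v_i^k)$, and then invoke the closed-graph property of $P$ at the \emph{common} limit $x^*$ of all vertices of the shrinking simplices $S_k$. Once this bookkeeping is set up, everything else — the nearest-point retraction in Step 1, the affine interpolation, and the appeal to Brouwer's fixed point theorem — is routine.
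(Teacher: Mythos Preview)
Your argument is correct and is essentially Kakutani's original proof: reduce to a simplex, simplicially approximate a selection of $P$, apply Brouwer to each approximant, and pass to the limit using the closed-graph hypothesis and convexity of $P(x^*)$. The bookkeeping you flag as the ``main obstacle'' is handled cleanly.

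However, there is nothing to compare against: the paper does \emph{not} prove this theorem. It is stated in the appendix under the heading ``Recall Kakutani's theorem \cite{kakutani:1941}'' and is used as a black box in the proof of Theorem~\ref{ExEx} (the extension theorem for $\Gh$-coalgebras). The paper's contribution there is to verify that a particular set $M$ and set-valued map $P$ satisfy Kakutani's hypotheses, not to reprove Kakutani. So your write-up supplies a proof where the paper simply cites one; it is fine, but it goes beyond what the paper does.
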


Note that $P$ having closed graph implies that $P(x)$ is closed for all $x$. To see this, let 
$y_n\in P(x)$, $y_n\to y$, and use the constant sequence $x_n=x$ in the closed graph property.

In the proof of Theorem~\ref{ExEx} we shall, as in Example~\ref{Pyramid}, identify a pyramid $Y$ with the appropriately scaled 
normal vector $u$ of its inclined side.
Then, for two pyramids $Y_1$ and $Y_2$ with corresponding normal vectors $u_1$ and $u_2$, 
the requirement that $X\subseteq Y_j$ becomes $(x,u_j)\leq\Min X(x)$, $x\geq 0$, and the requirement 
$\tilde c(Y_2)\subseteq\Gh Y_1$ becomes $\Out{\tilde c}(x)+\sum_{a\in A}(\Comp{\tilde c}a(x),u_1)\leq(x,u_2)$, $x\geq 0$, 
cf.\ Corollary~\ref{Ghat-Minko-Coalg}.

\begin{Proof}[of Theorem~\ref{ExEx}]
	Let $M$ be the set 
	\[
		M=\big\{u\in\mathbb R^n\mid u\geq 0\text{ and }(x,u)\leq\Min X(x),x\geq 0\big\}
		.
	\]
	We have to include vectors $u$ with possibly vanishing components into $M$ to ensure closedness. 
	It will be a step in the proof to show that a fixed point must be strictly positive.

	Let $P\colon M\to\mathcal P(M)$ be the map 
	\[
		P(u)=\big\{v\in M\mid \Out{\tilde c}(x)+\sum_{a\in A}(\Comp{\tilde c}a(x),u)\leq(x,v), x\geq 0\big\}
		.
	\]
	Here we again denote by $\tilde c\colon \mathbb R^n\to\mathbb R\times(\mathbb R^n)^A$ the linear extension of $c$.
	Observe that $\tilde c(x)\geq 0$ for all $x\geq 0$, since $\Delta^n\subseteq X$ and $c(x)\geq 0$ for $x\in X$. 

	It is easy to check that $M$ and $P$ satisfy the hypothesis of Kakutani's Theorem, 
	the crucial point being that $P(u)\neq\emptyset$.
	\begin{Nili}
	\item \ding{172}\ $M$ is nonempty:\ 
		We have $0\in M$.
	\item \ding{173}\ $M$ is compact:\ 
		To show that $M$ is closed let $u_n\in M$ with $u_n\to u$. 
		Since $u_n\geq 0$ also $u\geq 0$, and for each fixed $x\geq 0$ continuity of the scalar product yields 
		$(x,u)=\lim_{n\to\infty}(x,u_n)\leq\Min X(x)$.
		Further, $M$ is bounded since $(e_j,u)\leq\Min X(e_j)\leq 1$, $j=1,\ldots,n$, by our assumption that 
		$\Delta^n\subseteq X$, and hence $u\in[0,1]^n$.
	\item \ding{174}\ $M$ is convex:\ 
		Let $u_1,u_2\in M$ and $p\in[0,1]$. First, clearly, $pu_1+(1-p)u_2\geq 0$. Second, for each $x\geq 0$, 
		\begin{align*}
			(x,pu_1+(1-p)u_2)= &\, p(x,u_1)+(1-p)(x,u_2)
			\\
			\leq &\, p\Min X(x)+(1-p)\Min X(x)=\Min X(x)
			.
		\end{align*}
	\item \ding{175}\ $P(u)$ is nonempty:\ 
		Let $u\in M$ be given. The map $x\mapsto\Out{\tilde c}(x)+\sum_{a\in A}(\Comp{\tilde c}a(x),u)$ is a linear functional 
		on $\mathbb R^n$. Thus we find $v\in\mathbb R^n$ representing it as $x\mapsto(x,v)$. 
		Since $e_j\in X$, we have 
		\[
			(e_j,v)=\Out{\tilde c}(e_j)+\sum_{a\in A}(\Comp{\tilde c}a(e_j),u)\geq 0
			.
		\]
		Further, using that $u\in M$ and $\tilde c(X)\subseteq\Gh X$, we obtain that for each $x\geq 0$ 
		\[
			(x,v)=\Out{\tilde c}(x)+\sum_{a\in A}(\Comp{\tilde c}a(x),u)
			\leq\Out{\tilde c}(x)+\sum_{a\in A}\Min X(\Comp{\tilde c}a(x))
			\leq\Min X(x)
			.
		\]
		Together, we see that $v\in M$. By its definition, therefore, $v\in P(u)$. 
	\item \ding{176}\ $P(u)$ is convex:\ 
		Let $v_1,v_2\in P(u)$ and $p\in[0,1]$. First, since $M$ is convex, $pv_1+(1-p)v_2$ belongs to $M$. 
		Second, for each $x\geq 0$, 
		\begin{align*}
			(x,pv_1+ &\, (1-p)v_2)=p(x,v_1)+(1-p)(x,v_2)
			\\
			\geq &\, p\Big(\Out{\tilde c}(x)+\sum_{a\in A}(\Comp{\tilde c}a(x),u)\Big)+
			(1-p)\Big(\Out{\tilde c}(x)+\sum_{a\in A}(\Comp{\tilde c}a(x),u)\Big)
			\\
			= &\, \Out{\tilde c}(x)+\sum_{a\in A}(\Comp{\tilde c}a(x),u)
			.
		\end{align*}
	\item \ding{177}\ $P$ has closed graph:\ 
		Let $u_n\in M$, $u_n\to u$, and $v_n\in P(u_n)$, $v_n\to v$. Then $u,v\in M$ since $M$ is closed. Now fix $x\geq 0$. 
		Continuity of the scalar product allows to pass to the limit in the relation
		\[
			\Out{\tilde c}(x)+\sum_{a\in A}(\Comp{\tilde c}a(x),v_n)\leq(x,u_n)
			,
		\]
		which holds for all $n\in\mathbb N$. This yields $\Out{\tilde c}(x)+\sum_{a\in A}(\Comp{\tilde c}a(x),v)\leq(x,u)$.
	\end{Nili}
	Having verified all necessary hypothesis, 
	Theorem~\ref{Kak} can be applied and furnishes us with $u\in M$ satisfying $u\in P(u)$, explicitly, $u\in\mathbb R^n$ 
	with
	\begin{equation}\label{1}
		u\geq 0,\quad (x,u)\leq\Min X(x),x\geq 0,\quad \Out{\tilde c}(x)+\sum_{a\in A}(\Comp{\tilde c}a(x),u)\leq(x,u),x\geq 0
		.
	\end{equation}
	Set $Y=\{x\geq 0\mid (x,u)\leq 1\}$. Then $Y$ is a \Pca, and by definition contained in $\mathbb R_+^n$. 
	It contains $X$ since $u\in M$, and since $u\in P(u)$ we have $\tilde c(Y)\subseteq\Gh Y$. 
	Thus $d=\tilde c|_Y$ turns $Y$ into an $\Gh$-coalgebra, and since $c=\tilde c|_X=(\tilde c|_Y)|_X=d|_X$, 
	the inclusion map $\iota\colon X\to Y$ is an $\Gh$-coalgebra morphism. 

	It remains to show that $Y$ is generated by $n$ linearly independent vectors. Remembering again Example~\ref{Pyramid}, 
	this is equivalent to $u$ being strictly positiv. Let $I=\{j\in\{1,\ldots,n\}\mid (e_j,u)=0\}$. 
	For each $j\in I$ the last relation in \eqref{1} implies that 
	$\Out c(e_j)=0$ and $(\Comp ca(e_j),u)=0$, $a\in A$. Since $u\geq 0$ and $\Comp ca(e_j)\geq 0$, 
	we conclude that the vector $\Comp ca(e_j)$ can have nonzero components only in those coordinates 
	where $u$ has zero component. In other words, $\Comp ca(e_j)\in\spn\{e_i\mid i\in I\}$. 
	Now \eqref{nv} gives $I=\emptyset$. 
\end{Proof}

\section{Proof details for properness of $\Gh$}

\begin{Proof}[of Lemma~\ref{GhatRestr}]
	We denote $v_j=(1,\ldots,1)\in\mathbb R^{n_j}$. 
	By Example~\ref{Pyramid}
	\[
		\mu_{\Delta^{n_j}}(x_j)=(x_j,v_j),\ x_j\in\mathbb R_+^{n_j}
		.
	\]
	Since $(\Delta^{n_j},c_j)$ is an $\Gh$-coalgebra, Corollary~\ref{Ghat-Minko-Coalg} yields 
	\[
		\Out{\tilde c{_j}}(x_j)+\sum_{a\in A}(\Comp{\tilde c{_j}}a(x_j),v_j)\leq(x_j,v_j),\quad x_j\in\mathbb R_+^{n_j}
		,j=1,2
		.
	\]
	Summing up these two inequalities yields that for $x_1\in\mathbb R_+^{n_1}$ and $x_2\in\mathbb R_+^{n_2}$
	\begin{equation}\label{7}
		\Big[\Out{\tilde c{_1}}(x_1)+\Out{\tilde c{_2}}(x_2)\Big]+
		\sum_{a\in A}\Big[(\Comp{\tilde c{_1}}a(x_1),v_1)+(\Comp{\tilde c{_2}}a(x_2),v_2)\Big]
		\leq(x_1,v_1)+(x_2,v_2)
		.
	\end{equation}
	Recall that $Z$ denotes the linear subspace of $\mathbb R^{n_1}\times\mathbb R^{n_2}$ constructed 
	in the basic diagram (referring back to Lemma~\ref{Prod}). 
	The definition of the map $d$ in the basic diagram ensures that for $(x_1,x_2)\in Z$
	\[
		\Out d((x_1,x_2))=\Out{\tilde c{_1}}(x_1)=\Out{\tilde c{_2}}(x_2),\quad 
		\Comp da((x_1,x_2))=\big(\Comp{\tilde c{_1}}a(x_2),\Comp{\tilde c{_2}}a(x_2)\big)
		.
	\]
	Set $v=\frac 12(1,\ldots,1)\in\mathbb R^{n_1+n_2}$. 
	Plugging the above into \eqref{7} and dividing by $2$ yields 
	\[
		\Out d((x_1,x_2))+\sum_{a\in A}\big(\Comp da((x_1,x_2)),v)\leq \big((x_1,x_2),v\big)
		,\quad (x_1,x_2)\in Z\cap\mathbb R_+^{n_1+n_2}
		.
	\]
	We have 
	\[
		\mu_{Z\cap 2\Delta^{n_1+n_2}}(x)=\max\big\{\mu_Z(x),\mu_{2\Delta^{n_1+n_2}}(x)\big\}
		=
		\begin{cases}
			(x,v) &\hspace*{-3mm},\quad x\in Z\cap\mathbb R_+^{n_1+n_2},
			\\
			\infty &\hspace*{-3mm},\quad \text{otherwise}.
		\end{cases}
	\]
	Here the first equality holds by property 3.\ listed after Definition~\ref{Minko}.
	The second equality is based on Example~\ref{MinkoExa} and Example~\ref{Pyramid}:
	First, $2\Delta^{n_1+n_2}$ is the pyramid constructed with $v$, and hence 
	$\mu_{2\Delta^{n_1+n_2}}(x) = (x,v)$ if $x\geq 0$, and $\infty$ otherwise.
	Second, $Z$ is a linear subspace, hence in particular a convex cone, and thus
	$\mu_Z(x) = 0$ if $x\in Z$, and $\infty$ otherwise. 

	The inclusion $d(Z\cap 2\Delta^{n_1+n_2})\subseteq\Gh(Z\cap 2\Delta^{n_1+n_2})$ can now be deduced 
	with help of Lemma~\ref{Ghat-Minko}. 
	Start with $(x_1,x_2)\in Z\cap 2\Delta^{n_1+n_2}$. Then $((x_1,x_2),v)\leq 1$. Moreover, $(x_1,x_2)\geq 0$ and 
	$(x_1,x_2)\in Z$ which allows to apply $d$. We obtain $d_o(x_1,x_2) + \sum_a (d_a(x_1,x_2),v) \leq 1$. 
	Since $d(Z)\subseteq F_{\mathbb R}(Z)$, we have $d_a(x_1,x_2)\in Z$. Now remember the computation of $d_a(x_1,x_2)$. 
	The map $\tilde c_j$ is the linear extension of $c_j$, hence 
	\[
		\tilde c_j(\Delta^{n_j}) = c_j(\Delta^{n_j}) \subseteq \hat F(\Delta^{n_j}) \subseteq [0,1]\times [0,1]^A
		.
	\]
	In particular, $\tilde c_j$ takes nonnegative values on $\Delta^{n_j}$, and by linearity thus on all of $(\mathbb R_+)^{n_j}$.
	This shows $d_a(x_1,x_2)\geq 0$ and $d_o(x_1,x_2)\geq 0$.
	By the above computation of the Minkowski functional $\mu_{Z\cap 2\Delta^{n_1+n_2}}$, 
	by now we know that we are in the first case, $(d_a(x_1,x_2),v)=\mu_{Z\cap 2\Delta^{n_1+n_2}}(d_a(x_1,x_2))$, 
	and thus 
  	\[
		d_o(x_1,x_2) + \sum_a \mu_{Z\cap 2\Delta^{n_1+n_2}}(d_a(x_1,x_2)) \leq 1
		.
	\]
	Lemma 16 applies, and yields $d(x_1,x_2)\in \hat F(Z\cap 2\Delta^{n_1+n_2})$.
\end{Proof}

\begin{Proof}[of Lemma~\ref{GhatNodes}]
	Using the basic diagram, we obtain 
	\begin{align*}
		& \tilde c_j(\Delta^{n_j})\subseteq\Gh\Delta^{n_j}\subseteq\Gh Y_j,
		\\
		& \tilde c_j(\pi_j(Z\cap 2\Delta^{n_1+n_2}))\subseteq\Gh(\pi_j(Z\cap 2\Delta^{n_1+n_2}))\subseteq\Gh Y_j.
	\end{align*}
	Since $\tilde c_j$ is linear, in particular convex, and $\Gh Y_j$ is convex, it follows that 
	\[
		\tilde c_j\big(\co(\Delta^{n_j}\cup\pi_j(Z\cap 2\Delta^{n_1+n_2}))\big)\subseteq\Gh Y_j
		.
	\]
\end{Proof}

\begin{Proof}[Lemma~\ref{GhatFree}]
	We check that the \Pca\ $Y_j$ satisfies the hypothesis of Theorem~\ref{ExEx}. 
	By its definition $\Delta^{n_j}\subseteq Y_j\subseteq\mathbb R_+^{n_j}$. 
	Since $Y_j$ is finitely generated, recall that $Y_j$ is the convex hull of two finitely generated \Pca s, 
	it is a compact subset of $\mathbb R^{n_j}$. 
	Finally, since the coalgebra structure on $Y_j$ is an extension of the one on $\Delta^{n_j}$, the present assumption 
	\eqref{nv2} implies that the condition \eqref{nv} of Theorem~\ref{ExEx} is satisfied. 
	Note here that $\Delta^{n_j}\cap\spn\{e_i\mid i\in I\}=\co(\{e_i\mid i\in I\}\cup\{0\})$.

	Applying Theorem~\ref{ExEx} we obtain extensions $U_j$ as required.
\end{Proof}

\begin{Proof}[of Lemma~\ref{GhatRedLem}]
	We show that the diagram 
	\[
		\xymatrix@C=60pt{
			\Delta^n \ar[r]^f \ar[d]_c & X \ar[d]^g
			\\
			\Gh\Delta^n \ar[r]_{\id\times(f\circ -)} & [0,1]\times X^A
		}
	\]
	commutes. 
	First, for $k\not\in I$, we have $((\id\times(f\circ -))\circ c)(e_k)=(g\circ f)(e_k)$ 
	by the definition of $g$. Second, consider $k\in I$. Then $(g\circ f)(e_k)=0$ since $f(e_k)=0$. 
	By \eqref{6}, also $((\id\times(f\circ -))\circ c)(e_k)=0$. 
	
	Since $\Gh f$ maps $\Gh\Delta^n$ into $\Gh X$, we have $g(X)\subseteq\Gh X$. 
	This says that $X$ indeed becomes an $\Gh$-coalgebra with structure $g$. 
	Revisiting the above diagram shows that $f$ is an $\Gh$-coalgebra morphism. 
\end{Proof}

\begin{Proof}[of Corollary~\ref{GhatRedCor}]
	Applying Lemma~\ref{GhatRedLem} repeatedly, we obtain after finitely many steps an $\Gh$-coalgebra $(\Delta^k,g)$ 
	such that no nonempty subset $I\subseteq\{1,\ldots,k\}$ with \eqref{6} exists for $(\Delta^k,g)$, and that we have 
	an $\Gh$-coalgebra morphism $f\colon (\Delta^n,c)\to(\Delta^k,g)$.
	Note here that in each application of the lemma the number of generators decreases. 
\end{Proof}



\end{document}